\newtheorem{lemx}{Lemma}
\newtheorem{obsx}{Observation}
\newif\ifusecompactlists
\renewenvironment{itemize}[1][]{\begin{compactitem}}{\end{compactitem}}
\renewenvironment{enumerate}[1][]{\begin{compactenum}}{\end{compactenum}}
\newenvironment{proofsketch}{%
  \proof}{\endproof}
\def\@copyrightspace{\relax}
\renewcommand{\epsilon}{\varepsilon}
\newcommand{\conc}{\circ}
 \newcommand{\expect}{\mathbb{E} } 
\newcommand{\OPT}{\mbox{\sc OPT}}
\newcommand{\opt}{\OPT}
\newtheorem{theorem}{Theorem}
\newtheorem{lemma}[theorem]{Lemma}
\newtheorem{property}[theorem]{Property}
\newtheorem{observation}[theorem]{Observation}
\newtheorem{example}[theorem]{Example}
\newtheorem{claim}[theorem]{Claim}
\newenvironment{proofof*}[1]{\par
  \pushQED{\qed}%
  \normalfont \topsep6\p@\@plus6\p@\relax
  \trivlist
  \item[\hskip\labelsep
        \bfseries
    #1\@addpunct{.}]\ignorespaces
}{%
  \popQED\endtrivlist\@endpefalse
}
\newenvironment{proofof}[1]{%
  \begin{proofof*}{Proof of #1}%
}{%
  \end{proofof*}\ignorespacesafterend
}
\renewcommand*{\@fnsymbol}[1]{\ensuremath{\ifcase#1\or\dagger\or
    \ddagger \or \mathsection \or \mathparagraph\else\@ctrerr\fi}}
\let\oldfootnoterule\footnoterule
\def\footnoterule{\vskip-1pt\oldfootnoterule\vskip1pt\relax}
\newcommand{\pparagraph}[1]{\vspace{0.1in}\noindent{\bf \boldmath #1.}}
\newcommand{\defn}[1]{\emph{\textbf{#1}}}
\renewcommand{\epsilon}{\varepsilon}
\author{
Michael A.~Bender\thanks{Department of Computer Science, 
Stony Brook University, 
Stony Brook, NY 11794-4400, USA. 
Email:~\texttt{\{bender,smccauley, shiksingh\}@cs.stonybrook.edu}.}
\and 
Samuel McCauley\footnotemark[1]
\and
Andrew McGregor\thanks{Department of Computer Science, 
University of Massachusetts, 
Amherst, MA 01003, USA.
Email:~\texttt{\{mcgregor,hvu\}@cs.umass.edu}.}
\and
Shikha Singh\footnotemark[1]
\and
Hoa T. Vu\footnotemark[3]
}
\date{}
\title{Run Generation Revisited:\\What Goes Up May or May Not Come Down}
\begin{document}
\maketitle


\begin{abstract}
In this paper, we revisit the classic problem of run generation.  Run
generation is the first phase of external-memory sorting, where the objective
is to scan through the data, reorder elements using a small buffer of size $M$,
and output \defn{runs} (contiguously sorted chunks of elements)  that are as
long as possible.

We develop algorithms for minimizing the total number of runs (or equivalently,
maximizing the average run length) when the runs are allowed to be sorted or
reverse sorted.  We study the problem in the online setting, both with and
without resource augmentation,  and in the offline setting. 

\begin{itemize}

\item We analyze alternating-up-down replacement selection (runs alternate
between sorted and reverse sorted),  which was studied by Knuth as far back as
1963.  We show that this simple policy is asymptotically optimal.
Specifically, we show that alternating-up-down replacement selection is
2-competitive and no deterministic online algorithm can perform better. 

\item We give online algorithms having smaller competitive ratios with resource
augmentation.  Specifically, we exhibit a deterministic algorithm that, when
given a buffer of size $4M$, is able to match or beat any optimal algorithm
having a buffer of size $M$. Furthermore, we present a randomized online
algorithm which is $7/4$-competitive when given a buffer twice that of the
optimal.

\item We demonstrate that performance can also be improved with a small amount
of foresight.  We give an algorithm, which is $3/2$-competitive, with
foreknowledge of the next $3M$ elements of the input stream.  For the extreme
case where all future elements are known, we design a PTAS for computing the
optimal strategy a run generation algorithm must follow.

\item We present algorithms tailored for ``nearly sorted'' inputs which are
guaranteed to have optimal solutions with sufficiently long runs.
 
\end{itemize}
\end{abstract}

\thispagestyle{empty} 
\newpage 
\setcounter{page}{1}

\section{Introduction}
\label{sec:intro}

External-memory sorting algorithms are 
 tailored for data sets too large to fit in main memory.
Generally,
these algorithms begin their sort by bringing chunks of data into main memory, 
sorting within
memory, and writing back out to disk in sorted sequences, called
\defn{runs}~\cite{Martinez-PalauDoLa10, Knuth98, Estivill-CastroWo94, Gassner67}.

We revisit the classic problem of how to maximize the length of these
runs, the \defn{run-generation problem}.
The run-generation problem  has been studied in its various guises for over 50 years~\cite{Martinez-PalauDoLa10, Knuth63, Lin93, LinLa97, FrazerWo72, Friend56, Gassner67,Espelid76}.

The most well-known external-memory sorting algorithm is multi-way 
merge sort~\cite{LarsonGr98, Estivill-CastroWo94, BittonDe83, ZhengLa96, Graefe06, Larson03, ZhangLa97, AggarwalVi88, TingWa77}. 
The multi-way merge sort is formalized in the 
\defn{disk-access machine}\footnote{The external-memory  model, 
also called the I/O model, applies to any two levels of the 
memory hierarchy.}
(\defn{DAM}) model
of  Aggarwal and Vitter~\cite{AggarwalVi88}.
If $M$ is the size of RAM and data is transferred between main memory and disk in blocks of size $B$, then an $M/B$-way merge sort has a complexity of 
$O\big(({N}/{B}) \log_{M/B} {({N}/{B})}\big)$ I/Os, where $N$ is the number of elements to be sorted. This is the best possible \cite{AggarwalVi88}.

A top-down description of multi-way merge sort  follows. 
Divide the input into $M/B$ subproblems, recursively sort each subproblem, and merge them together in one final scan through the input.
The base case is reached when each subproblem has size $O(M)$, and therefore fits into RAM.

A bottom-up description of the algorithm starts with the base case, which is the 
run-generation phase.
Na\"i{}vely, we can always generate runs of length $M$:
ingest $M$ elements into memory, sort them, write them to disk, and then repeat. 

The point of run generation is to produce runs \emph{longer} than $M$. 
After all, with typical values of $N$ and $M$, 
we rarely need more than 
one or two passes over the data after the initial run-generation phase.  
Longer runs can mean fewer passes over the data or less memory consumption during the merge phase of the sort. 
Because there are few scans to begin with, even if we only do one fewer scan, the cost of a merge sort is decreased by a significant percentage. 
Run generation has further advantages in databases even when a full sort is not required \cite{Graefe06,KellerGrMa91}.

\pparagraph{Replacement Selection} 
The classic algorithm for run generation is called \defn{replacement selection}
 \cite{Goetz63,Knuth98, Larson03}. 
We describe replacement selection below by assuming that the elements can be read into memory and written to disk 
\emph{one at a time}.

To create an increasing run starting from an initially full internal memory, proceed as follows:

\begin{enumerate}[noitemsep,nolistsep,leftmargin=*]
\item From main memory, select the smallest element\footnote{Observe that  data structures such as in-memory
heaps can be used to identify the smallest elements in memory.  However,  from
the perspective of minimizing I/Os, this does not matter---computation is free
in the DAM model.} at least as large as every element 
in the current run.
\item If no such element exists, then the run ends; select the smallest element in the buffer.
\item \defn{Eject} that element, and \defn{ingest} the next element, so that the memory stays full.
\end{enumerate}

Replacement selection can deal with input elements one at a time, even though the DAM model transfers input 
between RAM and disk $B$ elements at a time. 
To see why, consider two additional blocks in memory, 
an ``input block,'' which stores elements recently read  from disk, and an 
``output block,'' which 
stores elements that have already been placed in a run and will be written back
to disk.   To ingest, take an element from the input block, and to eject
an element, put the element in the output block. 
When the input block becomes empty, fill it  from disk
and when the output block fills up, flush it to disk. 
Similar to previous work, in this paper, we ignore these two blocks.

\pparagraph{Properties of Replacement Selection}
It has been known for decades that when the input appears in random order, then 
the expected length of a run is actually $2M$, not $M$~\cite{Friend56, Gassner67, Knuth63}. 
In~\cite{Knuth98}, Knuth gives memorable intuition about this result, conceptualizing the buffer as a snowplow traveling along a circular track.

Replacement selection performs particularly well on nearly sorted data
(for many intuitive notions of ``nearly''), and 
the runs generated are much larger than $M$. For example, when each element in the input appears at a distance at most $M$ from its actual rank, replacement selection produces a single run.

On the other hand, replacement selection performs poorly on reverse-sorted data.
It produces runs of length $M$, which is the worst possible. 

\pparagraph{Up-Down Replacement Selection}
From the perspective of the sorting algorithm, it 
matters little, or not at all,
whether the initially generated runs are sorted or reverse sorted.

This observation has motivated researchers to think about run generation when
the replacement-selection algorithm has a choice about whether to generate an
\defn{up run} or a \defn{down run}, each time a new run begins.

Knuth~\cite{Knuth63} analyzes the performance of replacement selection that
alternates deterministically between generating up runs and down runs. He shows
that for randomly generated data, this alternative policy performs \emph{worse},
generating runs of expected length $3M/2$, instead of $2M$.

Martinez-Palau et al.~\cite{Martinez-PalauDoLa10} revive this idea in an experimental 
study. Their two-way-replacement-selection  algorithms 
 heuristically choose
between whether the run generation should go up or down.
Their experiments find that  two-way replacement selection (1) is slightly worse
than replacement selection for random input (in accordance with
Knuth~\cite{Knuth63}) and (2)  produces 
significantly longer runs on
inputs that have mixed up-down runs and reverse-sorted inputs.

\pparagraph{Our Contributions}
The results in our present paper complement these earlier results. 
In contrast to Knuth's negative result for random  inputs~\cite{Knuth63}, 
we show that strict up-down alternation is best possible for worst-case inputs. 
Moreover, we give better competitive ratios with resource augmentation and lookahead, 
which helps explain  why heuristically choosing between up and down runs based on what is currently in memory may lead to better solutions. 
Resource augmentation is a standard tool used in
competitive analysis~\cite{EnglertWe05, ChanMeSi11,sleator1985amortized,sun2013improved,epstein2007online,chekuri2004multi}
to empower an online algorithm when comparing against an omniscient and all-powerful optimal algorithm.

Up-down run generation boils down to figuring out, each time a run ends, whether the next run should be an up run or a down run. 
The objective is to minimize the number of runs output.\footnote{Note that for a given input,
minimizing the number of runs output is equivalent to maximizing the average length of runs output.}
We establish the following: 

\smallskip

\begin{enumerate}[noitemsep,nolistsep,leftmargin=*]
\item \emph{Analysis of  alternating-up-down replacement selection.} We revisit
(online) alternating-up-down replacement selection, which was earlier analyzed
by Knuth \cite{Knuth63}.  We prove that alternating-up-down replacement selection is 2-competitive and
asymptotically optimal for deterministic algorithms. To put this result in context, it is known  that  up-only replacement selection
is a constant factor better than up-down replacement selection for random inputs,  but  can be
an unbounded factor worse than optimal for arbitrary inputs.

\item \emph{Resource augmentation with extra buffer.}
We analyze the effect of augmenting the buffer available to an online algorithm on its performance. 
We show that with a constant factor larger buffer, it is possible to perform better than twice optimal.
Specifically, we exhibit a deterministic algorithm that, when given a
buffer of size $4M$, matches or beats any optimal algorithm having a buffer of size $M$. 
We also design a randomized online algorithm which is $7/4$-competitive using a $2M$-size buffer.

\item \emph{Resource augmentation with extra visibility.} We show that performance factors 
can also be improved, without augmenting the buffer, if an algorithm has limited foreknowledge of the input.
In particular, we propose a deterministic algorithm which attains a competitive ratio of $3/2$, using its regular buffer of
size $M$, with a {\em lookahead} of $3M$ incoming elements of the input (at each step).

\item \emph{Better bounds for nearly sorted data.}
We give algorithms that perform well on inputs that have some inherent
sortedness.
We show that the greedy offline algorithm is optimal for inputs on which the optimal runs are at least $5M$ elements long.
We also give a $3/2$-competitive algorithm with $2M$-size buffer when the optimal runs are at least $3M$ long.
These results are reminiscent of previous literature studying sorting on inputs with 
``bounded disorder''~\cite{ChandramouliGo14} and adaptive sorting
algorithms~\cite{Mallows63,wiki:timsort,Estivill-CastroVlWo92}.

\item \emph{PTAS  for the offline problem.}
We give a polynomial-time approximation scheme for the offline run-generation problem. 
Specifically, our offline polynomial-time approximation
algorithm  guarantees a $(1 + \epsilon)$-approximation to the optimal solution.
We first give an algorithm with the running time of $O(2^{1/\epsilon}N\log N)$ and then improve the running time 
to $O\left( \big( \frac{1+\sqrt{5}}{2} \big)^{1/\epsilon}N\log N\right)$.

\end{enumerate}

\pparagraph{Paper Outline}
The paper is organized as follows. In \Cref{sec:up_down_run_generation}, we
formalize the up-down run generation problem and provide necessary notation.
\Cref{sec:structural_properties} contains important structural properties of
run generation and key lemmas used in analyzing our algorithms.  
Analysis of alternating-up-down replacement selection and online lower bounds are in
\Cref{sec:online}. Algorithms with resource augmentation, along with properties of the greedy algorithm,
are presented in \Cref{sec:res-aug}. 
The offline version of the problem is studied in \Cref{sec:offline}.  Improvements on
well-sorted input are presented in \Cref{sec:nearly-sorted}.
\Cref{sec:related} summarizes related work and we conclude with open problems in \Cref{sec:conclusion}. 
Due to space constraints, we defer some proofs to the appendix (\Cref{sec:omittedproofs}).


\section{Up-Down Run Generation}\label{sec:up_down_run_generation}

In this section, we formalize the up-down run generation problem and introduce notation.

\subsection{Problem Definition} 
An instance of the up-down run generation problem is a stream $I$ of $N$ elements. The
elements of $I$ are presented to the algorithm one by one, in order. They can
be stored in the memory of size $M$ available to the algorithm, which we henceforth refer to as the \defn{buffer}.  
Each element occupies one slot of the
buffer. 
In general, the model allows duplicate elements,
although some results, particularly in \Cref{sec:resource_augmentation} and \Cref{sec:nearly-sorted}, do require
uniqueness.

We say that an algorithm $A$ \defn{reads} an element of $I$ when $A$ transfers 
the element from the input sequence to the buffer.
We say that an algorithm  $A$ \defn{writes}
an element
when $A$ ejects the element from its buffer and appends it to the \defn{output sequence} $S$. 

Every time an element is written, its slot in the buffer 
becomes free. Unless stated otherwise, the next element from
the input takes up the freed slot. Thus the buffer is always full,
except when the end of the input is
reached and there are fewer than $M$ unwritten elements.\footnote{Reading in the next element of the input
when there is a free slot in the buffer never hurts the performance of any algorithm.
However, we allow the algorithm in the proof of \Cref{lem:pruning} to maintain free slots in the buffer 
to simplify the analysis.}

An algorithm can decide which element to eject from its buffer based on (a)~the current
contents of the buffer and (b)~the last element written.
The algorithm may
also use $o(M)$ additional words to maintain its internal state (for example, it can store the direction of
the current run). However, the algorithm cannot arbitrarily access $S$ or
$I$---it can only append elements to $S$, and access the next in-order element
of $I$.
We say the algorithm is at \defn{time step} $t$ if it has written exactly $t$ elements. 

A \defn{run} is a sequence of sorted or reverse-sorted elements.  
The cost of the algorithm is the smallest number of runs we can use to partition its output.  
Specifically, 
    the number of runs in an output $S$, denoted $R(S)$, is the smallest 
    number of mutually disjoint sequences $S_1,S_2,\ldots,S_{R(S)}$ such that
        each $S_i$ is a run and 
        $S = S_1 \conc \cdots \conc S_{R(S)}$ where $\conc$ indicates concatenation. 

We let $\OPT(I)$ be the minimum number of runs of any possible output sequence on input $I$,
i.e., the number of runs generated by the optimal offline algorithm.
If $I$ is clear from context, we denote this as $\OPT$.  Our goal is to
give algorithms that perform well compared to $\OPT$ for every $I$. We say that an online algorithm is \defn{$\beta$-competitive} if on any input, its output $S$ satisfies $R(S) \leq \beta \opt$. 

At any time step, an algorithm's \defn{unwritten-element sequence} is 
comprised of the contents of the
buffer, concatenated with the remaining (not yet ingested) input elements.  
For the purpose of this definition,
we assume that the elements in the buffer are stored in their arrival order
(their order in the input sequence $I$).  

Time step $t$ is a \defn{decision point} or \defn{decision time step} for an algorithm $A$ if $t=0$ or if $A$ finished writing a run at $t$.  At a decision point, $A$ needs to decide whether the next run will be increasing or decreasing.

\subsection{Notation}

We employ the following notation. 
We use $(x\nearrow y)$ to denote the increasing sequence $x, x+1, x+2, \ldots,
y$ and $(x\searrow y)$ to denote the decreasing sequence $x, x-1, x-2 \ldots,
y$.  We use $\conc$ to denote concatenation: if $A = a_1,a_2,\ldots, a_k$ and
$B = b_1,b_2,\ldots, b_{\ell}$ then $A\conc B = a_1,a_2,\ldots a_k, b_1, b_2,
\ldots, b_{\ell}$.

Let $A = a_1,a_2,\ldots,a_k$.
We use  
$A \oplus x$ to denote the sequence $a_1 + x, a_2 + x,\ldots a_k + x$. 
Similarly, we use $A \otimes x$ to denote  the sequence 
$a_1x, a_2x,\ldots, a_kx$.

Let $A,B$ be sequences. We say $A$ \defn{covers} $B$ if for all $e \in B, e \in
A$.
A \defn{subsequence} of a sequence $A = a_1, \ldots, a_k$ is a sequence $B =
a_{n_1}, a_{n_2}, \ldots, a_{n_{\ell}}$ where  $1\leq n_1 < n_2 <\ldots < n_{\ell}\leq k$.

\section{Structural Properties} 
\label{sec:structural_properties}
In this section, we identify structural properties of the problem and
the tools used in the analysis of our algorithms, which will be important in the rest of the paper.  

\subsection{Maximal Runs}

We show that in run generation, it is never a good idea to end a run early, and never a good idea to ``skip over'' an element (keeping it in buffer instead of writing it out 
as part of the current run).

To begin, we show that adding elements to an input sequence never decreases the number of runs.
Note that if $S'$ is a subsequence of $S$, then $R(S')\leq R(S)$ by definition.

\begin{lemma}
\label{lem:triangle}
Consider two input streams $I$ and $I'$.  If $I'$ is a subsequence of $I$, then 
$\OPT({I'}) \leq \OPT(I)$.
\end{lemma}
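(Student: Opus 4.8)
The plan is to take an optimal partition of the larger stream's output and show it induces a valid partition of the smaller stream's output with no more parts. First I would let $S$ be an output sequence for $I$ achieving $\OPT(I)$ runs, and then I would need to produce an output sequence $S'$ for $I'$ with at most that many runs. The natural idea is: since $I'$ is a subsequence of $I$, have the algorithm producing $S'$ ``simulate'' the optimal algorithm for $I$, simply skipping over (never reading, or immediately discarding) the elements of $I$ that are not in $I'$. Concretely, whenever the $I$-algorithm writes an element that happens to lie in $I'$, the $I'$-algorithm writes it too; whenever the $I$-algorithm writes an element not in $I'$, the $I'$-algorithm does nothing. One must be a little careful because of duplicates and because the buffer-full convention forces reads, but for the purpose of this lemma it suffices to exhibit \emph{some} output sequence on $I'$, and we are free to let the simulating algorithm leave buffer slots empty (as the footnote about \Cref{lem:pruning} already anticipates). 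Then $S'$ is exactly the subsequence of $S$ consisting of the elements of $I'$ (with multiplicity), written in the same relative order.

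Next I would argue $R(S') \le R(S)$. Since $S'$ is obtained from $S$ by deleting the occurrences of elements not in $I'$, $S'$ is a subsequence of $S$; and the excerpt already notes that if $S'$ is a subsequence of $S$ then $R(S') \le R(S)$ — a partition of $S$ into $R(S)$ runs restricts to a partition of $S'$ into at most $R(S)$ pieces, each of which is a contiguous subsequence of a run, hence still a run (a subsequence of a sorted or reverse-sorted sequence is sorted or reverse-sorted, respectively). Combining, $\OPT(I') \le R(S') \le R(S) = \OPT(I)$, which is the claim.

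The main obstacle — really the only subtlety — is making the ``simulate and skip'' argument fully rigorous under the model's conventions: the model as defined keeps the buffer full and reads the next input element into any freed slot, so a naive simulation of the $I$-algorithm on the stream $I'$ does not literally have the same buffer contents. The clean way around this is to observe that we do not need to run any online algorithm at all: $\OPT(I')$ is defined as the minimum of $R(\cdot)$ over \emph{all} output sequences reachable on $I'$, and the sequence $S'$ described above (the $I'$-elements of $S$ in order) is reachable on $I'$ by an algorithm that reads greedily and writes elements in exactly the order $S$ writes its $I'$-elements — this is feasible because at every point the set of $I'$-elements already read is a superset of what $S$ has read among them, so whatever $S$ can eject, the simulator can eject too. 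Once that feasibility is nailed down, the rest is just the subsequence observation. I would therefore structure the proof as: (1) define $S'$ as the $I'$-subsequence of $S$; (2) check $S'$ is a legal output on $I'$; (3) invoke $R(S') \le R(S)$; (4) conclude.
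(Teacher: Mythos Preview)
Your proposal is correct and matches the paper's proof essentially step for step: take an optimal algorithm $A$ on $I$, simulate it on $I'$ by executing no-ops whenever $A$ would read or write an element absent from $I'$ (allowing the buffer to be partially empty), obtain an output $S'$ that is a subsequence of $S$, and conclude via $\OPT(I') \le R(S') \le R(S) = \OPT(I)$. The paper handles the buffer-fullness subtlety exactly as in your first suggestion, simply permitting the simulating algorithm's buffer not to be full; your alternative ``reads greedily'' justification is a bit more careful but unnecessary given the paper's conventions.
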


\begin{proof} 
Let $A$ be an algorithm with input stream $I$ and output $S$.
Suppose that  $A$ produces the optimal number of runs on $I$, that is $R(S) =
\OPT(I)$. Consider an algorithm $A'$ on $I'$.  Algorithm $A'$ performs the same operations as $A$, but when it reaches an element that is not in $I'$ (but is in $I$), it executes a no-op.  These no-ops mean that the buffer of $A'$ may not be completely full, since elements that $A$ has in buffer do not exist in the buffer of $A'$.  Let $S'$ be the output of $A'$; $S'$ is a subsequence of $S$.
 
    Then $\OPT({I'}) \leq R(S') \leq R(S) = \OPT(I)$.
\end{proof}

A \defn{maximal increasing run} is a run generated using the following rules (a \defn{maximal decreasing run} is defined similarly):
\begin{enumerate}[noitemsep,nolistsep,leftmargin=*]
\item Start with the smallest element in the buffer and always write the smallest element that is larger than the last element written. 
\item End the run only when no element in the buffer can continue the run, i.e., all elements in buffer are smaller than the last element written.
\end{enumerate}

\begin{lemma} \label{lem:maximalcover}
    At any decision time step, a maximal increasing (decreasing) run $r$ covers every other (non-maximal) increasing (decreasing) run $r'$. 
\end{lemma}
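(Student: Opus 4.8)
The plan is to prove the stronger claim that at a decision time step the \emph{set} of elements written by the maximal increasing run $r$ contains the set of elements written by any other increasing run $r'$ starting at the same step; the decreasing case is symmetric. The crucial structural fact is that $r$ and $r'$ begin with the same buffer and the same remaining input, and, since exactly one input element is read each time an element is written, after each of them has performed $t$ writes both have read exactly the prefix $u_1,\dots,u_{M+t}$ of the unwritten-element sequence. Thus after $t$ writes the set of elements ``seen'' (written so far, or currently in the buffer) is the same for both runs; only the partition of that set into ``already written'' and ``still in buffer'' differs.

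For a run whose $i$-th written element is $r_i$ (so $r_1<r_2<\cdots$), let $B_t$ denote its buffer after $t$ writes and define its \emph{reach} after $t$ writes as $P_t := \{r_1,\dots,r_t\} \cup \{e \in B_t : e > r_t\}$, with the convention $r_0 := -\infty$. Since an increasing run only ever writes buffer elements larger than its last-written element, $P_t$ is exactly the set of elements the run has written or can still write. I would then prove, by induction on $t$ for every $t$ up to the length $m$ of $r'$, that: (i) $r$ has made at least $t$ writes (it has not terminated yet); (ii) $r_t \le r'_t$; and (iii) $P^r_t \supseteq P^{r'}_t$. For the inductive step, $r'_{t+1}$ lies in $\{e\in B^{r'}_t : e>r'_t\}\subseteq P^{r'}_t\subseteq P^r_t$, and since $r'_{t+1}>r'_t\ge r_t$ it is none of $r_1,\dots,r_t$, so it is a buffer element of $r$ exceeding $r_t$; hence $r$ does not terminate at step $t+1$, and as $r$ writes the smallest such element we get $r_{t+1}\le r'_{t+1}$. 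A short case analysis on an arbitrary element of $P^{r'}_{t+1}$ --- according to whether it equals $r'_{t+1}$, is an older write of $r'$, is the element just read from the input (the same for both runs), or is an older buffer element of $r'$ --- upgrades (iii) to step $t+1$, each case reducing to the fact that $r_{t+1}$ is the minimum buffer element of $r$ exceeding $r_t$.

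To conclude, take $t=m$. When $r'$ terminates, no element of its buffer exceeds $r'_m$, so $P^{r'}_m$ is precisely the set of elements written by $r'$. On the other hand, $P^r_m$ is contained in the set of elements eventually written by $r$: listing the elements of $B^r_m$ that exceed $r_m$ in increasing order, $r$ never skips one of them, because at each later step it writes the smallest available element larger than its last-written one, and therefore cannot output a larger element while a smaller eligible one is still present. Hence the set of elements written by $r'$ is contained in $P^{r'}_m\subseteq P^r_m\subseteq\{\text{elements written by }r\}$, which is exactly what ``$r$ covers $r'$'' means.

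The main obstacle is picking the right induction hypothesis. The naive ``greedy stays ahead'' invariant $r_t\le r'_t$ by itself does not yield a covering statement, so one must additionally carry the containment of reaches in (iii), certifying that everything $r'$ has written is already claimed by $r$ or still claimable from $r$'s buffer, and then separately exploit greediness to show $r$ in fact writes everything still in its reach. Minor points needing a line of care are duplicate elements together with the ``$>$'' versus ``$\ge$'' convention in the definition of a maximal run, and the end-of-input situation where the buffer is not full; neither affects the argument, since both runs undergo the same truncation, although the statement is cleanest when the input elements are distinct.
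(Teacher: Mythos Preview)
Your proof is correct and takes essentially the same approach as the paper's: a step-by-step induction carrying a set-containment invariant (your ``reach'' $P_t$ is the paper's set $C_t$ of buffer elements destined for the run, augmented by the elements already written) together with the greedy-stays-ahead inequality $r_t\le r'_t$, which the paper uses implicitly as $z'\ge z$. One cosmetic slip: a non-maximal $r'$ may stop while its buffer still contains elements exceeding $r'_m$, so $P^{r'}_m$ need not be \emph{exactly} $\{r'_1,\dots,r'_m\}$ as you assert---but only the trivial inclusion $\{r'_1,\dots,r'_m\}\subseteq P^{r'}_m$ is needed, and the argument goes through unchanged.
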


A \defn{proper algorithm} is an algorithm that always writes maximal runs. We say an output is proper if it is generated by a proper algorithm.
We show that there always exists an optimal proper algorithm. 

\begin{theorem}
    \label{thm:maximal}
    For any input $I$, there exists a proper algorithm $A$ with output $S$ such that $R(S) = \OPT(I)$.
\end{theorem}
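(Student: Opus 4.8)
\begin{proofsketch}
The plan is a run-by-run exchange argument: starting from an optimal output, I would repeatedly replace a prefix of runs by maximal runs without increasing the total number of runs. Formally I would use strong induction on $k=\OPT(I)$. The claim is trivial when $k=0$ (then $I$ is empty), so assume $k\ge 1$ and fix an algorithm $A$ whose output decomposes as $S = S_1\conc S_2\conc\cdots\conc S_k$ with $R(S)=k$; let $d_1\in\{\text{up},\text{down}\}$ be the direction of $S_1$.

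First I would alter $A$ only up to its first decision point: let $A'$ write, from the initial full buffer, the \emph{maximal} run $r_1$ in direction $d_1$, and then stop. By \Cref{lem:maximalcover}, $r_1$ covers $S_1$, so every element written in $S_1$ is also written in $r_1$. The crucial observation is that the unwritten-element sequence $U'$ of $A'$ right after $r_1$ is a subsequence of the unwritten-element sequence $U$ of $A$ right after $S_1$. To see this, note that at any decision point the elements an algorithm has read form the disjoint union of the elements it has already written and the (at most $M$) elements in its buffer, and the buffer holds exactly the next still-unwritten input elements; hence the unwritten-element sequence is precisely $I$ with the already-written elements deleted, in input order. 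Since $r_1$ covers $S_1$ (as multisets, fixing a consistent matching of equal values if $I$ has duplicates), deleting the elements of $r_1$ from $I$ yields a subsequence of what is obtained by deleting only the elements of $S_1$.

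Next I would appeal to \Cref{lem:triangle}. The unwritten-element sequence at a decision point, with its first $M$ entries viewed as the buffer contents, is exactly the canonical start configuration for that sequence regarded as a fresh input; thus continuing any algorithm past a decision point is the same as running an algorithm on this fresh input. Applying \Cref{lem:triangle} to $U'$ and $U$ gives $\OPT(U')\le\OPT(U)$. Moreover $\OPT(U)\le k-1$, since $A$, continued from that configuration, produces $S_2\conc\cdots\conc S_k$, i.e.\ $k-1$ runs. Hence $\OPT(U')\le k-1<k$.

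Finally I would close the induction: by the inductive hypothesis there is a proper algorithm $B$ on input $U'$ whose output has $\OPT(U')$ runs. Gluing $B$ onto $A'$ (write $r_1$, then behave as $B$) yields an algorithm whose output has at most $1+\OPT(U')\le k$ runs; since any output for $I$ has at least $\OPT(I)=k$ runs, it has exactly $k$, and it is proper because $r_1$ is maximal by construction and $B$ writes only maximal runs thereafter. I expect the main obstacle to be making the ``$U'$ is a subsequence of $U$'' step fully rigorous---this is precisely where \Cref{lem:maximalcover} is used, and where one must carefully track which physical elements have been written (and handle duplicates); the remaining steps are routine given \Cref{lem:triangle}.
\end{proofsketch}
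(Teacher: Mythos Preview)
Your proposal is correct and follows essentially the same approach as the paper's proof: induction on $\OPT(I)$, replacing the first run of an optimal output by the maximal run in the same direction via \Cref{lem:maximalcover}, then invoking \Cref{lem:triangle} on the resulting unwritten-element sequences to apply the inductive hypothesis. Your exposition is in fact more careful than the paper's about the subsequence step (the paper simply asserts it from \Cref{lem:maximalcover}, and even states the containment in the wrong direction, though the subsequent inequality is correct); your caution about duplicates is reasonable but not something the paper addresses explicitly here.
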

\begin{proof}
    We prove this by induction on the number of runs. If there is only one run, it must be maximal. Assume that all inputs $I_t$ with $\OPT(I_t) = t$ 
have a maximal proper algorithm.  Consider an input $I_{t+1}$ with $\OPT(I_{t+1}) = t+1$.  Assume that an optimal algorithm on $I_{t+1}$ is $A_O$, and it is not proper; we will construct a proper $A$ with the same number of runs.  The first run $A$ writes is maximal and has the same direction of the first run that $A_O$ writes; the first run $A_O$ writes may or may not be maximal.  Then $A$ is left with an unwritten-element sequence $I_A$ and $A_O$ is left with $I_O$.  Note that $\OPT(I_O) = t$ by definition.
    
    By \Cref{lem:maximalcover}, $I_O$ is a subsequence of $I_A$.  Then by \Cref{lem:triangle}, $\OPT(I_A)\leq\OPT(I_O)$.  Then by the inductive hypothesis, $I_A$ has an optimal proper algorithm.  Thus $A$ is a proper algorithm generating the optimal number of runs.
\end{proof}

In conclusion, we have established that it always makes sense for an algorithm to write maximal runs. Furthermore, we use the following property of proper algorithms throughout the rest of the paper.

\begin{property}  \label{pro:wlog}
    Any proper algorithm satisfies the following two properties:
    \begin{enumerate}[noitemsep,nolistsep]
        \item At each decision point, the elements of the buffer must have arrived while the previous run was being written. 
        \item A new element can not be included in the current run if the  element just written out is larger (smaller) and the current run is increasing (respectively, decreasing). 
    \end{enumerate}
\end{property}

\subsection{Analysis Toolbox}

We now present observations and lemmas that play an integral role
in analysing the algorithms presented in the rest of the paper.

\begin{observation}
\label{obs:remaindersequence} Consider algorithms $A_1$ and $A_2$ on input $I$.
Suppose that at time step $t_1$ algorithm $A_1$ has written out all the
elements that algorithm $A_2$ already wrote out by some previous time step
$t_2$.  Then, the unwritten-element sequence of algorithm $A_1$ at time step
$t_1$ forms a subsequence of  the unwritten-element sequence of algorithm $A_2$
at time step $t_2$.  
\end{observation}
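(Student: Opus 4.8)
The plan is to reduce the statement to one clean fact: for \emph{any} algorithm $A$ on input $I$ and \emph{any} time step $t$, the unwritten-element sequence of $A$ at time $t$ is precisely the subsequence of $I$ obtained by deleting all elements that $A$ has written so far. Throughout I identify each element with its position in $I$, so that ``the elements written so far'' is a well-defined subset of positions even when $I$ has duplicates; this matches the definition of the unwritten-element sequence, which already records buffer contents in arrival order.

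To prove that fact, fix $A$ and $t$. Since $A$ reads input elements strictly in order, the set of elements $A$ has read by time $t$ is a prefix $I_{\le k}$ of $I$ for some $k$. Every element $A$ has written was ejected from the buffer and hence had been read, so the set $W$ of written elements satisfies $W \subseteq I_{\le k}$; the buffer then contains exactly the elements of $I_{\le k}\setminus W$, stored in arrival order, i.e.\ in their relative order within $I$. The not-yet-read part of the input is the suffix $I_{>k}$, and every buffer element precedes every element of $I_{>k}$ in $I$. Therefore the concatenation of (buffer, in arrival order) with $I_{>k}$ is exactly $I$ with the elements of $W$ removed, keeping the input order; that is, the unwritten-element sequence of $A$ at time $t$ equals $I$ with $W$ deleted. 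This argument never used that the buffer is full, so it also covers algorithms that maintain free slots and the end-of-stream case.

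Now apply this to the two algorithms. Let $W_1$ be the set of elements $A_1$ has written by $t_1$ and $W_2$ the set $A_2$ has written by $t_2$; the hypothesis says $W_2 \subseteq W_1$. By the fact above, the unwritten-element sequence of $A_2$ at $t_2$ is $I$ with $W_2$ deleted, and that of $A_1$ at $t_1$ is $I$ with $W_1$ deleted. Deleting $W_1$ from $I$ can be done by first deleting $W_2$ and then deleting $W_1\setminus W_2$, so $A_1$'s unwritten sequence is obtained from $A_2$'s by deleting the extra elements $W_1\setminus W_2$. Since removing elements from a sequence yields a subsequence, $A_1$'s unwritten-element sequence at $t_1$ is a subsequence of $A_2$'s at $t_2$, which is the claim.

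I expect the only real work to be in the second paragraph: making precise that the unwritten-element sequence coincides with ``$I$ minus the written elements, in input order,'' in particular justifying that the read elements always form a prefix of $I$ and that concatenating the reordered buffer with the remaining input collapses back to a single subsequence of $I$. Everything after that is immediate set manipulation together with the trivial observation (noted just before the statement) that deleting elements from a sequence can only decrease or preserve its run count and, in particular, produces a subsequence.
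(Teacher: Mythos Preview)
The paper states this as an observation without proof, so there is no ``paper's own proof'' to compare against; your proposal is a correct and careful formalization of why the observation holds. The key step---that the unwritten-element sequence of any algorithm at any time is exactly $I$ with the already-written positions deleted, in input order---is exactly the right way to make the claim precise, and once that is in hand the conclusion is immediate from $W_2 \subseteq W_1$.

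One minor remark: your closing paragraph drifts into commentary about run counts (``deleting elements from a sequence can only decrease or preserve its run count''), which is a separate fact used elsewhere in the paper (before \Cref{lem:triangle}) and not needed here. The present observation is purely about the subsequence relation between unwritten-element sequences; run counts do not enter. You can drop that sentence without loss.
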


\begin{lemma} 
    \label{lem:remaindersequence2} 
    Consider a proper algorithm $A$.
At some decision time step, $A$ can write $k$ runs $p_1 \conc \cdots \conc p_k$
or $\ell$ runs $q_1 \conc \cdots \conc q_\ell$ such that $|p_1 \conc \cdots
\conc p_k|  \geq |q_1 \conc \cdots\conc q_\ell|$. Then $p_1 \conc \cdots \conc
p_k \conc p_{k+1}$, where $p_{k+1}$  is either an up or down run, covers $q_1
\conc \cdots \conc q_\ell$. 

Therefore, the unwritten-element sequence after $A$ writes $p_{k+1}$ (if $A$
writes $p_1\conc\cdots\conc p_{k+1}$) is a subsequence of the
unwritten-element sequence after $A$ writes $q_{\ell}$ (if $A$ writes
$q_1\conc\cdots\conc q_{\ell}$).  \end{lemma}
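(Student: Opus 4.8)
The plan is to prove the covering claim first, and then observe that the subsequence claim follows from it via \Cref{obs:remaindersequence}. So the core work is: given that at some decision time step a proper algorithm $A$ can write the $k$ runs $p_1 \conc \cdots \conc p_k$ or the $\ell$ runs $q_1 \conc \cdots \conc q_\ell$, with $|p_1 \conc \cdots \conc p_k| \geq |q_1 \conc \cdots \conc q_\ell|$, show that $p_1 \conc \cdots \conc p_k \conc p_{k+1}$ covers $q_1 \conc \cdots \conc q_\ell$ for an appropriate choice of direction of $p_{k+1}$. The natural strategy is induction on $k$ (or on the number of runs written). The key enabling fact is \Cref{lem:maximalcover}: a maximal run in a given direction covers every run in that direction that could be written at the same decision point, so comparing $A$'s maximal runs against hypothetical alternative runs is exactly what that lemma licenses.

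**Main induction.** First I would set up a cleaner inductive statement: if proper algorithm $A$ at a decision time step can write $p_1 \conc \cdots \conc p_i$ and also (hypothetically) $q_1 \conc \cdots \conc q_j$ with $|p_1 \conc \cdots \conc p_i| \geq |q_1 \conc \cdots \conc q_j|$, then $p_1 \conc \cdots \conc p_i$ covers $q_1 \conc \cdots \conc q_j$. The base case $i = j = 0$ is trivial (or $i=1$ against $j=1$ using \Cref{lem:maximalcover} directly, since a single maximal run of the right direction covers the alternative single run). For the inductive step, I would use \Cref{obs:remaindersequence}: since $|p_1 \conc \cdots \conc p_i| \geq |q_1 \conc \cdots \conc q_j|$, there is a prefix relationship among the written elements — in particular after writing $p_1 \conc \cdots \conc p_{i-1}$ versus $q_1 \conc \cdots \conc q_{j'}$ for the largest $j'$ with $|q_1 \conc \cdots \conc q_{j'}| \le |p_1 \conc \cdots \conc p_{i-1}|$, the unwritten-element sequences are nested. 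Then $p_i$ is a maximal run written from a buffer that contains (as a subsequence) everything in $q_{j'+1} \conc \cdots \conc q_j$'s buffer, so by \Cref{lem:maximalcover} and monotonicity of maximal-run coverage under taking supersequences of the buffer, $p_1 \conc \cdots \conc p_i$ covers all of $q_1 \conc \cdots \conc q_j$ provided the lengths compare as assumed. The bookkeeping is matching which $p$-runs "absorb" which $q$-runs based on cumulative lengths.

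**Adding $p_{k+1}$ to fix the length gap.** The statement as written does not assume $|p_1 \conc \cdots \conc p_k| = |q_1 \conc \cdots \conc q_\ell|$, only $\geq$; but there is a subtlety — even though the written prefix is longer on the $p$ side, the $q$ side might momentarily have written elements that the $p$ side has not yet written (since runs of different direction interleave the same underlying elements differently). That is exactly why we need one more run $p_{k+1}$: after writing $p_1 \conc \cdots \conc p_k$, algorithm $A$'s unwritten-element sequence might still contain a few elements of $q_1 \conc \cdots \conc q_\ell$ that have not been covered yet (these are elements that are "small/large for the wrong direction" relative to where the $p$-side ended). Writing one more maximal run $p_{k+1}$ — in whichever direction captures those leftover elements — sweeps them up. Concretely, I would argue: let $X$ be the set of elements of $q_1 \conc \cdots \conc q_\ell$ not covered by $p_1 \conc \cdots \conc p_k$; by the length inequality $|X|$ is bounded and, more importantly, all elements of $X$ currently sit in $A$'s buffer or remaining input, and they all lie on one side (all consistent with an up run, or all consistent with a down run, from the state after $p_k$), so a single maximal run $p_{k+1}$ in that direction covers them. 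The last sentence of the lemma — the subsequence conclusion — is then immediate from \Cref{obs:remaindersequence}: once $A$'s written-out set (after $p_1 \conc \cdots \conc p_{k+1}$) contains $A$'s written-out set after $q_1 \conc \cdots \conc q_\ell$, the unwritten remainder on the $p$-side is a subsequence of the unwritten remainder on the $q$-side.

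**Expected main obstacle.** The delicate point is precisely the claim that the uncovered remainder $X$ after $p_1 \conc \cdots \conc p_k$ is "one-sided" — i.e., can all be collected by a single additional maximal run rather than needing runs in both directions. Establishing this requires carefully tracking, via \Cref{pro:wlog} (elements in the buffer at a decision point arrived while the previous run was being written) and the maximality of the $p$-runs, that any element of $q_1 \conc \cdots \conc q_\ell$ not yet written by $A$ must be smaller than the last element $A$ wrote (if $p_k$ was increasing) — or symmetrically larger — which forces all of $X$ onto the decreasing side. I would handle this by a case analysis on the direction of $p_k$ and an appeal to \Cref{lem:maximalcover} at each decision point along the way, combined with the cumulative-length matching between the $p$-prefix and $q$-prefix. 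Getting the interleaving argument exactly right — that "longer written prefix" does not immediately imply "covers", but "longer written prefix plus one more run" does — is the crux.
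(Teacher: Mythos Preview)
Your inductive hypothesis in the second paragraph is false: $|p_1 \conc \cdots \conc p_i| \geq |q_1 \conc \cdots \conc q_j|$ does \emph{not} imply that $p_1 \conc \cdots \conc p_i$ covers $q_1 \conc \cdots \conc q_j$. For a concrete counterexample with $i=j=1$, take $M=3$, initial buffer $\{10,20,30\}$, and subsequent input $40,5,50,4,60,3,\ldots$. The maximal up run is $p_1=(10,20,30,40,50,60)$ and the maximal down run is $q_1=(30,20,10,5,4)$; here $|p_1|>|q_1|$ yet $p_1$ misses the elements $5$ and $4$. This is precisely why the extra run $p_{k+1}$ is needed in the statement, so an induction built on that hypothesis cannot go through.

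The actual argument is a two-line observation that you have overlooked. Both $p_1\conc\cdots\conc p_k$ and $q_1\conc\cdots\conc q_\ell$ begin at the same decision point, hence with the same buffer contents and the same position in the input stream; since a proper algorithm reads one new element for each element it writes, the $p$-side, having written at least as many elements, has read at least as far into the input as the $q$-side. Consequently every element of $q_1\conc\cdots\conc q_\ell$ has already been \emph{read} on the $p$-side when $p_k$ ends, so any such element not yet written is sitting in the buffer---not in the remaining input, contrary to what you allow for. Now use the fact (this is \Cref{pro:wlog}, part~1) that any maximal run, in either direction, writes out its entire starting buffer. Hence $p_{k+1}$ covers all the leftover elements regardless of its direction, and your ``one-sidedness'' obstacle simply never arises: no direction-matching or case analysis is needed.
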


\begin{proof} Since $|p_1 \conc \cdots \conc p_k|  \geq |q_1 \conc \cdots\conc
q_\ell|$, the set of elements that are in $q_1 \conc \cdots\conc q_\ell$ but
not in $p_1 \conc \cdots \conc p_k$ have to be in the buffer when $p_k$ ends.
By \Cref{pro:wlog}, $p_{k+1}$ will write all such elements.  \end{proof}

The next theorem serves as a template for analyzing the algorithms in this paper.
It helps us restrict our attention to comparing the output of our algorithm against that of the optimal
in small {\em partitions}. 
We show that if in every partition $i$, an algorithm writes $x_i$ runs that cover the first $y_i$
runs of an optimal output (on the current unwritten-element sequence),
and $x_i/y_i \leq \beta$,
then the algorithm outputs no more than $\beta \opt$ runs.

\begin{theorem}
\label{thm:template}
Let $A$ be an algorithm with output $S$.  Partition $S$ into 
$k$ contiguous subsequences $S_1, S_2\ldots S_k$. Let $x_i$ be the number of runs in $S_i$.
For $1 < i \leq k$, let $I_i$ be the unwritten-element sequence after $A$
outputs $S_{i-1}$; let $I_1 = I$ and $I_{k+1} = \emptyset$.  
Let $\alpha,\beta \geq 1$. For each $I_i$,
let $S'_i$ be the output of an optimal algorithm on $I_i$. 

If for all $i \leq k$,  
$S_i$ covers the first $y_i$ runs of $S'_i$, and
$x_i/y_i \leq \beta$,
then $R(S)\leq \beta \OPT$.
Similarly, 
if for all $i \leq k$,  
$S_i$ covers the first $y_i$ runs of $S'_i$, and
$\expect [x_i]/y_i \leq \alpha$,
then $\expect [R(S)]\leq \alpha \OPT$.
\end{theorem}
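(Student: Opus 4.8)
The plan is to prove the statement by induction on the number of partitions $k$, chaining together the per-partition covering guarantees so that the total number of runs $A$ produces is bounded against $\OPT(I)$. The key bookkeeping object is $\OPT(I_i)$, the optimal number of runs on the unwritten-element sequence entering partition $i$. Since $I_1 = I$, we want to show $R(S) = \sum_{i=1}^k x_i \le \beta \cdot \OPT(I_1)$, and it suffices to show the stronger telescoping claim that after $A$ finishes partition $i$, the optimal cost of what remains has dropped by at least $(\sum_{j\le i} x_j)/\beta$ relative to $\OPT(I)$; more cleanly, I would show by downward induction on $i$ that $\sum_{j=i}^k x_j \le \beta \cdot \OPT(I_i)$, which at $i=1$ gives the theorem and at $i=k+1$ is the vacuous $0 \le 0$.

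First I would fix $i$ and relate $\OPT(I_i)$ to $\OPT(I_{i+1})$. By hypothesis, $S_i$ covers the first $y_i$ runs of $S'_i$, the optimal output on $I_i$. Let $Q_i = S'_{i,1} \conc \cdots \conc S'_{i,y_i}$ be that prefix of $y_i$ runs of $S'_i$, and let $J$ be the unwritten-element sequence after the optimal algorithm on $I_i$ has written exactly $Q_i$; by optimality and Theorem~\ref{thm:maximal} we may take the optimal algorithm on $I_i$ to be proper, so $\OPT(J) = \OPT(I_i) - y_i$. Now I want to compare $I_{i+1}$ (what remains after $A$ writes all of $S_i$) with $J$. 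Since $S_i$ covers $Q_i$, algorithm $A$ at the end of partition $i$ has written out every element that the optimal-on-$I_i$ algorithm wrote in producing $Q_i$; by Observation~\ref{obs:remaindersequence}, $I_{i+1}$ is a subsequence of $J$, and hence by Lemma~\ref{lem:triangle}, $\OPT(I_{i+1}) \le \OPT(J) = \OPT(I_i) - y_i$. Rearranging, $y_i \le \OPT(I_i) - \OPT(I_{i+1})$.

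Now the inductive step is immediate: assuming $\sum_{j=i+1}^k x_j \le \beta \cdot \OPT(I_{i+1})$, we get
\[
\sum_{j=i}^k x_j = x_i + \sum_{j=i+1}^k x_j \le \beta y_i + \beta \cdot \OPT(I_{i+1}) \le \beta\bigl(\OPT(I_i) - \OPT(I_{i+1})\bigr) + \beta \cdot \OPT(I_{i+1}) = \beta \cdot \OPT(I_i),
\]
using $x_i \le \beta y_i$ in the first inequality and the bound on $y_i$ in the second. At $i=1$ this yields $R(S) = \sum_{j=1}^k x_j \le \beta \cdot \OPT(I_1) = \beta \cdot \OPT$, which is the deterministic claim. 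For the randomized claim, I would run the same argument but in expectation: taking expectations and using $\expect[x_i] \le \alpha y_i$ together with the deterministic bound $y_i \le \OPT(I_i) - \OPT(I_{i+1})$ (which holds per realization — note the $y_i$ here are fixed, not random, once we compare against the per-partition optimal), the same telescoping gives $\expect[R(S)] \le \alpha \cdot \OPT$.

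The main obstacle — and the step that needs the most care — is the comparison between $I_{i+1}$ and $J$: we must be sure that "covering the first $y_i$ runs of $S'_i$" really does let us invoke Observation~\ref{obs:remaindersequence} with $A$ playing the role of $A_1$ and the optimal-on-$I_i$ algorithm playing $A_2$. This is where the hypothesis that $S_i$ \emph{covers} $Q_i$ (as a set of elements, per the paper's definition of "covers") rather than merely reproducing it is essential, and where we lean on Theorem~\ref{thm:maximal} to assume the optimal comparison algorithm on each $I_i$ is proper so that removing its first $y_i$ runs leaves a suffix with optimal cost exactly $\OPT(I_i) - y_i$. Everything else is routine telescoping.
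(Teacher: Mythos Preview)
Your proposal is correct and follows essentially the same approach as the paper: both arguments use the covering hypothesis together with Observation~\ref{obs:remaindersequence} and Lemma~\ref{lem:triangle} to obtain the key inequality $\OPT(I_{i+1}) \le \OPT(I_i) - y_i$, and then telescope. The only cosmetic difference is that the paper runs a forward induction to establish $\sum_i y_i \le \OPT$ and then multiplies through by $\beta$, whereas you run a backward induction bounding $\sum_{j\ge i} x_j \le \beta\,\OPT(I_i)$ directly; these are equivalent reorganizations of the same telescoping sum.
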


\begin{proof}
    Consider $I'_i$, the unwritten element sequence at the end of the first $y$
runs of $S'_{i-1}$ (we let $I'_1 = I$).  
    We show that $\OPT(I_i) \leq \OPT - \sum_{j=1}^{i-1}y_i$ for all $1 \leq i \leq
k$ using induction.
    Note that $\OPT(I_1) = \OPT$ (the base case).  Induction hypothesis: assume $\OPT(I_i) \leq \OPT - \sum_{j=1}^{i-1}y_i$.
    Since $S_{i+1}$ covers the first $y$ runs of $S_{i+1}'$, by \Cref{obs:remaindersequence}, $I_{i+1}$ is a
subsequence of $I'_{i+1}$. Then by \Cref{lem:triangle}, $\OPT(I_{i+1})\leq
\OPT(I'_{i+1})$.  By definition, for $i > 1$, 
\[\OPT(I'_{i+1}) = \OPT(I_{i}) - y_i \leq \OPT - \sum_{j=1}^i y_i \ .\]  
Therefore, $\OPT(I_{i+1}) \leq  \OPT - \sum_{j=1}^i y_i $. When $i = k$, we have $\OPT(I_{k+1}) 
 \leq \OPT -  \sum_{j=1}^k y_i$. But since $I_{k+1}$
contains no elements, $\OPT(I_{k+1}) = 0$, and we have $ \sum_{j=1}^k y_i \leq \OPT$.  
Since $R(S)=  \sum_{j=1}^k x_i$, and $\sum_{i=1}^k x_i \leq \beta \sum_{i=1}^k y_i$,
we have the following:\\
\[
R(S) = \frac{\sum_{i=1}^k x_i} {\opt} \cdot \opt \leq \frac{\sum_{i=1}^k x_i} {\sum_{i=1}^k y_i} \cdot \opt \leq \beta \opt.
\]

We also have the same in expectation, that is,
\[
\expect [R(S)] = \expect[\sum_{i=1}^n x_i ] \leq \alpha \sum_{i=1}^n y_i \leq \alpha \cdot \opt. 
\]
\end{proof}

\section{Up-Down Replacement Selection}
\label{sec:online}

We begin by analyzing the {\em alternating up-down replacement selection}, which deterministically alternates between writing (maximal)
up and down runs. Knuth \cite{Knuth63} showed that when the input elements arrive in a random order (all
permutations of the input are equally likely), alternating-up-down replacement selection
performs worse than standard replacement selection (all up runs). Specifically, he showed
that the expected length of runs generated by up-down-replacement selection is $1.5M$ on random input,
compared to the expected length of $2M$ of replacement selection.

In this section, we show that for deterministic online algorithms, alternating-up-down replacement selection is, in fact,  
asymptotically optimal for {\em any} input. It generates at most twice the optimal number of runs in the worst case. 
This is the best possible---no deterministic algorithm can have a better competitive ratio.


\subsection{Alternating-Up-Down Replacement Selection is 2-competitive}
\label{sec:two-approx}


We begin by giving a structural lemma, analyzing identical runs on two inputs
in which one input is a subsequence of the other.

\begin{lemma} 
    \label{lem:subsequence}
Consider two inputs $I_1$ and $I_2$, where $I_2$ is a subsequence of $I_1$. 
Let $S_1$ and $S_2$ be proper outputs of $I_1$ and $I_2$ such that:
\begin{enumerate}[noitemsep,nolistsep]
    \item $S_1$ and $S_2$ have initial runs $r_1$ and $r_2$ respectively,
    \item $r_1$ and $r_2$ have the same direction 
\end{enumerate}
Let the unwritten-element sequence after $r_1$ and $r_2$ 
be $I_1'$ and $I_2'$ respectively. 
Then 
$I_2'$ is a subsequence of $I_1'$.
\end{lemma}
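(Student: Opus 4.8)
The plan is to prove this by strong induction on the number of runs written on $I_1$ after the first run $r_1$, or equivalently, to directly compare the two executions run-by-run. First I would set up the key observation: since both $S_1$ and $S_2$ are proper, the first runs $r_1$ (on $I_1$) and $r_2$ (on $I_2$) are maximal runs in the same direction. Because $I_2$ is a subsequence of $I_1$, every element available to the algorithm producing $S_2$ is also available (at a ``no later'' position, in the sense of \Cref{obs:remaindersequence}) to the algorithm producing $S_1$. I would argue that a maximal run in a fixed direction, starting from the smallest element, ``covers'' the corresponding maximal run on any subsequence — this is essentially \Cref{lem:maximalcover} combined with \Cref{lem:triangle}'s proof technique: run $r_2$ consists only of elements that appear in $I_1$, so $r_1$ (being maximal in the same direction) must contain all of them. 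Hence $r_1$ covers $r_2$.

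Once $r_1$ covers $r_2$, the elements of $I_2'$ (the unwritten-element sequence after $r_2$) are exactly the elements of $I_2$ minus those in $r_2$; the elements of $I_1'$ are those of $I_1$ minus those in $r_1$. Since $I_2 \subseteq I_1$ as multisets-with-order (subsequence), $r_2 \subseteq r_1$, and moreover the relative order of surviving elements is preserved, I would conclude $I_2'$ is a subsequence of $I_1'$. The cleanest way to phrase this: the algorithm on $I_1$, after writing $r_1$, has written out a superset (as a sequence-covering) of what the algorithm on $I_2$ wrote out after $r_2$ — so I can apply \Cref{obs:remaindersequence} almost verbatim (treating the two executions as two algorithms on the common ``universe'' $I_1$, where the $I_2$-execution simply no-ops on elements absent from $I_2$, exactly as in the proof of \Cref{lem:triangle}).

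The main obstacle, and the step I would spend the most care on, is making the ``subsequence'' bookkeeping rigorous when there are duplicate elements — the problem statement explicitly allows duplicates, so ``$r_1$ covers $r_2$'' in the set sense is not automatically enough to conclude a subsequence relationship on the leftovers. I would handle this by working with element \emph{occurrences} (positions in $I_1$) rather than element \emph{values}: fix the injection witnessing that $I_2$ is a subsequence of $I_1$, track which positions of $I_1$ are consumed by $r_1$ versus which images-of-$I_2$-positions are consumed by $r_2$, and show the former set contains the latter. This reduces to the claim that for proper (maximal) runs in a common direction, the run on the larger input consumes at least the occurrences that the run on the subsequence consumes — which follows because at each step the $I_1$-execution sees a superset of candidates and maximal-run greedy selection is monotone. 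With that occurrence-level containment established, restricting to the complementary positions gives the subsequence claim for $I_1'$ versus $I_2'$ directly, with order inherited from $I_1$.

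Equivalently — and this may be the slickest packaging — I would just invoke \Cref{obs:remaindersequence} with $A_1$ = the proper execution on $I_1$ at the time step right after it finishes $r_1$, and $A_2$ = the proper execution on $I_2$ at the time step right after it finishes $r_2$, after first checking its hypothesis: that $A_1$ has written out all elements $A_2$ has written out. That hypothesis is precisely ``$r_1$ covers $r_2$,'' which I establish via \Cref{lem:maximalcover} (the maximal run on $I_1$ covers the non-maximal-on-$I_1$ run consisting of the same elements as $r_2$) together with the fact that $r_2$'s elements all lie in $I_1$. Then \Cref{obs:remaindersequence} yields exactly that $I_2'$ is a subsequence of $I_1'$, completing the proof.
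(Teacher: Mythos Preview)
Your core covering claim is in the wrong direction, and this is a genuine gap. You argue that $r_1$ (the maximal run on the larger input $I_1$) covers $r_2$ (the maximal run on the subsequence $I_2$), and then try to deduce $I_2' \subseteq I_1'$ from $I_2 \subseteq I_1$ and $r_2 \subseteq r_1$. But set-theoretically $A \subseteq B$ and $C \subseteq D$ do \emph{not} give $A \setminus C \subseteq B \setminus D$; an element of $I_2$ skipped by $r_2$ might still be written by $r_1$, and then it lies in $I_2'$ but not in $I_1'$. Concretely, take $M=2$, $I_1 = (3,1,5,2)$, $I_2 = (3,1,2)$: the maximal up runs are $r_1 = (1,3,5)$ and $r_2 = (1,2,3)$, so $r_1$ does not even cover $r_2$ (the element $2$ is in $r_2$ but not in $r_1$). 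Your invocation of \Cref{obs:remaindersequence} at the end has the same reversal: that observation, applied with $A_1$ having written a superset of what $A_2$ wrote, yields that $A_1$'s unwritten sequence is a subsequence of $A_2$'s, which is the opposite of what you want.

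The paper's proof runs the argument the other way. It restricts $r_1$ to the elements that also lie in $I_2$, obtaining a (possibly non-maximal) increasing run $r_2'$ on $I_2$ via the mirroring/no-op trick from \Cref{lem:triangle}. Then \Cref{lem:maximalcover} is applied on $I_2$, not on $I_1$: since $r_2$ is the maximal up run on $I_2$, it covers $r_2'$. This gives $r_1 \cap I_2 \subseteq r_2$, hence every element of $I_2$ not written by $r_2$ is also absent from $r_1$, and therefore survives in $I_1' = I_1 \setminus r_1$. That is exactly $I_2' \subseteq I_1'$ (with order inherited). So the fix is simple but essential: show that the run on the \emph{smaller} input covers the $I_2$-restriction of the run on the larger input, not the other way around.
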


\begin{proof} 
    Assume that $r_1$ and $r_2$ are up runs (a similar analysis works for down
runs).  Let $r_2'$ be a run that is a subsequence of $r_1$, consisting of all
elements of $r_1$ that are also in $I_2$.  Then $r_2'$ can be produced by an
algorithm $A'$ that mirrors the algorithm $A$ that generates $r_1$.  When $A$
reads or writes an element in $I_2$, $A'$ reads or writes that element; when
$A$ reads or writes an element not in $I_2$, $A'$ does nothing. 
    Since $r_2$ is maximal, it covers $r_2'$ by \Cref{lem:maximalcover}.
\end{proof}

\begin{theorem} 
\label{thm:2competitive}
    Alternating up-down replacement selection is 2-competitive.
\end{theorem}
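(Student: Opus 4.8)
The plan is to run \emph{alternating up-down replacement selection} (call it $A$) side by side with a fixed optimal proper algorithm on $I$ (call it $A_O$; one exists by \Cref{thm:maximal}), and charge two runs of $A$ against each run of $A_O$. Write $J_A^{(j)}$ for the unwritten-element sequence of $A$ after it has written $j$ runs, and $J_O^{(k)}$ for that of $A_O$ after $k$ runs, so $J_A^{(0)}=J_O^{(0)}=I$ and $J_O^{(\OPT)}=\emptyset$. The heart of the proof is the invariant
\[
J_A^{(2k)} \text{ is a subsequence of } J_O^{(k)} \qquad\text{for every } 0\le k\le \OPT .
\]
Given this, $J_A^{(2\OPT)}$ is a subsequence of the empty sequence, so $A$ has written every element of $I$ within $2\OPT$ runs, and hence $R(S)\le 2\OPT$, which is what we want (\OPT\ abbreviating $\OPT(I)$).

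I would prove the invariant by induction on $k$, with $k=0$ immediate. For the step, $A$ writes runs $2k+1$ and $2k+2$ --- one increasing and one decreasing, in the order fixed by $A$'s alternation --- while $A_O$ writes its $(k{+}1)$-st run $q$, which is the \emph{maximal} run of its direction on $J_O^{(k)}$ since $A_O$ is proper. Two cases. (i) If $q$ has the same direction as run $2k+1$ of $A$: by the inductive hypothesis $J_A^{(2k)}$ is a subsequence of $J_O^{(k)}$, and run $2k+1$ and $q$ are maximal runs of the same direction, so \Cref{lem:subsequence} gives that $J_A^{(2k+1)}$ is a subsequence of $J_O^{(k+1)}$; since $J_A^{(2k+2)}$ is a subsequence of $J_A^{(2k+1)}$ (writing one more run only deletes elements --- formally \Cref{obs:remaindersequence} with $A_1=A_2=A$), transitivity of ``subsequence of'' finishes this case. (ii) If $q$ has the same direction as run $2k+2$ of $A$: observe $J_A^{(2k+1)}$ is a subsequence of $J_A^{(2k)}$, hence of $J_O^{(k)}$; now run $2k+2$ and $q$ are maximal runs of the same direction, so applying \Cref{lem:subsequence} to $J_A^{(2k+1)}$ and $J_O^{(k)}$ shows $J_A^{(2k+2)}$ is a subsequence of $J_O^{(k+1)}$. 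Either way the invariant is restored (and if $A$ has already emptied its unwritten sequence, the relevant $J_A^{(\cdot)}$ is empty, trivially a subsequence of anything, so nothing breaks).

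The step I expect to be the crux is case (ii): $A$ ``wastes'' the first run of its pair going opposite to $q$ before writing the matching run, and one cannot hope that run $2k+2$ alone covers $q$ since that run starts from a buffer $A$ has already disturbed. The resolution is the simple-but-essential point that passing from $J_A^{(2k)}$ to $J_A^{(2k+1)}$ only shrinks the unwritten sequence to a subsequence, and \Cref{lem:subsequence} is monotone in exactly that respect --- so the wrong-direction run is harmless. This reduces the whole theorem to \Cref{lem:subsequence} (together with \Cref{thm:maximal} and bookkeeping via \Cref{obs:remaindersequence}). The same argument can alternatively be packaged through \Cref{thm:template}, partitioning $S$ into consecutive pairs of runs so that $x_i=2$, $y_i=1$, with the covering requirement ``$S_i$ covers the first run of the optimal output on $I_i$'' verified by precisely the case analysis above; I would also check the degenerate boundaries (input shorter than $M$, $A$ terminating on an odd-indexed run, tiny $\OPT$), each of which only makes $A$ finish sooner.
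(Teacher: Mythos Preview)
Your proposal is correct and essentially identical to the paper's proof. Both hinge on the same case split (does the optimal's next run match the first or the second of the alternating pair?) and both discharge the ``wrong'' case via \Cref{lem:subsequence} after observing that writing an extra run only shrinks the unwritten-element sequence to a subsequence; the paper packages this through \Cref{thm:template} with $\beta=2$, which you already flag as an equivalent formulation of your direct induction on the invariant $J_A^{(2k)}\subseteq J_O^{(k)}$.
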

\begin{proof}
    We show that we can apply \Cref{thm:template} to this algorithm with $\beta = 2$. 
    
    In any partition that is not the last one of the output, the alternating algorithm writes a maximal up
run $r_u$ and then writes a maximal down run $r_d$.  We must show that $r_u\conc r_d$ covers any
run $r_O$ written by a proper optimal algorithm on $I_r$, the unwritten element
sequence at the beginning of the partition. 

    If $r_O$ is an up run, then $r_O = r_u$ and thus is covered by $r_u \conc
r_d$. If $r_O$ is a down run, consider $I'$, the unwritten-element sequence after
$r_u$ is written; $I'$ is a subsequence of $I_r$.  By
\Cref{lem:subsequence} (with $I_1 = I_r$ and $I_2 = I'$), $r_u\conc r_d$ covers
$r_O$.

	In the last partition, the algorithm can write at most two runs while any optimal output must contain at least one run. Hence $x_i/y_i \leq 2$ in all partitions as required.
\end{proof}

\subsection{Lower Bounds on Online Algorithms for Up-Down Run Generation}
\label{sec:lowerbounds}

Now, we show that no deterministic online algorithm can hope to perform better than alternating-up-down replacement selection. 
Then, we partially answer the question of whether randomization helps overcome this impossibility result. 
Specifically, we show that no randomized algorithm
can achieve a competitive ratio better than $3/2$. We provide the main ideas of
the proofs here and defer the details to \Cref{sec:omittedproofs}. 

\begin{theorem}
    \label{thm:2lower}
Let $A$ be any online deterministic algorithm with output $S_I$ on input $I$. Then there are arbitrarily long $I$ such that
$R(S_I) \geq 2 \OPT(I)$. 
\end{theorem}

\begin{proofsketch}
Given any $M$ elements in the buffer, every time $A$ commits to a run direction
(up/down), the adversary sets the incoming elements such that they do not help the current run. 
Thus, $A$ is forced to have runs of length at most $M$ while $\OPT$ (since it has knowledge of the future) can do better. 
\end{proofsketch}

We also give a lower bound for randomized algorithms using similar ideas; however, in this case we do not have a matching upper bound.
We use Yao's minimax principle to prove this bound.
That is, we generate a randomized input and show that 
any deterministic algorithm cannot perform better than $3/2$ times $\OPT$ on that input against an oblivious adversary.  

\begin{theorem}\label{thm:randlower} 
    Let $A$ be any online, randomized algorithm.  
    Then there are arbitrarily long input sequences such that $\expect [R(S_I)] \geq (3/2)\OPT(I)$.
\end{theorem}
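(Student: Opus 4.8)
The plan is to invoke Yao's minimax principle: I construct a distribution over inputs such that every deterministic online algorithm, in expectation over that distribution, produces at least $(3/2)\OPT$ runs. Since a randomized algorithm is a distribution over deterministic algorithms, this lower-bounds its expected cost on the worst input.

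The distribution should be built from repeated independent ``blocks,'' each of which forces a decision on the algorithm before revealing whether that decision was good. Concretely, I would start with a buffer state that is ``neutral'' — say, containing elements that look equally promising for an up run or a down run — and then, after the algorithm irrevocably commits to a direction for its next run, feed in a continuation that is favorable with probability $1/2$ and adversarial with probability $1/2$. For instance, if the algorithm commits to an up run, the next $\Theta(M)$ incoming elements are either all larger than everything in the buffer (so the algorithm extends to a run of length $\approx 2M$, matching what $\OPT$ would get) or all smaller (so the algorithm is stuck with a run of length $\approx M$, while $\OPT$, knowing the future, would have chosen a down run and gotten $\approx 2M$). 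The key point is that the algorithm must decide the direction based only on the buffer contents, which are identical in both cases, so it is right with probability exactly $1/2$ regardless of its strategy. Over a long input of $n$ such blocks, the algorithm writes in expectation about $n/2$ runs of length $\approx M$ and $n/2$ runs of length $\approx 2M$, whereas $\OPT$ can always get length $\approx 2M$; comparing the number of runs needed to consume the same total $\approx \frac{3}{2}nM$ elements gives the ratio: $\OPT$ uses $\approx \frac{3}{4}n$ runs while the algorithm uses $n$ runs, hence expected cost $\geq (3/2)\OPT$ up to lower-order terms, and this can be made arbitrarily long. I would then apply \Cref{thm:template}-style accounting (or a direct summation over blocks, using \Cref{lem:triangle} to argue $\OPT$ of the residual input only decreases) to turn the per-block bound into the global $(3/2)$ factor, taking $n \to \infty$ to absorb boundary effects from the first and last blocks and from re-neutralizing the buffer between blocks.

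The main obstacle is designing the block so that the buffer is genuinely re-neutralized at the start of each block no matter what the algorithm did in the previous block, and so that the two continuations (favorable/adversarial) are truly indistinguishable to the algorithm at its decision point. This requires care because after an up run of length $2M$ versus a botched up run of length $M$, the buffer contents differ; I would handle this by interleaving ``filler'' elements whose relative order lets the adversary restore a canonical neutral configuration — e.g., using disjoint value ranges per block so that leftover elements from a botched run are simply swept up by the next run of the appropriate direction without affecting the combinatorics. I would also need to verify that $\OPT$ on the full random input really does achieve $\approx 2M$ per run on every block simultaneously (not just in expectation), which follows because $\OPT$ is offline and the favorable direction in each block is a fixed function of the realized input, so $\OPT$ can commit to it; a clean way to see this is to exhibit one specific offline strategy achieving $\frac{3}{4}n + O(1)$ runs on every outcome and use that as an upper bound on $\OPT$. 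The remaining steps — converting expected run lengths into expected run counts, and formally discharging the first/last block — are routine.
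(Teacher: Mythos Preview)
Your overall framework—Yao's principle on independent random blocks—matches the paper, but the construction you sketch only yields a $4/3$ lower bound, not $3/2$. You write that the algorithm uses $n$ runs while $\OPT$ uses $\tfrac{3}{4}n$, ``hence expected cost $\geq (3/2)\OPT$''—but $n/(\tfrac{3}{4}n) = 4/3$, not $3/2$. This is not merely a slip in the arithmetic; it reflects a structural limitation of your construction. If the algorithm always writes exactly one run per block and the only randomness is in the \emph{length} of that run ($M$ on a wrong guess, $2M$ on a right guess), while $\OPT$ gets $2M$ every time, then the ratio of average run lengths is $2M : \tfrac{3}{2}M$, which is exactly the ratio of run counts, namely $4/3$.

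To reach $3/2$ you need blocks of \emph{fixed} input size, with $\OPT$ spending one run per block and the algorithm spending either one or two runs (each with probability $1/2$), so that the expected cost is $\tfrac{3}{2}$ runs per block against $\OPT$'s $1$. The paper does this with $4M$-size segments: each begins with a neutral prefix $(1\nearrow M)$, and the remaining $3M$ elements are either all above it (positive segment) or all below it (negative segment), each with probability $1/2$. Before the algorithm sees any of those $3M$ elements, buffer capacity forces it to have emitted some $x\in\{1,\dots,M\}$ from the prefix; whether $x=1$ or $x>1$ commits it to a direction, and with probability $1/2$ that commitment is wrong, forcing an \emph{additional} run beyond the one run that the transition between consecutive segments already forces deterministically. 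Hence $1+\tfrac12$ runs per segment for the algorithm versus $1$ for $\OPT$.

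A side benefit of the paper's fixed-segment approach is that it dissolves the re-neutralization obstacle you flag: the per-segment lower bound is argued purely from which elements of the segment must be emitted in which order (via pigeonhole on the $M$-size buffer), so it holds regardless of what the algorithm carries over from the previous segment. Your proposal to insert filler elements to restore a canonical buffer state is unnecessary under the paper's scheme and, in your scheme, would need to be made precise without further diluting the already-too-weak $4/3$ ratio.
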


\section{Run Generation with Resource Augmentation}\label{sec:res-aug}
\label{sec:resource_augmentation}

In this section, we use resource augmentation to circumvent the impossibility
result on the performance of deterministic online algorithms. We consider two
kinds of augmentation: \begin{itemize}[noitemsep, nolistsep] \item {\em Extra
Buffer:} The algorithm's buffer is actually a constant factor larger, that is,
it can use its large buffer to read elements from the input, rearrange them,
and write to the output.  \item {\em Extra Visibility:} The algorithm's buffer
is restricted to be of size $M$ but it has prescience---the algorithm can {\em
see} some elements in the immediate future (say,  the next $3M$ elements),
without the ability to write them early.  \end{itemize} We present algorithms
that, under the above conditions, achieve a competitive ratio better than $2$
when compared against an optimal offline algorithm with a buffer of size $M$. 

Resource augmentation is a common tool used in competitive
analysis~\cite{EnglertWe05,
ChanMeSi11,sleator1985amortized,sun2013improved,epstein2007online,chekuri2004multi}.
It gives the online algorithm power to make better decisions and exclude worst
case inputs, allowing us to compare the performance, more realistically,
against an all-powerful offline optimal algorithm.

The results in this section require the elements of the input to be unique.
Duplicate elements can nullify the extra ability to see or write future
(non-repeated) elements which is provided by visibility and buffer-augmentation
respectively.  For example, consider the input, \vspace{-1ex} $$I = (99,
101,\underbrace{100,\ldots,100}_{cM-2 \mbox { times}},\ldots).$$ \vspace{-2ex}  

On input $I$, any algorithm with $cM$-size buffer or visibility is as powerless
as the one without any augmentation.

Note that the assumption of distinct elements in run generation is not new.
Knuth's analysis of the average run lengths \cite{Knuth63} also requires
uniqueness.

We begin by analyzing the \defn{greedy algorithm} for run generation. Greedy is
a proper algorithm which looks into the future at each decision point,
determines the length of the next up and down run and writes the longer run. 

Greedy is not an online algorithm. However, it is central to our resource
augmentation results. The idea of resource augmentation, in part, is that the
algorithm can use the extra buffer or visibility to determine, at each decision
point, which direction (up or down) leads to the longer next run. 

We next look at some guarantees on the length of a run chosen by greedy (or the
{\em greedy run}) and also on the run that is not chosen by greedy (or the {\em
non-greedy run}).

   \subsection{Greedy is Good but not Great}\label{sec:greedy}

We first show that greedy is not optimal. The following example demonstrates that greedy can be a factor of $3/2$ away from optimal.  

\begin{example} \label{example:greedy-vs-optimal}
Consider the input
$I =   I_1 \conc (I_1 \oplus 10M ) \conc (I_1 \oplus 20M) \conc \cdots \conc (I_1 \oplus 10cM)$, where
\begin{flalign*}
I_1& =  (4M+4 \nearrow 5M+3) \conc (M+2) \conc (5M+4 \nearrow 6M+3)& \\ 
      & \conc (2M+1 \nearrow 3M-1)  \conc (4M+3 \searrow 3M+4)  \conc (2M \searrow M+3) \conc (M+1 \searrow 1).&
\end{flalign*}
\end{example}

On input $I$ above, writing down runs repeatedly produces $2c$ runs; two for each $I \oplus i10M$. 
On the other hand, the output of greedy is 
$S_1 \circ (S_1 \oplus 10M ) \conc \cdots \conc (S_1 \oplus c10M )$, where  $S_1 = (4M+4
\nearrow 6M+3) \conc (M+2) \conc (2M+1 \nearrow 3M-1) \conc (3M+4 \nearrow
4M+3) \conc (2M \searrow M+3) \conc (M+1 \searrow 1)$ which contains $3c$ runs.  

Next, we show that all the runs written by the greedy
algorithm (except the last two) are guaranteed to have length at least $5M/4$. In contrast, 
up-down replacement selection can have have runs of length $M$ in the worst case.

\begin{theorem}\label{thm:greedy_gaunratee}
Each greedy run, except the last two runs, has length at least $M + \lceil\lfloor M/2\rfloor/2\rceil$. 
\end{theorem}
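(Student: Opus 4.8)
The plan is to fix a decision point where greedy writes a run $r$ that is not among the last two runs, and argue about what must be in the buffer at that moment. By \Cref{pro:wlog}, all $M$ elements in the buffer at this decision point arrived while the previous run was being written, and greedy writes whichever of the maximal up run $r_{\uparrow}$ and maximal down run $r_{\downarrow}$ (both starting from this buffer) is longer; so $|r| = \max(|r_{\uparrow}|, |r_{\downarrow}|)$. The key quantity to bound is therefore $\max(|r_{\uparrow}|, |r_{\downarrow}|)$ from below in terms of $M$. The maximal up run consumes the buffer together with a prefix of the incoming stream; the maximal down run does the same. Since this is not one of the last two runs, there are enough future elements that neither run is artificially truncated by the end of the input — I would state this carefully so that the ``except the last two'' hypothesis is exactly what is used.

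The core combinatorial step is a counting argument on the $M$ buffer elements. First I would observe that at least one of the two maximal runs writes at least half the buffer elements: more precisely, partition or analyze the buffer according to how the up run and down run interact with it, and show $|r_{\uparrow}| + |r_{\downarrow}| \geq$ (something like) $M + (\text{number of buffer elements usable by at least one run})$, which forces $\max(|r_{\uparrow}|,|r_{\downarrow}|) \geq M/2 + (\text{extra})$. Actually the cleaner route: every buffer element is either written by $r_{\uparrow}$ in its ``buffer portion'' or not; the ones not written by $r_{\uparrow}$ from the buffer must (being still in buffer when the up run ends) be available to a down run, and vice versa. Combining with the fact that each run also pulls in incoming elements, I would derive $|r_{\uparrow}| \geq M - k$ and separately exploit that the $k$ ``skipped'' elements, plus incoming elements, give the down run extra length — ending with $|r_{\uparrow}| + |r_{\downarrow}| \geq M + \lfloor M/2 \rfloor$ or similar, hence the larger is at least $M + \lceil \lfloor M/2 \rfloor / 2 \rceil$.

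The main obstacle I anticipate is making the incoming-stream bookkeeping precise: a maximal up run doesn't just write the buffer, it keeps ingesting and writing from the stream as long as the stream keeps feeding larger elements, and symmetrically for the down run. I need to argue that whatever those incoming elements are, they cannot simultaneously sabotage \emph{both} the up run and the down run so badly that each falls below $M + \lceil\lfloor M/2\rfloor/2\rceil$. The clean way is: consider only the first $M/2$ incoming elements (say) after the decision point together with the $M$ buffer elements — a set of $3M/2$ distinct elements living in some contiguous-rank window is \emph{not} required, but the adversarial stream must present them in \emph{some} order, and within any sequence of $3M/2$ distinct elements read one-at-a-time into an $M$-buffer, the longer of the maximal up/down run starting from the first $M$ has length at least $M + \lceil \lfloor M/2\rfloor/2\rceil$. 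That reduces the theorem to this self-contained statement, which I would then prove by a median/pigeonhole argument on ranks: split the $3M/2$ elements at the median; whichever side the ``turning point'' of the buffer falls on gives one run of length $\geq M - \lfloor M/2 \rfloor/2$ plus continuation, and a short case analysis closes it. I expect the arithmetic of floors and ceilings ($M + \lceil \lfloor M/2 \rfloor / 2 \rceil$) to be the fiddly part, handled at the end by checking $M$ even and odd separately.
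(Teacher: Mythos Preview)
Your final paragraph lands on essentially the paper's argument: look at the first $\lfloor M/2\rfloor$ incoming elements after the decision point and use a median/pigeonhole split. That is the right idea and is what the paper does.

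However, your middle paragraph is built on a misconception that you should discard. Both maximal runs \emph{always} write all $M$ buffer elements: the maximal up run starts at the minimum of the buffer and therefore eventually outputs every buffer element, and symmetrically for the down run. So there are no ``buffer elements not written by $r_{\uparrow}$'' to hand to the down run, the inequality $|r_{\uparrow}| \geq M - k$ is vacuous (in fact $|r_{\uparrow}|,|r_{\downarrow}| \geq M$ always), and the target $|r_{\uparrow}| + |r_{\downarrow}| \geq M + \lfloor M/2\rfloor$ is trivially true since the sum is already at least $2M$. The entire question is how many \emph{incoming} elements each run picks up beyond the initial buffer; the buffer-partitioning route gives you nothing.

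The clean version, which is also what the paper does, splits at the median of the \emph{buffer} (not of all $3M/2$ elements). Write the buffer in sorted order as $\sigma_1 < \cdots < \sigma_M$. The up run outputs $\sigma_1,\ldots,\sigma_{\lfloor M/2\rfloor}$ in its first $\lfloor M/2\rfloor$ steps, so all $\lfloor M/2\rfloor$ elements of $I'$ have arrived by the time the up run passes $\sigma_{\lfloor M/2\rfloor}$; any of them exceeding $\sigma_{\lfloor M/2\rfloor}$ is then in the buffer and gets written later in the up run. Symmetrically, any element of $I'$ below $\sigma_{\lceil M/2\rceil}$ gets written by the down run. Every element of $I'$ satisfies at least one of these conditions, so the $\lfloor M/2\rfloor$ incoming elements distribute between the two runs and the longer run picks up at least $\lceil \lfloor M/2\rfloor /2\rceil$ of them, giving length $\geq M + \lceil \lfloor M/2\rfloor /2\rceil$. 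The ``except the last two runs'' clause is exactly what guarantees $I'$ actually has $\lfloor M/2\rfloor$ elements.
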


We now bound how far into the future an algorithm must see to be able to determine
which direction greedy would pick at a particular decision point. Intuitively, an algorithm should never have to choose between a very long up run and a very long down run. We formalize this idea about the non-greedy run not being too long in the following lemma. 

\begin{lemma} \label{lem:3m}
Given an input $I$ with no duplicate elements. Let the two possible initial increasing and decreasing runs be $r_1$ and $r_2$. Then $|r_1| < 3M$ or $|r_2| < 3M$.  
\end{lemma}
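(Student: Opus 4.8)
The plan is to bound the combined length of the two candidate runs $r_1$ (up) and $r_2$ (down) by analyzing how many elements of the buffer and of the incoming stream each can consume. Let the initial buffer contents be $b_1, \dots, b_M$ (in arrival order), and let the subsequent input be $a_1, a_2, \dots$. First I would recall from \Cref{pro:wlog} that a maximal run always writes out its entire initial buffer, so both $r_1$ and $r_2$ contain all $M$ buffer elements; the only way a run grows beyond length $M$ is by absorbing incoming elements $a_1, a_2, \dots$ Hence $|r_1| = M + u$ and $|r_2| = M + d$, where $u$ (resp.\ $d$) counts the incoming elements that the maximal increasing (resp.\ decreasing) run manages to append. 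The goal reduces to showing $u < 2M$ or $d < 2M$.

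The key observation is a ``monotone consumption'' argument: while the increasing run $r_1$ is being written, after it has written out the largest buffer element, any incoming element $a_i$ can be appended to $r_1$ only if it exceeds the last element written so far. Dually for $r_2$ and ``smaller than.'' I would argue that among the first, say, $2M$ incoming elements, the increasing run and the decreasing run partition (most of) them by value relative to a threshold determined by the buffer: an incoming element that is smaller than the smallest current buffer element cannot extend a sufficiently-advanced increasing run, and one larger than the largest cannot extend a sufficiently-advanced decreasing run. The cleanest way to make this precise is to track, as $r_1$ is generated, the ``frontier'' value $f_1$ (last element written) which is nondecreasing, and similarly a nonincreasing frontier $f_2$ for $r_2$; an incoming element counted by $u$ must be at least $f_1$ at its time of arrival, and one counted by $d$ must be at most $f_2$. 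Since all elements are distinct, I would show each incoming element $a_i$ with $1 \le i \le 2M$ can be counted by \emph{at most one} of $u, d$ once the frontiers have moved past the initial buffer range — roughly because once $f_1$ has climbed above $\max_j b_j$ and $f_2$ has dropped below $\min_j b_j$, the two runs are consuming disjoint halves of the value line. Careful bookkeeping of the at most $M$ ``boundary'' incoming elements that arrive before a frontier clears the buffer range then yields $u + d \le M + 2M = 3M$ or a similar bound, forcing $\min(u,d) < 3M/2 < 2M$, hence $\min(|r_1|,|r_2|) < 3M$.

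Concretely, the order of steps I would carry out: (1) set up notation for buffer, stream, and the two frontiers $f_1$ (nondecreasing) and $f_2$ (nonincreasing); (2) invoke \Cref{pro:wlog} to reduce to counting appended incoming elements $u$ and $d$; (3) prove the disjointness claim — an incoming element appended to the increasing run after its frontier exceeds $\max_j b_j$ cannot simultaneously be appendable to the decreasing run after its frontier drops below $\min_j b_j$, using distinctness; (4) bound the number of ``early'' incoming elements (those arriving before the respective frontier has cleared the buffer range) by the number of buffer slots they could have plausibly filled, i.e.\ $O(M)$; (5) combine to get $u + d = O(M)$ with the right constant, and conclude $\min(|r_1|, |r_2|) < 3M$.

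The main obstacle I anticipate is step (3)–(4): making the ``once the frontier clears the buffer, consumption is disjoint'' intuition fully rigorous, because the two maximal runs run at \emph{different speeds} over the same stream — the increasing run might be on incoming element $a_{100}$ while we imagine the decreasing run (a hypothetical alternative) is only on $a_{30}$, since they started from the same decision point but diverge. So ``at its time of arrival'' must be interpreted carefully: each $a_i$ has a fixed value, and what matters is whether $a_i \ge (\text{value of the largest element the increasing run has already committed to when it reaches } a_i)$. I would handle this by working purely with values rather than times: the increasing run appends exactly the elements of an increasing chain starting at $\min_j b_j$ and greedily extended through buffer-then-stream, and similarly the decreasing run is a decreasing chain from $\max_j b_j$; two such chains, one increasing and one decreasing, passing through a common totally-ordered set, can share at most... — and a short extremal/counting argument on how many of $a_1, \dots, a_{3M}$ can lie in both chains gives the bound. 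If the constant $3M$ is tight, I would double-check the boundary count does not blow past it; otherwise I would state the argument at whatever constant it naturally yields (the lemma only needs $3M$, and there is slack).
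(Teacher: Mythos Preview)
Your symmetric bound $u + d \le 3M$ is false, and this is the load-bearing step of the sketch. The tightness example that appears immediately after the lemma in the paper has $|r_1| = 3M$ and $|r_2| = 3M-1$, hence $u + d = (|r_1|-M)+(|r_2|-M) = 4M-1$. So no argument that concludes $u+d\le 3M$ (or $\min(u,d)<3M/2$) can be correct; the best symmetric bound you could hope for is $u+d<4M$, which only barely yields $\min(u,d)<2M$ and in particular will not give you the strict inequality $<3M$ without extra work.

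The second gap is the pivot you chose. ``The frontier $f_1$ has climbed above $\max_j b_j$'' and ``$f_2$ has dropped below $\min_j b_j$'' are two events that in general occur at \emph{different} time steps, and an incoming element's eligibility for each run is determined by the frontier value \emph{at the step it is read}, which is the same step for both runs (one write, one read). The correct synchronizing event is the paper's crossing index: the least $i$ with $r_1(i+1)\ge r_2(i+1)$. Because $r_1(\cdot)$ is increasing and $r_2(\cdot)$ is decreasing, after step $i$ every newly read element can lie in at most one run; and one can show $i\le M$ because all of $r_1(1),\dots,r_1(i)$ are strictly below $r_2(1),\dots,r_2(i)$ and hence sit unwritten in $r_2$'s buffer at time $i$. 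Your ``clears the buffer range'' condition is strictly later than this crossing and gives a looser (in fact unusable) count.

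For comparison, the paper does not attempt a symmetric $u+d$ bound at all. It argues asymmetrically: it partitions the elements of $r_1$ into those written before/after the crossing and into initial-buffer vs.\ newly-read, shows several of these pieces must simultaneously occupy the \emph{other} run's $M$-size buffer, and obtains $|r_1|\le 3M + |f_1|$ where $f_1$ are the post-crossing new elements that land in $r_1$. It then shows that if $|r_1|\ge 3M$ the set $f_1$ is large enough to clog $r_2$'s buffer and force $r_2$ to terminate early, whence $|r_2|<3M$. If you want to rescue your outline, the workable version is: take $|r_1|\ge|r_2|$, bound $d$ directly as $d \le (\text{overlap before crossing}) + (\text{absorbed by }r_2\text{ only}) \le i + M \le 2M$, using that the ``$r_2$ only'' elements are not absorbed by $r_1$ and hence all sit in $r_1$'s buffer; then chase the strict inequality separately.
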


The next example shows that the above bound is tight.
\begin{example}
Consider the input $I = I_1 \conc I_2 \conc I_3$, where
\vspace{-\abovedisplayskip}
\begin{align*}
    I_1 =& (1 \nearrow (M-1)) \otimes M, \mbox{~~} I_2 = (M^2 \searrow M^2 - M+1) \\[-3pt]
I_3 =& (M-1 \searrow 1) \conc (M^2+2 \nearrow M^2 + M + 1) \ .\end{align*} 

     \vspace{-\belowdisplayskip}
     \noindent
     Then, 
     \vspace{-\abovedisplayskip}
 \begin{align*}
 r_1 =& ((1 \nearrow (M-1)) \otimes M) \conc (M^2 - M +1 \nearrow M^2 + M +1) \\[-3pt]
 r_2 =& (M^2 \searrow M^2 - M+1) \conc ((M-1 \searrow 1) \otimes M) \conc (M-1 \searrow 1).
 \end{align*}  

     \vspace{-\belowdisplayskip}
     Thus, we have $|r_1|= 3M$ and $|r_2| = 3M-1$.
\end{example}

The following lemma sheds some light on the choices made by an optimal algorithm with respect to that of greedy. 
It says, roughly, that if at any decision point, an optimal algorithm chooses to write the non-greedy run, 
and then writes the next run in the opposite direction, it performs
no better than an optimal algorithm which chooses the greedy run in the first place.

\begin{lemma} \label{lem:pruning}
At any decision time step
consider two possible next maximal runs $r_1$ and $r_2$. If $|r_1| \geq |r_2|$, then one of the following is the prefix of an optimal output on the unwritten-element sequence: 
\begin{enumerate}[noitemsep,nolistsep]
\item $r_1 \circ r_3$ where $r_3$ is a maximal run after $r_1$ and it can be either up or down. 
\item $r_2 \circ r_4$ where $r_4$ is maximal run after $r_2$ with the same direction of $r_2$.
\end{enumerate}
\end{lemma}




   \subsection{Online Algorithms with Resource Augmentation}
We now present several online algorithms which use resource augmentation (buffer or visibility)
to determine an up-down replacement selection strategy, beating the competitive ratio of $2$.
For a concise summary of results, see \Cref{fig:results}.

\pparagraph{Matching OPT using $4M$-size Buffer} 
We present an algorithm with $4M$-size buffer that writes no more runs than an optimal algorithm with an $M$-size buffer.
Later on, we prove that $(4M-2)$-size is necessary even to be $3/2$-competitive; thus this augmentation result is optimal up to a constant.

Consider the following deterministic algorithm with a $4M$-size
buffer. 
The algorithm reads elements until its buffer is full.
It then uses the contents of its buffer to 
determine, for an algorithm
with buffer size $M$, if the maximal up run or the maximal
down run would be longer. \label{step:greedy}
If the maximal up run is longer,
the algorithm uses its full buffer (of size $4M$) to write a
maximal up run; otherwise it writes a maximal down run.
The algorithm stops when there is no element left to write.

\begin{theorem}
    \label{thm:optaug} 
    Let $A$ be the algorithm with a $4M$-size buffer described above.  On any input $I$, $A$ never writes more runs than an optimal algorithm with buffer size $M$.
\end{theorem}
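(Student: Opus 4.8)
### Proof Strategy

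The plan is to use the template theorem (\Cref{thm:template}) with $\beta = 1$, partitioning the output $S$ of the $4M$-buffer algorithm $A$ so that each partition consists of exactly one run written by $A$, and showing that this single run covers the first run of an optimal $M$-buffer algorithm on the corresponding unwritten-element sequence. If that holds, then $x_i = 1 = y_i$ in every partition, hence $x_i/y_i \le 1$, and \Cref{thm:template} gives $R(S) \le \OPT$ (where $\OPT$ here is measured against an $M$-buffer adversary). The subtlety is that \Cref{thm:template} as stated compares against an optimal algorithm with the \emph{same} buffer, so I would first note that all the structural ingredients it relies on (\Cref{lem:triangle}, \Cref{obs:remaindersequence}, the notion of covering) are buffer-size agnostic: adding elements never helps, and the unwritten-element sequence argument goes through verbatim when the ``optimal'' reference algorithm uses a smaller buffer. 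So really I need the one-partition covering claim.

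The heart of the argument is the following claim: \emph{at any decision point, if $r^{(4M)}$ is the run $A$ writes (using its $4M$ buffer, in the direction that greedy-with-$M$-buffer determined was longer), and $r^{(M)}_O$ is any maximal run a proper optimal $M$-buffer algorithm could write next on the same unwritten-element sequence, then $r^{(4M)}$ covers $r^{(M)}_O$.} The direction of $r^{(4M)}$ is chosen to be the direction of the \emph{longer} of the two maximal $M$-buffer runs. There are two cases. If $r^{(M)}_O$ has the same direction as $r^{(4M)}$: a maximal run in the $4M$ model starting from the same position covers the maximal run in the $M$ model, because the $M$-buffer algorithm's buffer at the decision point is a subset (as a multiset of unwritten elements, in arrival order) of the $4M$-buffer algorithm's buffer, so by the mirroring-plus-maximality argument used in \Cref{lem:subsequence} and \Cref{lem:maximalcover}, the larger-buffer maximal run covers the smaller one. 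If $r^{(M)}_O$ has the \emph{opposite} direction from $r^{(4M)}$: then by the choice of direction, the maximal $M$-buffer run in $A$'s direction is at least as long as $r^{(M)}_O$ (this is exactly what ``greedy picks the longer'' means, computed on the $M$-sized prefix of the buffer). Now I want to invoke \Cref{lem:remaindersequence2} (with $k = \ell = 1$, $p_1$ the maximal $M$-buffer run in $A$'s direction, $q_1 = r^{(M)}_O$): since $|p_1| \ge |q_1|$, the sequence $p_1 \conc p_2$ — for $p_2$ a maximal run in \emph{either} direction — covers $q_1$. But $A$ actually writes $r^{(4M)} \supseteq p_1$, which is a maximal run in the $4M$ model of the same direction as $p_1$; I need to argue that $r^{(4M)}$ alone (a single run) covers $q_1 = r^{(M)}_O$, i.e. that the extra $3M$ of buffer lets $A$ absorb in one run what would have taken the $M$-buffer greedy algorithm two.

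The key quantitative step — and the step I expect to be the main obstacle — is bounding how much longer $r^{(M)}_O$ can be than $p_1$, and showing the surplus is at most $3M$ so that $A$'s buffer can hold it. By \Cref{lem:3m}, of the two maximal $M$-buffer runs, the shorter has length $< 3M$; since $r^{(M)}_O$ is in the non-greedy direction, $|r^{(M)}_O| < 3M$. Meanwhile every element of $r^{(M)}_O$ either was in $A$'s (size-$M$ relevant) buffer at the decision point — hence in $A$'s size-$4M$ buffer — or arrives during the window in which $A$ is writing $r^{(4M)} \supseteq p_1$; and since $p_1$ is maximal in the $M$ model with $|p_1| \ge |q_1|$, by \Cref{lem:remaindersequence2} / \Cref{pro:wlog} all elements of $q_1 = r^{(M)}_O$ not yet written when $p_1$ finishes are sitting in the buffer at that point and would be written by the very next run; I must upgrade this to: with a $4M$ buffer, all of them are present and of the correct relative order to be appended to $r^{(4M)}$ while it is still being written, because $|r^{(M)}_O| < 3M \le 4M$ guarantees the $4M$ buffer never overflows with ``wrong-direction'' elements before $r^{(4M)}$ naturally consumes the relevant ones. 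Making this buffer-occupancy bookkeeping precise — tracking which elements are in $A$'s buffer at each time step and confirming no premature run-termination is forced — is the technical crux; the rest is assembling the cases and citing \Cref{thm:template}.
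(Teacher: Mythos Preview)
Your overall strategy is exactly the paper's: partition $S$ into single runs of $A$, apply \Cref{thm:template} with $\beta=1$, and show that each run $r^{(4M)}$ of $A$ covers whichever $M$-buffer maximal run a proper optimal would write next on the current unwritten-element sequence. You correctly identify \Cref{lem:3m} as the key quantitative ingredient, and your same-direction case is fine.

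Where you diverge from the paper is that you overcomplicate the opposite-direction case and thereby miss what actually makes the argument work. You invoke \Cref{lem:remaindersequence2}, talk about ``upgrading'' $p_1\conc p_2$ to a single run, and describe dynamic buffer-occupancy bookkeeping (tracking which elements are present at each step, worrying about premature termination) as the technical crux. None of that is needed. The observation you are missing is purely static: if the non-greedy $M$-buffer run $r_2$ has length $|r_2| < 3M$ (which is what \Cref{lem:3m} gives), then every element of $r_2$ lies among the first $M + |r_2| - 1 < 4M$ positions of the unwritten-element sequence, and hence is already sitting in $A$'s $4M$ buffer \emph{at the decision point, before $A$ writes anything}. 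Since a maximal run always writes out its entire initial buffer (\Cref{pro:wlog}), $A$'s single run $r^{(4M)}$ automatically contains every element of $r_2$. There are no elements of $r_2$ that ``arrive during the window''; they are all there from the start. The role of the $4M$ buffer size is not to absorb wrong-direction elements as they stream in---it is to make the entire non-greedy run visible up front.

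So your proposal is correct in outline, but the step you flag as the hard one is actually a one-liner once you see it statically rather than dynamically.
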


\begin{proofsketch}
At each decision point, $A$ determines the direction that a greedy algorithm on the same unwritten element sequence, 
but with a buffer of size $M$, would have picked. It is able to do so using its $4M$-size buffer because,
by \Cref{lem:3m}, we know the length of the non-greedy run is bounded by $3M$.
Note that it does not need to write any elements during this step.
In each partition, $A$ writes a maximal run $r$ in the greedy direction and thus covers the greedy run by \Cref{lem:maximalcover}.
Furthermore, $r$ covers the non-greedy run as well since all of the elements of this run must already be in $A$'s initial buffer
and hence get written out. An optimal algorithm (with $M$-size buffer), on the unwritten-element-sequence, 
has to choose between the greedy and the non-greedy run. 
Since $A$ covers both choices of the optimal in one run, by \Cref{thm:template}, it is able to match or beat \OPT.
\end{proofsketch}

A natural question is whether resource augmentation boosts performance automatically, without using the run-simulation technique.
However, the following example shows that our 2-competitive algorithm,
even when allowed to have $4M$-size buffer, may still be as bad when using $M$-size buffer.

\begin{example}
Consider the input, $ (8M \searrow 1) \conc (16M \searrow 8M+1) \conc 
\cdots \conc (8cM \searrow 8(c-1)M+1) \ .$
The alternating algorithm
from \Cref{sec:two-approx}
which alternates maximal up and maximal down runs
will write $2c$ runs given a $4M$-size buffer. 
In contrast, the optimal number of runs with an $M$-size buffer has $c$ runs.
\end{example}

\pparagraph{$3/2$-competitive using $4M$-visibility} 
When we say that an algorithm has $X$-visibility ($X \geq M$) or $(X-M)$-lookahead, 
it means that the algorithm has knowledge of the next $X$ elements of its unwritten element sequence, and can use this knowledge when 
deciding what to write.  

However, only the usual $M$-size buffer is used for reading and writing.
Furthermore, the algorithm must continue to read elements into its buffer sequentially from $I$, 
even if it sees elements further down the stream it would like to read or rearrange instead.

We present a deterministic algorithm which uses $4M$-visibility to achieve a competitive ratio of $3/2$.
At each decision point, similar to the algorithm in \Cref{thm:optaug}, we can use $3M$-lookahead to determine
the direction leading to the longer (greedy) run.
However, unlike \Cref{thm:optaug} we cannot use a large buffer to write future elements.
Instead, we do the following---write a maximal greedy run, 
followed by two additional maximal runs in the same direction and opposite direction respectively.

We show that, at each decision point, the above algorithm is able to cover two runs of optimal (on the unwritten-element-sequence)
using three runs. \Cref{lem:pruning} and \Cref{lem:remaindersequence2} are key in this analysis (see \Cref{sec:omittedproofs} for details). 
Thus, we have the following.

\begin{theorem}
    \label{thm:visibility}
    Let $\OPT$ be the optimal number of runs on input $I$ given an $M$-size buffer, where $I$ has no duplicate elements. Then there exists an online algorithm $A$ with an $M$-size buffer and $4M$-visibility such that $A$ always outputs $S$ satisfying $R(S) \leq (3/2) \OPT$.
\end{theorem}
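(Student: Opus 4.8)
The plan is to realize the algorithm sketched above — at each decision point use the $3M$ elements of lookahead to fix the greedy direction, then emit three consecutive maximal runs — and to bound its cost with the partition template of \Cref{thm:template} with $\beta = 3/2$.

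\emph{The algorithm and the reduction.} At a decision point with unwritten-element sequence $W$, $A$ simulates, using its $M$-size buffer and the next $3M$ input elements, the maximal up run and the maximal down run of an $M$-buffer algorithm on $W$; by \Cref{lem:3m} at least one of them has length below $3M$, so $3M$ lookahead is enough to learn which is longer (the \emph{greedy direction}). $A$ then writes, in order, a maximal run $g$ in the greedy direction, a maximal run $s$ in the same direction, and a maximal run $o$ in the opposite direction, and repeats; reading and writing use only the $M$-size buffer, the lookahead being consulted solely to pick the greedy direction. Partition $A$'s output $S$ into consecutive triples $S_1, S_2,\ldots$ (the last block possibly having one or two runs), let $I_i$ be the unwritten-element sequence when $A$ begins $S_i$, and let $S_i'$ be an optimal proper output on $I_i$ (\Cref{thm:maximal}), chosen in the coverage step below. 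For a full triple we will show $S_i$ covers the first $y_i=2$ runs of $S_i'$, giving $x_i/y_i = 3/2$. If the last block is a single run, then $A$ exhausted the input while writing $g$, so $g = I_i$ covers all of $S_i'$; if it is two runs $g\conc s$, then again $g\conc s = I_i$ covers all of $S_i'$, and moreover $\OPT(I_i)\ge 2$, since $\OPT(I_i)=1$ would make $I_i$ a single maximal run of length $|I_i|$, hence in the greedy direction, so that $g$ alone equals $I_i$ and the block would contain only one run. In every block $x_i/y_i\le 3/2$, so \Cref{thm:template} gives $R(S)\le (3/2)\OPT$.

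\emph{Covering two optimal runs with a triple.} Fix a full triple with starting sequence $W$; WLOG the greedy direction is up, so $g$ is the maximal up run $r_1$ on $W$, $r_2$ is the maximal down run, and $|r_1|\ge|r_2|$. By \Cref{lem:pruning}, either (i) $r_1\conc r_3$ is a prefix of some optimal output on $W$ with $r_3$ a maximal run in either direction, or (ii) $r_2\conc r_4$ is a prefix of some optimal output with $r_4$ maximal and in the down direction; we take $S_i'$ to be the corresponding optimal output. In case (i), $g = r_1$ already covers $r_1$, so it suffices that $s\conc o$ cover $r_3$ on the post-$g$ sequence $J$: if $r_3$ is up it equals $s$ by \Cref{lem:maximalcover}; if $r_3$ is down we invoke \Cref{lem:remaindersequence2}, having checked that the elements of $r_3$ not already written in $s$ survive into $A$'s unwritten sequence and form a down-chain there, so that the maximal run $o$ (\Cref{lem:maximalcover}) sweeps them up. In case (ii), since $|g|=|r_1|\ge|r_2|$, \Cref{lem:remaindersequence2} shows $g\conc s$ covers $r_2$ and that $A$'s unwritten sequence after $g\conc s$ is a subsequence of the optimal's after $r_2$; comparing the two buffers at that moment — the buffer obtained by pulling out an up run consists of the larger elements, which favors a down run — then certifies that the final run $o$ covers $r_4$, so $g\conc s\conc o$ covers $r_2\conc r_4$.

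\emph{Main obstacle.} The difficulty is concentrated in the sub-cases where the direction of the optimal's second run ($r_3$ in (i), $r_4$ in (ii)) disagrees with the direction of $A$'s middle run $s$: once $A$ commits $s$ in the ``wrong'' direction, the sequence it then faces is only a subsequence of what the optimal faces, so one cannot simply compare single maximal runs. The resolution is a careful accounting of buffer contents at the intermediate decision points — partitioning the two competing runs into phases according to which one is currently ahead and charging each stranded element to a buffer slot, in the style of the proofs of \Cref{lem:3m} and \Cref{lem:pruning} — to guarantee that $A$'s last run recovers every element the optimal's second run would write. With that in hand, \Cref{lem:remaindersequence2} packages the conclusion into the coverage statement for a triple, and \Cref{thm:template} closes the argument.
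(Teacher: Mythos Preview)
Your high-level plan --- pick the greedy direction via lookahead, write three maximal runs (greedy, same, opposite), and apply \Cref{thm:template} with $x_i/y_i\le 3/2$ --- is exactly the paper's argument, and your handling of the last block is fine. The gap is in the step you yourself flag as the ``main obstacle'': showing that the third run $o$ picks up the optimal's second run when that run goes in the opposite direction of $s$.

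Your justification there is not complete. In case~(i) with $r_3$ down you say the leftover elements of $r_3$ ``form a down-chain'' in $A$'s unwritten sequence and are therefore swept up by $o$ via \Cref{lem:maximalcover}. But \Cref{lem:maximalcover} only says a maximal run covers any \emph{other run started at the same decision point}; it does not say a maximal down run covers an arbitrary decreasing subsequence of the unwritten sequence --- with an $M$-size buffer that can fail. In case~(ii) you appeal to a buffer comparison (``pulling out an up run leaves larger elements, which favors a down run''), but the two processes are not at the same point in the input ($|g\conc s|>|r_2|$), so neither \Cref{lem:remaindersequence2} nor a same-remaining-input buffer comparison applies directly. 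Your proposed fix --- a phase-based charging argument \`a la \Cref{lem:3m}/\Cref{lem:pruning} --- is not carried out and would be heavier than necessary.

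The paper closes this gap cleanly with \Cref{lem:subsequence}, which you do not cite: if $I_2$ is a subsequence of $I_1$ and both write a maximal run in the same direction, then the unwritten sequence after the run on $I_2$ is a subsequence of the unwritten sequence after the run on $I_1$. In case~(i), after $g=r_1$ the algorithm's unwritten sequence after $s$ is a subsequence of the optimal's unwritten sequence after $r_1$; applying \Cref{lem:subsequence} to the two down runs $o$ and $r_3$ immediately gives that $o$ writes every element of $r_3$ still present. In case~(ii), \Cref{lem:remaindersequence2} gives that $g\conc s$ covers $r_2$ and that $A$'s unwritten sequence is a subsequence of the optimal's after $r_2$; since \Cref{lem:pruning} forces $r_4$ to be down, one more application of \Cref{lem:subsequence} shows $o$ covers $r_4$. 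Replace your buffer-accounting sketch with these two invocations and the proof is complete.
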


\pparagraph{$7/4$-competitive using $2M$-size buffer} 
We have seen that it is possible to achieve a competitive ratio of $3/2$ using a standard $M$-size buffer
as long as the algorithm is able to determine the direction leading to the longer (greedy) run (see \Cref{thm:visibility}).
Now we only have a $2M$-size buffer. 
The algorithm will pick a direction randomly, and write a maximal run in that direction using its regular $M$ buffer.
It use the additional $M$-size buffer to simulate a run in the opposite direction 
(and thus figure out which one is longer). 

With probability $1/2$, the algorithm is lucky and picks the greedy direction. In this case,
we can cover the first two runs of optimal (on the unwritten-element
sequence) with three runs as in \Cref{thm:visibility}. 
With probability $1/2$, the algorithm picks the wrong direction and
we spend four (alternating) runs to cover two runs of optimal.
Thus, in expectation we achieve a competitive ratio of $1/2(3/2)  + 1/2(4/2) = 7/4$.

\begin{theorem}\label{thm:sevebyfour}
 Let $\OPT$ be the optimal number of runs on input $I$ given an $M$-size buffer, where $I$ has no duplicate elements. Then there exists an online algorithm $A$ with a $2M$-size buffer such that $A$ always outputs $S$ satisfying $\expect [R(S)] \leq (7/4) \OPT$ and $R(S) \leq 2 \OPT$.
\end{theorem}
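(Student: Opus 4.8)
The plan is to mirror the structure of the proof of \Cref{thm:visibility}, but to replace the deterministic choice of the greedy direction with a random one, using the extra $M$-size buffer only to \emph{simulate} (not write) the run in the other direction. First I would describe the algorithm precisely: at each decision point, with the $2M$-size buffer full, flip a fair coin to pick a direction $d$; using the simulation technique from the proof of \Cref{thm:optaug} (tracking only the last element written and the first $M+t$ buffered elements), determine the length of the maximal up run and the maximal down run for a buffer of size $M$ --- by \Cref{lem:3m} one of these is shorter than $3M$, so $2M$ extra visibility into the buffer suffices to carry this out, and no elements need to be written during simulation. Let $r_1$ be the longer (greedy) run and $r_2$ the shorter one. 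If the coin picked the greedy direction, write $r_1 \conc r_3 \conc r_4$ where $r_3, r_4$ are maximal runs in the same, then opposite, direction as $r_1$ (exactly as in \Cref{thm:visibility}); this constitutes one partition. If the coin picked the non-greedy direction, write $r_2 \conc r_2' \conc r_1' \conc r_3'$, i.e.\ four maximal runs \emph{alternating} starting from the non-greedy direction; this is one partition.

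Next I would invoke \Cref{thm:template} with $\alpha = 7/4$ and $\beta = 2$. For the expected-ratio bound, in each non-final partition I must exhibit $y_i$ such that $S_i$ covers the first $y_i$ runs of an optimal output on the partition's unwritten-element sequence $I_r$, with $\expect[x_i]/y_i \le 7/4$. I set $y_i = 2$. When the coin is lucky (probability $1/2$), $x_i = 3$ and the covering argument is verbatim that of \Cref{thm:visibility}: by \Cref{lem:pruning} an optimal output on $I_r$ either begins with $r_1$ or with $r_2 \conc r_5$ where $r_5$ has the same direction as $r_2$; in the former case $r_1 \conc r_3$ (hence $r_1 \conc r_3 \conc r_4$) covers the first two optimal runs by \Cref{lem:subsequence}/\Cref{lem:remaindersequence2}, and in the latter case \Cref{lem:remaindersequence2} gives that $r_3$ writes out all of $r_2$, and \Cref{lem:subsequence} applied to $r_4$ and $r_5$ (same direction) gives that $r_1\conc r_3\conc r_4$ covers $r_2 \conc r_5$. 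When the coin is unlucky (probability $1/2$), $x_i = 4$; here I argue that four alternating maximal runs starting in \emph{either} direction cover the first two runs of \emph{any} proper optimal output on $I_r$: since the two runs written by the alternating strategy in \Cref{thm:2competitive}'s proof already cover one arbitrary optimal run (that argument does not depend on which direction the alternation starts), four runs cover two. Hence $\expect[x_i] = \tfrac12\cdot 3 + \tfrac12\cdot 4 = \tfrac72$ and $\expect[x_i]/y_i = 7/4$. For the worst-case $R(S)\le 2\OPT$ bound, note deterministically $x_i \le 4$ and $y_i = 2$, so $x_i/y_i \le 2$; this is the same bound one gets from \Cref{thm:2competitive} and holds regardless of the coin flips. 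The final partition contributes at most $4$ runs against at least $1$ optimal run, but as in \Cref{thm:visibility} a single optimal run would be the longest run available and hence be matched, so the ratio there is no worse.

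The main obstacle I anticipate is the unlucky case: I must be sure that starting a maximal alternating block of four runs in the ``wrong'' direction still covers two arbitrary runs of a proper optimal algorithm on $I_r$, and in particular that after those four runs the unwritten-element sequence is a subsequence of the one left by optimal after its first two runs, so that \Cref{thm:template}'s induction goes through with $y_i = 2$. The cleanest route is: the first two of the four runs ($r_2 \conc r_2'$, a maximal down-then-up or up-then-down pair) cover the first optimal run $r_O^{(1)}$ --- this is exactly the content of the covering step in the proof of \Cref{thm:2competitive} (one maximal up run plus one maximal down run covers any single run, with the roles of up/down symmetric, so the same holds for down-then-up). Then applying \Cref{obs:remaindersequence}, the unwritten-element sequence after $r_2\conc r_2'$ is a subsequence of that after $r_O^{(1)}$; repeating the argument with the next pair $r_1' \conc r_3'$ covers the second optimal run $r_O^{(2)}$. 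A secondary subtlety is bookkeeping the buffer: the algorithm uses only its first $M$ slots for the actual maximal writes while the full $2M$ is kept full for simulation purposes, and I should note (as the paper does in its footnote on \Cref{pro:wlog}) that reading ahead into the extra buffer purely to simulate never forces any change to what gets written. With these points in place, \Cref{thm:template} yields both $\expect[R(S)] \le (7/4)\OPT$ and $R(S) \le 2\OPT$.
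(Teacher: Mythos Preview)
Your overall structure matches the paper's: randomize the first direction, use the extra $M$ cells to learn whether the coin was lucky, play the three-run pattern of \Cref{thm:visibility} when lucky and a four-run alternating pattern when unlucky, then apply \Cref{thm:template} with $\alpha=7/4$, $\beta=2$. The covering arguments for both branches are exactly the paper's.

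There is, however, a genuine error in your algorithm description. You claim that with the $2M$-buffer full the simulation of both maximal runs can be carried out with ``no elements need to be written during simulation,'' appealing to \Cref{lem:3m}. But \Cref{lem:3m} only says the \emph{shorter} run has length strictly less than $3M$; a $2M$-size buffer lets you look only $M$ steps ahead of an $M$-buffer algorithm, which is not enough to see the shorter run terminate. Indeed, if you could determine the greedy direction without writing anything using only a $2M$ buffer, the coin flip would be pointless and you would get $3/2$-competitiveness outright, contradicting \Cref{thm:res-aug-lower}. The paper's algorithm instead \emph{commits}: it writes the randomly chosen maximal run using $M$ of the cells, and uses the remaining $M$ to track the buffer state of the opposite-direction run as new elements stream in. It discovers it was unlucky only when its own run ends while the simulated run is still going, at which point $r_2$ has already been written and the algorithm switches to the four-run alternating pattern. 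This is a small fix and the rest of your analysis goes through unchanged, but as written your first paragraph describes an algorithm that is not implementable with a $2M$ buffer.

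A smaller point: in the final partition with $\OPT(I_r)=1$, the unlucky branch still costs $2$ runs (one wasted run plus the covering run), so ``hence be matched'' is too quick. The expected ratio there is $3/2\le 7/4$, which is what the paper argues; you should say that rather than claim a match.
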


\begin{figure}
\centering
\ \begin{tabular} { | c | c | c | c |  c}
\hline
 Buffer size & Lookahead & Competitive ratio & Comments \\
    \hline
$M$ & - & 2 & Deterministic   \\
\hline
$2M$ & - & 1.75 & Randomized  \\ 
\hline
$M$ & $3M$ & 1.5  & Deterministic \\
\hline
$4M$ & - & 1 & Deterministic   \\
\hline
  \end{tabular}
  \caption{Summary of online algorithms on run generation on any input
    \label{fig:results}
}
\end{figure}
 
   \subsection{Lower Bound for Resource Augmentation}

We show that with less than $(4M-2)$-augmentation, no deterministic online
algorithm can be $3/2$-competitive on all inputs.  Thus, an algorithm with
$(4M-2)$-size buffer cannot be optimal, so \Cref{thm:optaug} is nearly tight.
Similarly, \Cref{thm:visibility} is nearly tight, since $4M-2$-size buffer
implies $4M-2$-visibility.

 \begin{theorem}\label{thm:res-aug-lower} With buffer size less than $(4M-2)$,
for any deterministic online algorithms $A$, there exists an input $I$ such
that if $S$ is the output of $A$ on $I$, then $R(S) \geq (3/2) \OPT$.
\end{theorem}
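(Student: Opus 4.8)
The plan is to construct, for any deterministic online algorithm $A$ with buffer size $b < 4M - 2$, an adversarial input on which $A$ is forced to write roughly $3$ runs for every $2$ runs that an optimal algorithm with an $M$-size buffer uses. This mirrors the structure of the randomized lower bound (Theorem~\ref{thm:randlower}), but now we exploit determinism: the adversary watches $A$'s choices and reacts. The input is built from repeated \emph{segments} of size $\Theta(M)$, where each segment can be made ``positive'' (an increasing block of values all larger than everything seen so far) or ``negative'' (a decreasing block of values all smaller than everything seen so far), and the adversary picks the type of each segment online based on what $A$ has done so far.

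The key structural fact I would use is Claim~\ref{obs:lower} (from the commented-out block, which I would need to restate or re-derive): once $A$ has committed to a run direction and written element $e$, it cannot include in that run any element on the ``wrong side'' of $e$. First I would argue that, just as in Theorem~\ref{thm:2lower}, the adversary can force $A$ to start a fresh run at the boundary between consecutive segments --- whenever a new segment arrives whose elements are all larger (or all smaller) than $A$'s current run can accommodate, $A$ must break. The new ingredient relative to Theorem~\ref{thm:2lower} is that the buffer has size $b$, so $A$ might be able to ``save'' elements across a segment boundary and merge what would otherwise be two runs into one. The whole point of the bound $b < 4M-2$ is that the buffer is \emph{not} large enough to do this trick on both of two consecutive boundaries at once: with a $4M-2$ buffer one can see far enough ahead (Lemma~\ref{lem:3m}-style reasoning shows the non-greedy run is $< 3M$, and one needs a bit more slack to actually write elements rather than just simulate), but with anything smaller there is always a segment where $A$ is caught flat-footed and must pay an extra run. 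So I would make each ``super-block'' consist of two segments; $\OPT$ (with its $M$ buffer, but offline) writes each super-block of $I_{2i} \conc I_{2i+1}$ in a single run by choosing the run direction to match, exactly as in the proof of Theorem~\ref{thm:2lower}, hence $\OPT \le \lceil t/2 \rceil$ over $t$ segments; meanwhile $A$ is forced into at least one run per segment \emph{plus} an extra run in a constant fraction of the super-blocks, giving $R(S) \ge (3/2)\OPT$ asymptotically, and by iterating we get arbitrarily long inputs.

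Concretely, the adversary's rule at the start of each segment $I_{t+1}$ is chosen as a function of: (a) the direction of the run $A$ is currently writing (or, if $A$ has written only one element of a new run, whether that element was an extreme of $A$'s buffer), and (b) a careful bookkeeping of which elements $A$ currently holds in its buffer of size $< 4M-2$. The hard part --- and the main obstacle --- is the counting argument showing that a buffer of size strictly less than $4M-2$ is genuinely insufficient: I would need to show that there is no clever strategy by which $A$ retains enough elements across segment boundaries to amortize the extra runs down below a $3/2$ factor. The natural way to do this is a potential/charging argument: assign each segment a ``debt'' of runs $A$ must eventually write, show that $A$'s buffer can absorb at most $M-O(1)$ worth of debt at any boundary, and that $4M-2 < 4M-2$ forces the debt to accumulate at the claimed rate. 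I would set up the segment sizes (likely $M$ or $2M$ each, as in the two lower bounds above) so that at least one of every pair of boundaries exceeds what the buffer can cover, then conclude via a clean accounting that $A$ writes at least $\frac{3}{2}$ times as many runs as $\OPT$. Finally, as in Theorem~\ref{thm:randlower}, I would perturb the construction (multiply all values by $\lfloor N/\text{(segment size)}\rfloor$ and add a segment-dependent offset) to remove duplicate elements, noting that all arguments depended only on relative order.
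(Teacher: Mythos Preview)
Your plan diverges sharply from what the paper actually does, and it has a real gap.

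The paper does \emph{not} build an arbitrarily long input out of repeated segments. It gives a single short adversarial input. Specifically, it fixes the buffer size at $4M-3$, feeds the algorithm a carefully crafted block $I_1$ of exactly $4M-3$ elements (so the buffer is full and $A$ must write something before seeing anything further), and then branches on $A$'s first one or two written elements to choose a short continuation $e \conc I_2$. In every branch, $A$ writes either $2$ runs while $\OPT = 1$, or $3$ runs while $\OPT = 2$. That is the entire proof: a finite case analysis on one input, no asymptotics, no potential function.

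Your proposal, by contrast, tries to port the segment machinery from \Cref{thm:2lower} and \Cref{thm:randlower}. The issue is that in those theorems the algorithm's buffer is $M$, the same as $\OPT$'s, so a segment of size $M$ genuinely forces a write at each boundary. Here the algorithm has buffer up to $4M-3$: it can swallow three or four of your size-$M$ segments before it is forced to write anything, so the ``one new run per segment boundary'' argument does not go through as stated. You acknowledge this yourself when you say the ``hard part'' is the counting argument showing $b < 4M-2$ is insufficient, and then offer only a sketch of a charging scheme. That is exactly the step that would need a new idea, and you have not supplied one; without it the proof is incomplete. It may be possible to make a repeated-segment construction work with larger segments and a more delicate accounting, but that is substantially harder than what the theorem requires, and the paper sidesteps the whole difficulty by exploiting the exact buffer size in a one-shot construction.
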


\section{Offline Algorithms for Run Generation}
\label{sec:offline}

We give offline algorithms for run generation. 
The offline problem is the following---given the entire input, 
compute (using a standard polynomial computation time algorithm) the optimal strategy
which when executed by a run generation algorithm (with a buffer of size $M$) produces the minimum possible number of runs.

For any $\epsilon$, we provide an offline polynomial time approximation algorithm that gives a $(1 + \epsilon)$-approximation to the optimal solution. This is called a polynomial-time approximation scheme, or PTAS. The running time of our first attempt is $O(2^{1/\epsilon}N\log N)$. We then improve the running time to $O(\varphi^{1/\epsilon}N\log N)$ where $\varphi = (1+\sqrt{5})/2 \approx 1.618$ is the well-known golden ratio.

\pparagraph{Simple PTAS}  Our first attempt breaks the output into sequences with a small number of runs, and uses brute force to find which set of runs writes the most elements.  We show that for any $\epsilon$, we can achieve a $1+\epsilon$ approximation in polynomial time using this strategy.

\begin{theorem} \label{thm:simplePTAS}
There exists an offline algorithm $A$ that always writes an $S$ satisfying $R(S) \leq (1+ \epsilon) \cdot \OPT$. The running time of $A$ is $O(2^{1/\epsilon}N\log N)$.
\end{theorem}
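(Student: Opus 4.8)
The plan is to reduce the offline optimization to a shortest-path / dynamic-programming computation over ``decision points'' once we know that it suffices to consider \emph{proper} algorithms (\Cref{thm:maximal}). A proper algorithm is fully determined by the sequence of run directions it picks at its decision points, so the state space we need to search over is the set of reachable unwritten-element sequences at decision points. The key structural fact is \Cref{lem:remaindersequence2} (together with \Cref{lem:maximalcover}): after a proper algorithm writes a maximal run from a given unwritten-element sequence, the resulting unwritten-element sequence depends only on that sequence and the chosen direction. Hence if we fix a block length $k = \lceil 1/\epsilon \rceil$ and consider, at each decision point, all $2^k$ direction strings of length $k$, each such string determines a deterministic block of (at most) $k$ maximal runs and a resulting unwritten-element sequence.

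First I would argue the approximation guarantee combinatorially, independent of the algorithm: take an optimal proper output $S^*$ with $\OPT$ runs and cut it into consecutive blocks of $k$ runs each (the last block possibly shorter). Our algorithm, starting from the corresponding decision point, brute-forces all $2^k$ direction strings and picks the one whose block of maximal runs \emph{writes the most elements}; by \Cref{lem:remaindersequence2} this block covers the corresponding $k$-run block of $S^*$ (since a maximal-run block that writes at least as many elements as another sequence covers it), so after $k$ of our runs the remaining work is a subsequence of what $S^*$ has remaining, and by \Cref{lem:triangle} its $\OPT$ has not increased. Iterating over all $\lceil \OPT/k\rceil$ blocks, our algorithm uses at most $k\lceil \OPT/k\rceil \le \OPT + k \le \OPT + \OPT/\epsilon \cdot \epsilon$... more carefully, $k\lceil \OPT/k\rceil \le \OPT + (k-1)$, and combined with the standard trick of absorbing the additive $k-1$ into a multiplicative $(1+\epsilon)$ factor when $\OPT \ge k/\epsilon$ (and brute-forcing the whole input when $\OPT < k/\epsilon$, which then has $O(k/\epsilon)$ runs and only $2^{O(k/\epsilon)}$ direction strings, a constant), we get $R(S) \le (1+\epsilon)\OPT$. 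I should double-check the exact constant: with $k = \lceil 1/\epsilon\rceil$ the per-block ratio is $k/1 = k \le 1 + 1/\epsilon$ only in the last (short) block, but in every full block it is exactly $k/k$... no --- the point of \Cref{thm:template} is that our block covers $k$ optimal runs using at most $k$ of our runs, ratio $1$, \emph{except} possibly the last block where we may use $k$ runs to cover as few as $1$; choosing $k=\lceil 1/\epsilon\rceil$ makes the amortized bound $(1+\epsilon)$ via \Cref{thm:template} with $\beta$ applied blockwise, plus handling the short tail.

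Second, I would bound the running time. The number of distinct reachable unwritten-element sequences at decision points is at most $O(N/M)$ per ``depth'' times the branching, but the clean way is: process the input left to right, and at each decision point $t$ (there are at most $N$ of them, though really at most $\OPT \le N$) maintain the minimum number of runs to reach it; for each of the $2^k$ direction strings, simulate the resulting block of up to $k$ maximal runs --- each maximal run is simulated in $O(M \log M)$ or $O((\text{run length})\log M)$ time with a heap, so a whole block costs $O(\ell \log N)$ where $\ell$ is the total length written, and summed over one pass this telescopes. The total work is $O(2^k)$ simulations each costing $O(N\log N)$ amortized across the pass, i.e.\ $O(2^{1/\epsilon} N \log N)$, matching the claim. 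I would phrase the DP over decision points explicitly so that identical unwritten-element sequences are not re-expanded, using the fact that the unwritten-element sequence at a decision point is determined by which input elements have been written (a prefix-like structure), so there are only polynomially many states.

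The main obstacle I anticipate is making the ``cover'' bookkeeping in \Cref{thm:template} precise enough to yield exactly $(1+\epsilon)$ rather than $(1+\epsilon)$ up to rounding: one must be careful that choosing the length-maximizing block of $k$ maximal runs genuinely covers the first $k$ runs of the optimal continuation --- this needs \Cref{lem:remaindersequence2} applied $k$ times (the $(j{+}1)$-st maximal run of our block, appended after $j$ maximal runs that already wrote at least as many elements as the optimal's first $j$ runs, covers the optimal's $(j{+}1)$-st run), which is a small induction I would spell out as a lemma. The secondary subtlety is the tail / small-$\OPT$ case and verifying the additive-to-multiplicative conversion; this is routine but must be stated. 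Everything else --- the heap-based simulation of maximal runs and the per-pass amortization --- is standard.
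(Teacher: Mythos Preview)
Your proposal has a genuine gap in the covering argument, stemming from a misreading of \Cref{lem:remaindersequence2}. That lemma says that if $|p_1\conc\cdots\conc p_k|\ge |q_1\conc\cdots\conc q_\ell|$ then $p_1\conc\cdots\conc p_k\conc p_{k+1}$ covers $q_1\conc\cdots\conc q_\ell$; the extra run $p_{k+1}$ is essential, because the elements that $q$ wrote but $p$ did not are sitting in $p$'s buffer at the end of $p_k$ and only get flushed by $p_{k+1}$. Writing at least as many elements does \emph{not} by itself imply covering. Consequently your plan to cover $k$ optimal runs with $k$ of your own runs (ratio $1$ per full block) does not go through, and the inductive patch you sketch (``the $(j{+}1)$-st maximal run of our block \ldots covers the optimal's $(j{+}1)$-st run'') fails for the same reason: knowing that the full $k$-run block is longest gives you no control over prefix lengths, and even if you had $|p_1\conc\cdots\conc p_j|\ge |q_1\conc\cdots\conc q_j|$, one more run would cover $q_1\conc\cdots\conc q_j$, not $q_{j+1}$.

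The fix is exactly what the paper does: in each partition, brute-force over the $2^{\lceil 1/\epsilon\rceil}$ direction strings to find the longest block of $\lceil 1/\epsilon\rceil$ maximal runs, then write \emph{one additional} maximal run. By \Cref{lem:remaindersequence2} these $\lceil 1/\epsilon\rceil+1$ runs cover the first $\lceil 1/\epsilon\rceil$ runs of an optimal output on the current unwritten-element sequence, so \Cref{thm:template} applies directly with $x_i=\lceil 1/\epsilon\rceil+1$, $y_i=\lceil 1/\epsilon\rceil$, and $\beta=1+1/\lceil 1/\epsilon\rceil\le 1+\epsilon$. This makes the additive-to-multiplicative conversion, the small-$\OPT$ brute force, and the DP over decision points all unnecessary: the algorithm is a simple greedy repeat of ``pick the longest $\lceil 1/\epsilon\rceil$-run block, then write one more run,'' and the running time follows by summing the simulation cost over partitions.
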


\pparagraph{Improved PTAS} 
We reduce the running time by bounding the number of choices we need to consider in a brute-force search. 
We do this using \Cref{lem:pruning}.

At each decision point, an algorithm chooses between starting an
increasing run $r_1$ and a decreasing run $r_2$. If
$|r_1| \geq |r_2|$, then by \Cref{lem:pruning},  
we are able to discard $r_2$ followed by an
increasing run.

	Let $F_d$ be the number of run sequences we need to consider if $d$
runs remain to be written (for example, na\"i{}ve PTAS has $F_d = 2^d$).
First, the algorithm must handle all run sequences beginning with $r_1$; this
is the same as an instance of $F_{d-1}$.  Then the algorithm handles all run
sequences beginning with $r_2$ followed by a decreasing run; this is an
instance of $F_{d-2}$.  Thus $F_d = F_{d-1} + F_{d-2}$; by examination, $F_1 =
1$ and $F_2 = 2$.  This is the Fibonacci sequence, which gives us the
$\varphi^{1/\epsilon}$ factor in the running time.

\begin{theorem}\label{thm:improvedPTAS} There exists an offline algorithm $A$
that writes $S$ such that $R(S) \leq (1+ \epsilon) \cdot \OPT$. The running
time of $A$ is $ O(\varphi^{1/\epsilon}N\log N)$ where $\varphi$ is the golden
ratio $ (1+\sqrt{5})/2$.
\end{theorem}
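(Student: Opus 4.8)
The plan is to follow the same skeleton as the simple PTAS of \Cref{thm:simplePTAS} but replace the blind $2^{1/\epsilon}$-way branching by a pruned search guided by \Cref{lem:pruning}. First I would recall the structure of the simple PTAS: we fix a block length $L = \lceil 1/\epsilon \rceil$ (up to constants), and we greedily process the output in phases, where in each phase we consider all ways of writing the next $L$ maximal runs from the current unwritten-element sequence and commit to the phase-prefix that writes the most input elements. By \Cref{thm:maximal} we may restrict to proper (maximal-run) algorithms, so at each decision point there are only two children—start an increasing run $r_1$ or a decreasing run $r_2$—and a depth-$L$ search tree has at most $F_L$ leaves. The charging argument (that committing to the max-coverage $L$-run prefix costs at most one extra run per $\OPT$ block, hence $R(S) \le (1+\epsilon)\OPT$) is exactly as in \Cref{thm:simplePTAS} and I would cite it rather than redo it; the only thing that changes is the size of the search tree and hence the running time.

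The heart of the argument is bounding the number of leaves. I would argue that at any node where we branch, if $|r_1| \ge |r_2|$ (say $r_1$ increasing, $r_2$ decreasing), then by \Cref{lem:pruning} some optimal continuation of the current subtree either begins with $r_1$ (followed by an arbitrary maximal run, so an $F_{d-1}$ subinstance) or begins with $r_2$ \emph{followed by another decreasing run} (an $F_{d-2}$ subinstance, since after committing to $r_2$ the next run's direction is forced). Concretely, the $r_1$-child is explored with $d-1$ remaining runs, and the $r_2$-child is explored only down the forced-direction branch, contributing $F_{d-2}$; a symmetric statement holds when $|r_2| > |r_1|$. This gives the recurrence $F_d = F_{d-1} + F_{d-2}$ with $F_1 = 1$, $F_2 = 2$, whose solution is $F_d = \Theta(\varphi^d)$ with $\varphi = (1+\sqrt5)/2$. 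With $d = L = \Theta(1/\epsilon)$ we get $\varphi^{O(1/\epsilon)}$ leaves per phase; each leaf requires simulating $O(L)$ maximal runs, and simulating a maximal run on a suffix of the input costs $O(N \log N)$ total across a pass (maintaining a heap over the buffer), so amortizing over the $O(N/L)$ phases yields total time $O(\varphi^{1/\epsilon} N \log N)$, absorbing the polynomial-in-$1/\epsilon$ per-leaf overhead into the exponential factor.

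The main obstacle I anticipate is making the pruning argument rigorous at the level of the \emph{whole} phase rather than a single branch: \Cref{lem:pruning} only tells us that one of two specific two-run prefixes extends to an optimal output, so I must verify that iterating this down a depth-$L$ tree still preserves the existence of a coverage-maximal (or at least ``good enough'') $L$-run prefix among the $F_L$ survivors—i.e., that pruning never discards the prefix the simple-PTAS charging argument relies on. The clean way to handle this is to observe that the simple PTAS only needs, at each phase, \emph{some} $L$-run proper prefix that covers the first $L$ runs of an optimal output on the current suffix; \Cref{lem:pruning} (applied inductively, together with \Cref{lem:remaindersequence2} to carry the ``covers'' relation across the forced run) guarantees such a prefix survives the pruning. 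The second, more routine obstacle is the running-time accounting—confirming that the two-run-prefix simulations needed to decide $|r_1|$ vs.\ $|r_2|$ at each internal node don't blow up the bound; this follows because the total work along any root-to-leaf path is $O(L \cdot (\text{cost of simulating one maximal run}))$ and there are $F_L = \varphi^{O(1/\epsilon)}$ such paths per phase.
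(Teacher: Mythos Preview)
Your proposal is correct and follows essentially the same approach as the paper: reuse the simple-PTAS phase structure and charging argument, then apply \Cref{lem:pruning} at each branch to prune the non-greedy child's opposite-direction grandchild, yielding the Fibonacci recurrence $F_d = F_{d-1} + F_{d-2}$ and hence $\varphi^{1/\epsilon}$ leaves. You are in fact more careful than the paper about the one real subtlety---verifying that the pruned search still retains an $L$-run prefix of some optimal output (so that the longest surviving combination is long enough for the \Cref{lem:remaindersequence2} covering step)---which the paper leaves implicit.
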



\section{Run Generation on Nearly Sorted Input}\label{sec:nearly-sorted} 

This section presents results proving 
that up-down replacement selection performs better when the
input has inherent sortedness (or ``bounded-disorder'' \cite{Martinez-PalauDoLa10}).
Replacement selection produces longer runs on
nearly sorted data. In particular, if every input element
is $M$ away from its target position, then a single run is produced. 
Similarly, we give algorithms which perform well on inputs, where the optimal runs are also long.

In particular, we say that an input is \defn{$c$-nearly-sorted} if there exists a proper optimal algorithm whose outputs consists of runs of length at least $cM$.

\pparagraph{$3/2$-competitive using $2M$-size Buffer} 
We provide a randomized online algorithm that, on inputs which are $3$-nearly-sorted, achieves a competitive ratio of $3/2$, while using
an augmented-buffer of size $2M$. 

A sketch of the algorithm follows. At each decision point, the algorithm picks a run
direction at random.  It starts a maximal run in that direction, but uses its extra $M$-buffer 
to simulate the run in the opposite direction. By \Cref{lem:3m}, the algorithm can tell if it picked the
same run as greedy (with $M$-buffer), similar to \Cref{thm:sevebyfour}.
If the algorithm got lucky and picked the greedy run, it repeats the process.  

If the algorithm picked the non-greedy run, 
it uses some careful bookkeeping to write elements and simulate the run in the opposite direction. 
In doing so, the algorithm winds up at the same point in the input it would have reached, 
had it written the greedy run in the first place, but with an additional cost of one run.

\begin{theorem}\label{thm:wellsorted1.5} 
There exists a randomized online algorithm $A$ using $M$ space in addition to its buffer such that, on
any 3-nearly-sorted input $I$ that has no duplicates, $A$ is a $3/2$-approximation in expectation. 
Furthermore, $A$ is at worst a
2-approximation regardless of its random choices.
\end{theorem}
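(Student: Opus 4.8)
The plan is to analyze the randomized algorithm sketched in the paragraph preceding the theorem and show its expected cost per ``partition'' (in the sense of \Cref{thm:template}) is at most $3/2$ times the number of optimal runs covered, then invoke \Cref{thm:template} with $\alpha = 3/2$ and $\beta = 2$. At each decision point, let $I_r$ be the unwritten-element sequence, and let $r_g$ and $r_n$ be the greedy (longer) and non-greedy (shorter) maximal initial runs, with $|r_g| \geq |r_n|$. The algorithm flips a coin, starts a maximal run in the chosen direction using its $M$-buffer, and uses the extra $M$ slots to simulate the run in the opposite direction; by \Cref{lem:3m} the non-greedy run has length $< 3M$, so the simulation fits in the $2M$-size buffer and the algorithm learns, before committing, whether it chose greedy or not.

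First I would handle the \textbf{lucky case} (probability $1/2$): the algorithm picked $r_g$. Then it writes $r_g$ followed by a maximal run $r_3$ in the same direction and a maximal run $r_4$ in the opposite direction. This is exactly the situation analyzed in the proof of \Cref{thm:visibility}: by \Cref{lem:pruning} and \Cref{lem:remaindersequence2}, these three runs cover the first two runs of a proper optimal output on $I_r$, so $x_i = 3$ and $y_i = 2$ in this partition. Second I would handle the \textbf{unlucky case} (probability $1/2$): the algorithm picked $r_n$, the shorter run. Here is where 3-nearly-sortedness is used. The key claim is that after writing $r_n$, the algorithm's unwritten-element sequence is a subsequence of what it would have had after writing $r_g$ (using \Cref{lem:remaindersequence2}, since $|r_g| \ge |r_n|$ means the elements of $r_g$ not in $r_n$ are in the buffer and get written by a second run in the $r_g$-direction). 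So the algorithm pays one extra run to ``recover'': it writes $r_n$, then a maximal run in the $r_g$-direction (which together cover $r_g$), landing at a subsequence of $I_r'$, the unwritten-element sequence after greedy writes $r_g$. From that recovered position it proceeds as in the lucky case, spending $3$ more runs to cover the next two optimal runs --- but I must argue those two runs really are among the optimal runs on $I_r$, which is where $3$-nearly-sortedness enters: because optimal runs are long ($\geq 3M$), the run $r_g$ (being at least as long as the non-greedy option, hence... ) the greedy run matches the first optimal run, so the recovered instance $I_r'$ has optimal cost exactly $\OPT(I_r) - 1$, and covering its first two runs covers optimal runs $2,3$ of $I_r$.

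Reconciling the bookkeeping across the two cases is the delicate part. In the unlucky case the algorithm spends $5$ runs ($r_n$, recovery run, then $3$) but covers $3$ optimal runs of $I_r$ --- wait, that gives $5/3 > 3/2$. The right accounting (following the sketch's phrasing ``additional cost of one run'') is to amortize: charge $1$ extra run for the unlucky recovery and then, from the recovered position, the remaining partitions behave exactly as if the algorithm had been lucky. So over the whole execution, the expected number of ``extra'' runs is $\tfrac12 \cdot 1$ per decision point beyond the lucky baseline, and the lucky baseline already gives ratio $3/2$ (three runs per two optimal runs). Concretely, I would set up the partition so that a partition covers $2$ optimal runs and the algorithm writes $3$ runs if lucky and $4$ runs if unlucky (the recovery run plus $3$, but the recovery run is shared with covering the \emph{first} of those two optimal runs since $r_n \conc (\text{recovery}) = r_g \supseteq$ first optimal run, then one more run covers the second optimal run). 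That yields $\expect[x_i] = \tfrac12 \cdot 3 + \tfrac12 \cdot 4 = 7/2$ against $y_i = 2$, which is $7/4$, not $3/2$ --- so I must instead exploit that $3$-nearly-sortedness lets a \emph{single} maximal run in the unlucky case cover an entire long optimal run. Let me restructure: in the unlucky case, write $r_n$ then one maximal run $r'$ in the greedy direction; since the first optimal run $o_1$ has $|o_1| \geq 3M$ and is the greedy direction, \Cref{lem:subsequence}/\Cref{lem:remaindersequence2} give that $r_n \conc r'$ covers $o_1$, and the unwritten sequence is a subsequence of the one after $o_1$. So unlucky costs $2$ runs to cover $1$ optimal run; lucky costs $3$ to cover $2$, i.e., $3/2$ each. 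Thus per optimal run covered, $\expect[\text{cost}] \le \tfrac12 \cdot 2 + \tfrac12 \cdot \tfrac32 = \tfrac74$... still not $3/2$.

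\begin{remark}
The correct amortization, which I would verify carefully, is the one in the sketch: unlucky recovery costs exactly \emph{one} extra run over the lucky path and then the process restarts, so if $p$ is the number of decision points the algorithm reaches, $\expect[R(S)] \le (3/2)\OPT + (1/2)p$, and one shows $p \le \OPT$ (each partition consumes at least one optimal run, via \Cref{lem:remaindersequence2} and $3$-nearly-sortedness ensuring the greedy run aligns with an optimal run boundary), while the $(3/2)\OPT$ term itself already counts these partitions --- so a clean way is: redefine a partition to span until the algorithm is ``back on the greedy track,'' covering $y_i$ optimal runs with $x_i \le 3$ runs if lucky (so $x_i/y_i \le 3/2$ using $y_i \ge 2$) and $x_i \le y_i + 1$ runs if unlucky with $y_i \ge 2$ (giving $x_i/y_i \le 3/2$). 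Then $\expect[x_i]/y_i \le 3/2$ and \Cref{thm:template} applies. I would also verify the $2$-approximation worst-case bound: even unlucky, alternating maximal runs give $x_i \le 2 y_i$ as in \Cref{thm:2competitive}. The main obstacle is proving the unlucky case covers $\geq 2$ optimal runs with $\leq 3$ runs, which genuinely needs $3$-nearly-sortedness to guarantee $r_g$ spans a full optimal run so the recovery run doesn't ``waste'' coverage.
\end{remark}
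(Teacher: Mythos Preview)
Your proposal has a genuine gap: you are trying to force the analysis into the covering/\Cref{thm:template} framework, and this is exactly what \emph{cannot} be done here. Once the algorithm merely ``covers'' the greedy run (e.g., by writing $r_n$ followed by a run in the greedy direction), the resulting unwritten-element sequence is only a \emph{subsequence} of the one after the optimal run, and there is no guarantee that this subsequence is still $3$-nearly-sorted. The $3$-nearly-sortedness hypothesis is what lets you identify the greedy direction with the optimal direction (via \Cref{lem:3m}: the non-greedy run has length $<3M$, while every optimal run has length $\geq 3M$), and you need that identification to hold at \emph{every} subsequent decision point. Your various attempts at bookkeeping ($7/4$, $5/3$, etc.) are symptoms of this: you are paying for runs that only cover, not match, and then you lose the structural hypothesis for the recursion.

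The paper's proof takes a completely different route. First, the lucky case is much simpler than you have it: since $3$-nearly-sortedness forces the greedy run to \emph{equal} the first optimal run, the algorithm writes exactly that one run (cost $1$, covers $1$ optimal run) and recurses. Second, and crucially, in the unlucky case the algorithm does not merely cover $r_g$; it uses the extra $M$ space to reconstruct the \emph{exact} buffer state that would have resulted from writing $r_g$. Concretely, after the wrong run $r_n$ ends, the algorithm (i) writes out, in the direction of $r_g$, every element that $r_g$ wrote but $r_n$ did not (these sit in $B_{r_n}\setminus B_{r_g}$), and (ii) marks the elements that $r_n$ wrote but $r_g$ did not (these sit in $B_{r_g}\setminus B_{r_n}$) as ``ghosts'' that are retained in the buffer and silently deleted rather than written when their turn comes in future runs. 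After this, the buffer contents are identical to those after $r_g$, so the remaining instance is exactly the $3$-nearly-sorted tail, and the argument recurses cleanly. The accounting is then immediate: each optimal run costs $1$ with probability $1/2$ and $2$ with probability $1/2$, giving $\tfrac32\OPT$ in expectation and $2\OPT$ in the worst case. No appeal to \Cref{thm:template} is needed.
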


\pparagraph{Exact Offline Algorithm on Nearly Sorted Input}
We show that the greedy (offline) algorithm is a linear time optimal algorithm 
on inputs which are $5$-nearly-sorted.
We first prove the following lemma.

\begin{lemma} \label{lem:5m}
If a proper algorithm produces runs of length at least $5M$ on a given input with no duplicates, then it is optimal. 
\end{lemma}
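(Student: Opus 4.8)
\textbf{Proof plan for \Cref{lem:5m}.}
The plan is to show that if a proper algorithm $A$ produces runs all of length at least $5M$, then no other algorithm can do better, by a local-exchange argument combined with the structural tools already developed (\Cref{lem:pruning}, \Cref{lem:remaindersequence2}, \Cref{thm:template}). First I would fix an arbitrary proper optimal algorithm $A_O$ (which exists by \Cref{thm:maximal}) and compare it to $A$ run by run, maintaining the invariant that after $A$ has written $t$ runs and $A_O$ has written $t$ runs, the unwritten-element sequence of $A$ is a subsequence of that of $A_O$ --- so $A$ has made at least as much progress. If I can keep this invariant, then $A$ finishes in at most as many runs as $A_O$, hence $A$ is optimal.

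The crux is the inductive step. At a common decision point let $I_r$ be $A_O$'s unwritten-element sequence, and let $r_1, r_2$ be the two possible maximal initial runs with $|r_1| \geq |r_2|$; by \Cref{lem:3m} we have $|r_2| < 3M$. By hypothesis $A$ writes a run of length at least $5M$, so $A$ writes $r_1$ (it is proper, and $|r_2| < 3M < 5M$ means the short direction cannot be the one of length $\geq 5M$). Now I case on what $A_O$ does. If $A_O$ also writes $r_1$, the invariant is trivially preserved via \Cref{lem:subsequence} or directly. If $A_O$ writes $r_2$, then I invoke \Cref{lem:pruning}: the optimal continuation after $r_2$ may as well be $r_2 \conc r_4$ where $r_4$ has the \emph{same} direction as $r_2$ --- or else $r_1$ alone is already an optimal prefix (in which case we are immediately fine). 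So assume $A_O$ writes $r_2 \conc r_4$ with $r_4, r_2$ same direction, covering at most $|r_2| + |r_4|$ elements. Since $|r_2| < 3M$ and $r_4$, being a single maximal run starting from a buffer that is a subsequence of size at most... here I need $|r_2 \conc r_4| \le 5M$, i.e. $|r_4| \le 5M - |r_2|$; more carefully, $|r_4| < 3M$ (again by a \Cref{lem:3m}-type bound applied to the state after $r_2$, or because all but $M$ of $r_4$'s elements arrive during $r_2$ which has $<3M$ elements, giving $|r_4| < 3M + M$... need to be careful, but it is $O(M)$). The key numeric point: $|r_2 \conc r_4| \le 5M \le |r_1| + |r_3|$ where $r_3$ is $A$'s next run (also $\ge 5M$). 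Then by \Cref{lem:remaindersequence2}, since $|r_1 \conc r_3| \ge |r_2 \conc r_4|$ and $r_1 \conc r_3 \conc (\text{one more run})$ covers $r_2 \conc r_4$, the unwritten-element sequence of $A$ after three runs is a subsequence of $A_O$'s after two runs --- but this spends $A$ three runs against $A_O$'s two, which is the wrong direction for the invariant. So the cleaner framing is via \Cref{thm:template}: partition $A$'s output so that in each partition $A$ writes $x_i$ runs covering the first $y_i$ runs of optimal, and show $x_i \le y_i$ always; i.e. one $5M$-run of $A$ covers at least one run of optimal, and if optimal's run is short enough that it needs two runs to match one of $A$'s, we bundle accordingly.

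I expect the main obstacle to be pinning down the exact constant: one must verify that a single maximal run of length $\ge 5M$ written by $A$ always covers \emph{at least one} full run of a proper optimal output on the same unwritten-element sequence, and moreover that after writing it, $A$'s remaining sequence is a subsequence of optimal's remaining sequence after its corresponding run(s). The bound $5M$ presumably arises as (roughly) $3M$ for the worst-case length of the ``other'' run (\Cref{lem:pruning}/\Cref{lem:3m}) plus slack of $2M$ to ensure covering continues to compound; the delicate part is handling the case where optimal's run is the \emph{longer} $r_1$ while $A$... no, $A$ also takes $r_1$ there. The genuinely delicate case is $A$ takes $r_1$, optimal takes $r_2 \conc r_4$ (same direction), and one must show $|r_1 \conc r_3| \ge |r_2 \conc r_4|$ using $|r_1|, |r_3| \ge 5M$ and $|r_2 \conc r_4| < 5M$ --- for which I would bound $|r_2| < 3M$ by \Cref{lem:3m} and $|r_4| < 2M$ by noting that after $r_2$ (length $< 3M$), at most $M$ of $r_4$'s elements were already in the post-$r_2$ buffer and the rest arrived during $r_2$... this needs a small dedicated sub-argument and is where I would spend the most care. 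Once that inequality holds, \Cref{lem:remaindersequence2} and \Cref{thm:template} with $\beta = 1$ close the argument.
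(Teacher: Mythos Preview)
Your overall shape is right --- reduce to showing that at each decision point the long run $r_1$ (with $|r_1|\ge 5M$) is an optimal prefix, which by \Cref{lem:pruning} amounts to proving $|r_1|\ge |r_2\conc r_4|$ where $r_4$ has the same direction as $r_2$. But the step where you try to bound $|r_4|$ on its own is a genuine gap. Your sketch (``at most $M$ of $r_4$'s elements were already in the post-$r_2$ buffer and the rest arrived during $r_2$'') is wrong: the elements of $r_4$ not in the post-$r_2$ buffer arrive \emph{during $r_4$}, not during $r_2$, so nothing here caps $|r_4|$. Indeed $r_4$ by itself has no a~priori bound --- after $r_2$ ends, the buffer is reset and $r_4$ may ingest arbitrarily many new elements in its own direction. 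So neither $|r_4|<2M$ nor $|r_4|<3M$ follows, and the decomposition $5M = 3M + 2M$ is not how the constant arises.

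What actually works (and what the paper does) is a direct two-run comparison in the style of the proof of \Cref{lem:3m}: track elements simultaneously as $S_1$ writes $r_1$ and $S_2$ writes $r_2\conc r_3$ (their $r_3$ is your $r_4$), decomposing $r_1$ into pieces whose sizes are bounded by what must sit in $S_2$'s buffer. This yields $|r_1|\le 4M + |f|$ for a set $f$ of late-arriving elements of $r_1$ that $S_2$ cannot write; hence $|r_1|\ge 5M$ forces $|f|\ge M$, which saturates $S_2$'s buffer and terminates $r_3$ before $r_1$ ends, giving $|r_1|\ge |r_2\conc r_3|$. Then \Cref{lem:remaindersequence2} makes $r_1$ an optimal prefix, and iterating over decision points finishes. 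So the constant is $4M$ (from the structural bookkeeping) plus $M$ (to fill the opponent's buffer), and you need the coupled buffer-tracking argument rather than a separate bound on the second run.
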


Thus, we get our required linear time exact offline algorithm.

\begin{theorem}
The greedy offline algorithm, i.e., picking the longer run at each decision point, is optimal on a $5$-nearly-sorted input that contain no duplicates.
The running time of the algorithm is $O(N)$. 
\end{theorem}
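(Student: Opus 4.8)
The theorem combines two facts: correctness and running time. For correctness, the plan is to invoke \Cref{lem:5m}: it suffices to show that greedy, which picks the longer of the two maximal runs at each decision point, produces runs of length at least $5M$ on a $5$-nearly-sorted input. By definition, a $5$-nearly-sorted input has \emph{some} proper optimal algorithm whose runs all have length at least $5M$; in particular, at every decision point the unwritten-element sequence admits an initial maximal run (in one of the two directions) of length at least $5M$ — this is the first run that optimal strategy would write next, starting from that point. Here I would want a small argument: at a decision point of greedy, the unwritten-element sequence of greedy is a subsequence of what optimal had at its corresponding earlier decision point (using \Cref{obs:remaindersequence}, since we can show inductively that greedy has always written a prefix of optimal's output, or more carefully using \Cref{lem:remaindersequence2} and \Cref{lem:maximalcover} to track that greedy's longer-run choices keep it ``ahead''). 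Actually the cleanest route: show by induction on decision points that greedy's unwritten-element sequence is always a subsequence of the unwritten-element sequence of the witnessing optimal algorithm at the decision point where optimal has written the same number of runs — then by \Cref{lem:maximalcover} the maximal run of greedy in optimal's chosen direction covers optimal's next run, which has length $\geq 5M$, so greedy's chosen (longer or equal) run also has length $\geq 5M$. With every greedy run $\geq 5M$, \Cref{lem:5m} gives optimality.

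\textbf{Handling the subsequence invariant.} The main subtlety is maintaining the invariant that greedy stays ``ahead of or even with'' optimal. At a decision point, optimal writes one run $r_O$ of length $\geq 5M$; greedy writes the longer of $r_1, r_2$, say $r_1$ with $|r_1| \geq |r_2| \geq$ (length of the other maximal run) and $|r_1| \geq |r_O|$ since $r_O$ is one of the two maximal runs. If greedy picks the same direction as optimal, then greedy's run \emph{is} $r_O$ (maximal runs are unique) and the invariant is trivially preserved. If greedy picks the opposite direction, then $|r_1| \geq |r_O|$, and by \Cref{lem:remaindersequence2} (with $k=\ell=1$, $p_1 = r_1$, $q_1 = r_O$), the run $r_1$ alone covers $r_O$, and the unwritten-element sequence after greedy writes $r_1$ is a subsequence of the unwritten-element sequence after optimal writes $r_O$. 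Composing these subsequence relations across all decision points (and using \Cref{lem:triangle} to transfer the $5$-nearly-sortedness witness, or just re-deriving the long-run witness on the subsequence via the cover relations) closes the induction. I expect this bookkeeping — precisely stating what ``corresponding decision point'' means and why the witnessing optimal output can be assumed proper — to be the part requiring the most care, though each step is a direct application of an already-established lemma.

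\textbf{Running time.} For the $O(N)$ bound, I would argue that greedy need not literally simulate both maximal runs to completion at each decision point. By \Cref{lem:3m}, at any decision point at least one of the two candidate runs has length $< 3M$; since the input is $5$-nearly-sorted, the \emph{longer} run has length $\geq 5M$, so it suffices to advance the simulation of each direction by at most $3M$ steps to determine which is longer (exactly the simulation technique of \Cref{thm:optaug}). Thus determining the direction at a decision point costs $O(M)$ amortized time beyond the $O(1)$-per-element cost of actually writing the chosen maximal run: each element written belongs to a run of length $\geq 5M$, so the number of decision points is $O(N/M)$, and the $O(M)$ per-decision overhead sums to $O(N)$. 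Writing the maximal runs themselves is $O(N)$ total using a heap-like structure restricted to the $M$-size buffer (comparisons are free in the DAM model, but even counting them, maintaining the buffer costs $O(\log M)$ per element, which one can absorb into $O(N)$ by noting $M$ is a fixed parameter, or state the bound as $O(N)$ machine operations as the paper does elsewhere). Assembling these pieces gives the claimed $O(N)$ running time.
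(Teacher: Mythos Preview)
Your correctness argument has a genuine gap. You maintain the invariant that greedy's unwritten-element sequence $I_G$ is a \emph{subsequence} of the witness's unwritten-element sequence $I_O$, and then try to conclude that greedy's chosen run has length $\geq 5M$. But a subsequence relation is too weak for this. Concretely: \Cref{lem:maximalcover} compares two runs from the \emph{same} starting state, so it does not tell you that greedy's maximal run (on $I_G$) in the witness's direction covers the witness's run $r_O$ (on $I_O$); if $I_G \subsetneq I_O$, then $r_O$ may contain elements not even present in $I_G$, so no run of greedy can cover it, and there is no reason greedy's run in that direction reaches length $5M$. Your fallback via \Cref{lem:remaindersequence2} with $k=\ell=1$ is also misread: that lemma says $p_1\conc p_2$ covers $q_1$, not that $p_1$ alone does, so ``$r_1$ alone covers $r_O$'' is not what it gives you.

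The fix is to strengthen the invariant to \emph{equality}: greedy's state is identical to the witness's state at every decision point. This is maintained trivially using \Cref{lem:3m}, which you already invoke for the running time but not here. At a shared decision point, the witness's next run has length $\geq 5M$; by \Cref{lem:3m} the run in the other direction has length $<3M$, so greedy necessarily picks the witness's direction and writes the very same run. Inductively, greedy coincides with the witnessing optimal algorithm, so it is optimal and every run it writes has length $\geq 5M$ (which, if you like, lets you invoke \Cref{lem:5m}, though at that point optimality is already immediate). This is the argument the paper has in mind with its one-line ``thus'' after \Cref{lem:5m}. Your running-time analysis is essentially correct once this is in place; the bound on the number of decision points relies on every greedy run having length $\geq 5M$, which the equality invariant supplies.
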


\section{Additional Related Work}\label{sec:related}

\pparagraph{Replacement Selection} The classic algorithm for run generation is
replacement selection \cite{Goetz63}.  While replacement selection considers
only up runs, Knuth \cite{Knuth63} analyzed alternating up-down replacement
selection in 1963.  He showed that for uniformly random input, alternating
up-down replacement selection produces runs of expected length $3M/2$, compared
to $2M$ of the standard replacement selection \cite{Knuth98, Friend56,
Gassner67}. 

Recently, Martinez-Palau et al. \cite{Martinez-PalauDoLa10} introduce {\em
Two-way replacement selection} (2WRS), reviving the idea of up-down replacement
selection. The 2WRS algorithm maintains two heaps in memory for up and down
runs and heuristically decides in which heap each element must be placed. Their
simulations show that 2WRS performs significantly better on inputs with mixed
up-down, alternating up-down, and random sequences.

Replacement selection with a fixed-sized {\em reservoir} appears in
\cite{FrazerWo72,TingWa77}.  Larson \cite{Larson03} introduced batched
replacement selection, a cache-conscious replacement selection which works for
variable-length records.  Koltsidas, M{\"u}ller, and Viglas
\cite{KoltsidasMuVi08} study replacement selection for sorting hierarchical
data.

Improvements for the merge phase of external sorting have been considered in
\cite{ZhengLa96, ZhangLa98, Salzberg89, Estivill-CastroWo94, ChandramouliGo14},
but this is beyond the scope of this paper.

\pparagraph{Reordering Buffer Management} Run generation problem is reminiscent
of the \emph{buffer reordering problem} (also known as the \emph{sorting buffer
problem}), introduced by R{\"a}cke et al.\ \cite{RackeSoWe02}.  It consists of
a sequence of $n$ elements that arrive over time, each having a certain color.
A buffer, that can store up to $k$ elements, is used to rearrange them.  When
the buffer becomes full, an element must be output. A cost is incurred every
time an element is output that has a color different from the previous element
in the output sequence.  The goal is to design a scheduling strategy for the
order in which elements must be output, so as to minimize the total number of
color changes. The buffer reordering problem models a number of important
problems in manufacturing processes and network routing and has been
extensively studied, both in the online and offline case \cite{ImMo14,
Avigdor-ElgrabliRaYu13, Avigdor-ElgrabliRa13-arXiv, Bar-YehudaLa07,
EnglertWe05, Avigdor-ElgrabliRa13, ChanMeSi11, AsahiroKa12}.  The offline
version of the buffer reordering problem is NP hard \cite{ChanMeSi11}, while
the complexity of our problem remains unresolved.

\pparagraph{Patience Sort and Longest Increasing Subsequence} An old sorting
technique used to sort decks of playing cards, {\em Patience Sort}
\cite{Mallows63} has two phases---the creation of sorted \emph{piles} or runs,
and the merging of these runs.  The elements arrive one at a time and each one
can be added to an existing run or starts a new run of its own.  Unlike this
paper, a legal run only consists of elements decreasing in value, and patience
sort can form any number of parallel runs.  The goal is to minimize the number
of runs.  The greedy strategy of placing an element to the left-most legal run
is optimal.  Moreover, the minimum number of such runs is the length of the
longest increasing subsequence of the input \cite{AldousDi99}.  Patience sort
has been studied in the streaming model \cite{GopalanJaKr07}.

Similar to Replacement Selection, Patience Sort
is able to leverage partially sorted input data.
Chandramouli and Goldstein \cite{ChandramouliGo14} present improvements to 
patience sort, and combine it with replacement selection to achieve practical speed up.

\pparagraph{Adaptive Sorting Algorithms} 
Python's inbuilt sorting algorithm,
{\em Timsort} \cite{wiki:timsort} works by finding contiguous runs of
increasing or decreasing value during the run generation phase.  
External memory sorting for well-ordered or ``partially sorted'' data has been studied
by Liu et al.\ \cite{LiuHeCh11}. They minimize the I/O cost of run generation
phase by finding ``naturally occurring runs''. See \cite{Estivill-CastroVlWo92}
for a survey on adaptive sorting algorithms.


\section{Conclusion and Open Problems}\label{sec:conclusion}
 
In this paper, we present an in-depth analysis of algorithms for run
generation.  We establish that considering both up and down runs can
substantially reduce the number of runs in an external sort.  The notion  of
up-down replacement selection has received relatively little attention since
Knuth's negative result~\cite{Knuth63}, until its promise was acknowledged by
the experimental work of Martinez-Palau et al.~\cite{Martinez-PalauDoLa10}.

The results in our paper complement the findings of 
Knuth~\cite{Knuth63} and
Martinez-Palau et
al.~\cite{Martinez-PalauDoLa10}.  In particular, strict up-down alternation
being the best possible strategy explains why heuristics for
up-down run-generation 
can lead to better performance in some cases. 
Moreover, our constant-factor
competitive ratios with resource augmentation and
lookahead may guide followup heuristics and practical speed-ups.

We conclude with open problems.

Can randomization help circumvent the lower bound of~$2$ on the competitive
ratio of online algorithms (without resource augmentation)?
We know that no randomized online algorithm can have a competitive ratio better
than $3/2$, but there is still a gap. 
What is the performance of the greedy offline algorithm compared to
optimal? We show that greedy can as bad as $3/2$ times optimal.
Is there a matching upper bound?  
Can we design a polynomial, exact, algorithm for the offline run-generation
problem? We find it intriguing that our attempts at an exact dynamic program 
requires maintaining too many buffer states 
to run in polynomial time.


\section{Acknowledgments}
\label{sec:ack}
We gratefully acknowledge 
Goetz Graefe and 
Harumi Kuno for introducing us to this problem and for 
their advice. 
This research was supported by
NSF grants
CCF~1114809, 
CCF~1217708,  
IIS~1247726, 
IIS~1251137,  
CNS~1408695,  
CCF~1439084,  
and by Sandia National Laboratories.

\bibliographystyle{plain}

\newpage
\appendix

\section{Appendix: Omitted proofs}\label{sec:omittedproofs}

\begin{proofof}{\Cref{lem:maximalcover}}
Without loss of generality, assume $r$ and $r'$ are increasing runs. 
Consider any time step $t$ when elements from both $r$ and $r'$ are being written. 
Let $B_t$ and $B_t'$ be the buffer of $r$ and $r'$ at time step $t$ respectively; let $C_t$ be the set of elements in $B_t$ that are eventually written to $r$, and $C_t'$ be the set of elements in $B'_t$ that are eventually written to $r'$.
We prove inductively that $C_t'\subseteq C_t$.
This implies that $r$ covers $r'$. 
The base case is true as $r$ and $r'$ start with the same buffer. Since we have $C'_t\subseteq C_t$,
we must show that (a) the element $z$ written to $r$ is not in $C'_{t+1}$ and 
(b) the element $e$ read into $C'_{t+1}$ must also be in $C_{t+1}$.  

Consider the elements $z'$ and $z$ written by $r'$ and $r$, respectively, at time $t+1$.  
We must have $z' \geq z$; thus either $z = z'$, or $z$ is never written to $r'$ (either way it is not in $C'_{t+1}$).

Since $e$ is in $C'_{t+1}$, it is eventually written by $r'$; thus $e \geq z'$.  Thus $e \geq z$, but that means $e$ is eventually written by $r$.  Since $e$ was just read it is in $B_{t+1}$; thus $e\in C_{t+1}$.
\end{proofof}

\begin{obsx}\label{obs:lower}
    If $A$ has just written an element $e$, and is writing a down (up) run, then $A$ cannot write any element larger (smaller) than $e$ in the same run.  Similarly, if $A$ has just written $e$, then $A$ cannot write both an element larger than $e$ and an element smaller than $e$ in the same run.
\end{obsx}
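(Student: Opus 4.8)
The plan is to derive this directly from the definition of a run as a sorted (non-decreasing) or reverse-sorted (non-increasing) sequence, together with the fact that ``$A$ has just written $e$'' pins $e$ as the \emph{most recently} written element, so every element written afterwards in the same run occupies a position \emph{after} $e$ in that run's sequence. The whole observation is essentially a restatement of monotonicity, so I expect the argument to be short; the only care needed is in reading ``in the same run'' as ``among the elements written after $e$ within the current run'' — it is false for elements written before $e$, since in the up run $1,2,3,4$ the element just written might be $3$, with $1,2$ already written before it — and in handling the case where $e$ starts a fresh run so the run's direction is not yet fixed.

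For the first part I would argue directly: if the current run is a down run it is a non-increasing sequence, so any element $f$ written after $e$ satisfies $f \le e$, hence $f$ cannot be larger than $e$; the up case is the mirror image, with $f \ge e$. For the second part I would argue by contradiction: suppose that after writing $e$ the algorithm writes, in the same run, both some $a > e$ and some $b < e$. Without loss of generality $a$ is written before $b$ (the other case is symmetric, swapping the roles of ``non-decreasing'' and ``non-increasing''). Then $e$ precedes $a$ in the run and $e < a$, so the run cannot be non-increasing; being a run, it must therefore be non-decreasing; but then $b$, written after $a$, must satisfy $b \ge a > e$, contradicting $b < e$. This also absorbs the case where $e$ began a fresh run: as soon as a strictly larger element is appended, the run's direction is forced to increasing.

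I do not anticipate a genuine obstacle here — the statement is immediate from the definitions. The only points demanding (very mild) attention are the precise scope of ``in the same run'' and the possibility of duplicate values; the strict inequalities $a > e$ and $b < e$ make the latter harmless, since a strict ascent (resp.\ descent) already forces the run's type.
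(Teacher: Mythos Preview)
Your argument is correct and is essentially what the paper relies on; the paper states this as an observation without proof, treating it as immediate from the definition of a run as a sorted or reverse-sorted sequence. Your write-up simply makes explicit the monotonicity reasoning the paper leaves implicit.
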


\begin{proofof}{\Cref{thm:2lower}}

Let $I_1$, the first $M$ elements of the input, be $I_1 = 1, 2, \ldots, M$.  
We divide the rest of the input into segments of size $M$.  Let the $(t+1)$st such segment be $I_{t+1}$.  Then,
$$I_{t+1} = (1 + tM \nearrow M + tM) \text{ or }
        (-(1 + tM)\searrow -(M + tM)) \ .$$

 Call this a positive segment and a negative segment respectively.  At time $M(t-1) + 1$ we decide whether $I_{t+1}$ is a positive or a negative segment based on $A$.   
 
 Specifically, we choose $I_{t+1}$ using either the direction of the run $A$ is writing, or the value of the most recent element written.  
 If $A$ is writing a down run, $I_{t+1}$ is a positive segment; if $A$ is writing an up run, $I_{t+1}$ is a negative segment.  It may be that $A$ has only written one element of a run (so $A$ could turn this into either an up run or a down run). If this element was the smallest element in the buffer of $A$, $I_{t+1}$ is a negative segment.  Otherwise, $I_{t+1}$ is a positive segment.

 First we show that $A$ must write at least one new run for each $I_{t}$; thus $R(S) \geq t$.  At least one run is required for $A$ to write $I_1$, so for the remainder of the proof we assume $t > 1$.  
 Consider time $M(t-2) + 1$, when $I_t$ begins.   We assume that $I_t$ is a positive segment---a mirroring argument works when $I_t$ is a negative segment.  Furthermore, note that the elements of $I_t$ are the largest in the instance so far.  
 
 There are two cases: $A$ is currently writing a down run, or the initial element of a new run.

Case 1: Algorithm $A$ is currently writing a down run.  Then the elements of $I_t$ must be larger than any element in $A$'s down run.  Thus $A$ must use another run to write the elements of $I_t$ by Observation \ref{obs:lower}.

     Case 2: Algorithm $A$ is writing the initial element of a new run.  By construction, the element written is not the smallest element in $A$'s buffer, but is smaller than all elements in $I_t$. 
     Then $A$ must spend one run to write the smallest element in its buffer, and another to write $I_t$.  Thus, $I_t$ causes $A$ to write a run in addition to its current run by Observation \ref{obs:lower}.

 On the other hand, an offline algorithm can write $I_{2i}$ and $I_{2i + 1}$ in one run.  Assume that $I_{2i}$ is a positive segment---a mirroring argument works when $I_{2i}$ is a negative segment.  If $I_{2i + 1}$ is positive, both can be written using an up run.  If $I_{2i+1}$ is negative, both can be written using a down run.  Thus $\OPT$ is no more than $\lceil t/2\rceil$.
\end{proofof}

\begin{proofof}{\Cref{thm:randlower}}
Our lower bound uses the same basic principles as \Cref{thm:2lower}.  We first show the lower bound with some repeated elements, then show how to perturb the elements to avoid repetitions.  We generate a randomized input and show that any deterministic algorithm cannot perform better than $3/2$ times $\OPT$ on that input.  The theorem is then proven by Yao's minimax principle. Yao's minimax principle  states that the best expected performance of a randomized algorithm is at least as large as the expected performance of the best \emph{deterministic} algorithm over a (known) distribution of inputs. (See, e.g.,  \cite{MotwaniRa10} for details.)

As in \Cref{thm:2lower}, we divide the input into segments of size $4M$.  Call these segments $I_t$ for $t = 1, 2, \ldots, \lfloor N/4M\rfloor$.  
Note that this input is randomized: for each $t$, we pick one of two inputs, each with probability $1/2$.
We choose either

\vspace{2pt}
\begin{tabular}{lcl}       
$I_t^1 = (1\nearrow M)$ & & $I_t^1 = (1\nearrow M)$ \\

$I_t^2 = (2M, \searrow M+1)$ & &  $I_t^2 = (-2M, \nearrow -M-1)$ \\

$I_t^3 = (3M\searrow 2M+1)$ & or & $I_t^3 = (-3M\nearrow -2M-1)$ \\

$I_t^4 = (4M\searrow 3M+1)$ & & $I_t^4 = (-4M\nearrow -3M-1).$\\

(a positive segment) & & \noindent (a negative segment)
\end{tabular}
\vspace{2pt}
 
Let $I_t = I_t^1\conc I_t^2\conc I_t^3\conc I_t^4$.  A positive or negative segment is chosen randomly for each $I_t$ with probability $1/2$.

The optimal algorithm spends no more than one run per $I_t$, using an up run for a positive segment or a down run for a negative segment. 

We show that any deterministic algorithm requires at least one new run to write $I_{t-1}^4$ and $I_{t}^1$ for $t > 1$. Further analysis shows that with probability $1/2$, any deterministic algorithm requires at least one run to write the remainder of $I_{t}^1$, $I_{t}^2$, and $I_t^3$.  Note that one run is also required to write $I_{1}^1$; summing, this gives a total expected cost of $(3/2)\OPT$.

Consider a segment $I_t$; $t > 1$.  Once all of $I_{t-1}^4$ has been read into its buffer, at least one element of $I_{t-1}^3$ has been written.  Once all of $I_{t}^1$ has been read into its buffer, at least one element of $I_{t-1}^4$ has been written.  Finally, once all of $I_{t}^2$ has been read into its buffer, at least one element $x$
of $I_t^1$ has been written.  Applying Observation \ref{obs:lower}, at least one new run is required to write these three elements.

Now we show that with probability $1/2$, an additional run is required to write $I_{t}^2$.  Let $x$ be the first element written by $I_{t}^1$ (thus, the cost of writing $x$ itself was handled in the above case---we show when an additional run is required).  
Note that the algorithm must choose an $x$ before it sees any element of $I_t^2$ (so it does now know if $I_t$ is positive or negative).

Let $x \neq 1$ and $I_t$ be a positive segment.
By Observation \ref{obs:lower}, an additional run is required to write both $1$ and any element of $I_t^2$.
If $1$ is not written, all of $I_t^2$ cannot be stored in the buffer---but then, $I_t^2$ and $I_t^3$ cannot be written using one run.
Similarly, let $x = 1$ and $I_t$ be a negative segment. 
By Observation \ref{obs:lower}, an additional run is required to write both $M$ and any element of $I_t^2$; 
otherwise $I_t^2$ and $I_t^3$ require an additional run to be written. 

Thus any deterministic algorithm cannot perform better than a $3/2$-approximation.   Applying Yao's minimax principle proves the theorem.

Now we perturb the input to avoid duplicate elements.  We multiply each element by $\lfloor N/4M\rfloor$, and add $t$ to each element of $I_t$.  In other words, we use a new segment 
$$I'_t = (I_t\otimes \lfloor N/4M\rfloor) \oplus t.$$ 
Our arguments above only depended on the relative ordering of the elements, which is preserved by this perturbation.  For example, assume $I_t$ and $I_{t-1}$ are both positive segments.  Then all elements of $I_{t}^1$ are less than all elements of $I_{t-1}^4$, and all elements of $I_t^2$ are greater than all elements of $I_{t}^1$.
\end{proofof}

\begin{proofof}{\Cref{thm:greedy_gaunratee}} 
We will build $S$ constructively.  At each time $t$ where greedy
chooses an up or down run, we show that one of its choices leads to a run of
length at least $5M/4$. Since the greedy algorithm always picks the longer
run at each decision time step, the run with length less than $5M/4$ can never be part of
its output.

Consider any time step $t$ where $t < N - 5M/4$.  If $t$ is larger than this
value, the final run will have length $n-t$. 
Let the contents of buffer at time $t$ be $\sigma_1, \ldots,
\sigma_M$.   Let $I'$
be the sequence
of $\lfloor M/2\rfloor$ elements of $I$ arriving after $t$.  

    Consider the run starting at $\sigma_M$ and continuing downwards; call this down run
$r_1$. Let $r_2$ be the up run starting at $\sigma_1$ and continuing upwards. 
Any element of $I'$ 
less than $\sigma_{\lceil M/2\rceil}$ will be (eventually) written out to $r_1$; any that are
greater than $\sigma_{\lfloor M/2\rfloor}$ will be written out to $r_2$.  Every
number must fall into one of these categories, so there must be at least
$\lceil\lfloor M/2\rfloor/2\rceil$ numbers added to the larger run.  Each run at least includes 
the elements already in the buffer, so the larger run has length at least $M + \lceil\lfloor M/2\rfloor/2\rceil$. 

The last $5M/4$ elements can be handled by greedy in at most 2 runs (since each trivially has length at least $M$).  Thus the above applies to all but the last two runs of greedy.
\end{proofof}

\begin{proofof}{\Cref{lem:3m}} 
Let $S_1$ be an output that writes $r_1$ initially and $S_2$ be an output that
writes $r_2$ initially.  Without loss of generality, suppose that $r_1$ is
increasing and $r_2$ is decreasing.  Let $r_1 = r_1(1),r_1(2), \ldots, r_1(k)$ and $r_2
=r_2(1), r_2(2), \ldots, r_2(\ell)$.  The idea of the proof is to split these runs into
two phases (a) elements of $r_1$ are smaller than the corresponding elements of
$r_2$ and (b) when the elements of $r_1$ are greater than or equal to those of
$r_2$.  During each of these two phases, we use the fact that incoming elements
written by $S_1$ have to be in the buffer of $S_2$ (and vice versa) to bound
their length.   We assume that both runs write exactly one element for each element they read in; this cannot affect the length of the runs.

Let $B_0$ be the original buffer, i.e., the first $M$ elements of $I$.  Let $i$
be the transition point between the two phases mentioned above; in other words,
$r_1(i+1) \geq r_2(i+1)$ but $r_1(i) < r_2(i)$.

Divide $r_1$ into $s_1$ and $t_1$, where $s_1$ is the first $i$ elements of $r_1$, and $t_1$ is the remainder of $r_1$.  
We further divide $s_1$ into $s_1^B$, the elements of $s_1$ that are in $B_0$, and $s_1^N$, the elements of $s_1$ that are not in $B_0$.  Let $t_1^B$ be the elements of $t_1$ that are in $B_0$.  Let $f_1$ be the set of elements in $r_1$ that are read in after $x_{i+1}$ is written.
Let $u_1$ be the set of elements not in $r_1$ that are read in before $x_{i+1}$ 
is written.  We define the corresponding sets for $r_2$ as well: $s_2^B, s_2^N, t_2^B, f_2, r_2,$ and $u_2$.

We can bound the size of several of these sets by $M$.
Note that $s_1$ cannot have more than $M$ elements, since all must be stored in the buffer while $s_2$ is being written.  Thus $|s_1^N| + |s_1^B| \leq M$. Similarly, $|s_2^N| + |s_2^B| \leq M$. We must also have $|u_1|\leq M$ and $|u_2|\leq M$.  Finally, consider $s_2^N \cup t_1^B$.  Any element in $s_2^N$ must be read before time step $i$. Since $s_1$ is disjoint from $s_2$ (by definition of $i$), all elements of $s_2^N$ must be in $S_1$'s buffer at time step $i$.  All elements of $t_1^B$ must also be in $S_1$'s buffer at time step $i$, so $|s_2^N| + |t_1^B|\leq M$.

Starting from time step $i+1$, any new element $e$ that is read in cannot be in both $r_1$ and $r_2$.  
This means that all elements of $f_1$ must be in the buffer of $S_2$ until $r_2$ ends, and all elements of $f_2$ must be in the buffer of $S_1$ until $r_1$ ends. 
On the other hand, all elements of $u_2$ must eventually be a part of $r_1$, and similarly for $u_1$ and $r_2$.

To begin, we show a weaker version of the lemma for runs of length $4M$.
We have $|r_1| \leq  
(|s_1^N| + |s_1^B|)  + 
(|u_2|) +
(|s_2^N|+ |t_1^B|) + 
|f_1| \leq 3M + |f_1|$. 
Then if $|r_1| \geq 4M$, then $|f_1| \geq M$.  Since all elements of $f_1$ must be stored in the buffer of $S_2$ until $r_2$ ends, $r_2$ must end when the $M$th element of $f_1$ is read in. Then
we must have $|r_2| < |r_1|$.

We have $|f_1| \geq M - |u_2|$; otherwise, 
$|r_1| \leq (|s_1^N| + |s_1^B|) + (|s_2^N| +  |t_1^B|) + (|u_2| + |f_1|) < 3M$. 
Consider the first $M - |u_2| - 1$ elements read in after $i$ that are eventually written to $r_1$ (this is a prefix of $f_1$), call them $f'_1$. 
Since $|f_1|\geq M - |u_2|$, there must be another element $e\in f_1$ that is read after all elements of $f_1'$.  Note $e\in r_1$.  Let $t$ be the time when $e$ arrives.

At $t$, the buffer of $S_2$ must contain all elements of $u_2$, as well as all elements of $f'_1$ and $e$.  
The buffer of $S_2$ is then full of elements that cannot be written in $r_2$.
Hence, $S_2$ is forced to start a new run at time $t < |r_1|$, so $|r_2| < |r_1|$.  Then we must have $|f_2| + |u_1| < M$, none of the elements in these sets are in $r_1$, and must be stored in $S_1$'s buffer until $r_1$ ends (which is after $r_2$ ends).
Finally, we have, 
\[|r_2| \leq (|s_2^N| + |s_2^B|) + (|s_1^N| + |t_2^B|) + (|u_1|  + |f_2|) < 3M \ ,\]
as required.

\end{proofof}

\begin{proofof}{\Cref{thm:optaug}}

Algorithm $A$ will simulate the maximal up run, $r_1$, and maximal down run, $r_2$, to see which is longer, but it does not actually need to write any elements during this simulation.
By \Cref{lem:3m}, if we find that one run has length at least $3M$, it must be the longer run.

We now describe exactly how to simulate a run $r$ using $4M$ space without writing any elements.
Algorithm $A$ simulates the run one step at a time.  We describe the actions and buffer of an algorithm with $M$-size buffer writing $r$ as the \defn{simulated algorithm}.
Without loss of generality, assume $r$ is an up run.

Assume that all elements are stored in the buffer in the order they arrive.  
Thus, after $t$ elements have been written, 
the first $M+t$ elements of the buffer are exactly the elements the simulated algorithm has read from the input up to time $t$.
Of these elements, $M$ must be in the buffer of the simulated algorithm, while the other $t$ will have been written to $r$; however, $A$ does not explicitly keep track of which elements are in the buffer.

The algorithm $A$ keeps track of $\ell$, the last element written to $r$, 
because at each $t$, all of the first $M+t$ elements larger than $\ell$ are: (a) in the simulated algorithm's buffer at time $t$ and (b) will be written to $r$ at a later point.  Thus, once no item in the first $M+t$ elements is larger than $\ell$, $r$ must end.
At each time step, the smallest element larger than $\ell$ is written to $r$.  

Specifically, at time step $t$, $A$ finds the smallest element $e$ in the first $M+t$ elements of the buffer that is larger than $\ell$. 
This is the next element of $r$. 
Thus in the next time step, $A$ updates $\ell \gets e$, and repeats.  If no such $e$ can be found, no element in $M+t$ (and thus no element in the buffer) can continue the run, so the run ends at time $t$.  

The last time $A$ can update $\ell$ is when the simulated algorithm has seen all elements in the buffer; in other words, $t = 3M$.  By \Cref{lem:3m}, this is sufficient to determine which run is longer.

The algorithm now knows which run is longer; without loss of generality, assume $|r_1| \geq |r_2|$. Then the algorithm writes a maximal run $r$ using its $4M$-size buffer in the direction of $r_1$. 
Run $r$ is guaranteed to contain all elements of $r_1$ by \Cref{lem:maximalcover}.  Since $r_2$ has length less than $3M$ by \Cref{lem:3m}, all of its elements must already be in the $4M$-size buffer.  Thus they are written during $r$ because a maximal run always writes its buffer contents. The first initial run of a proper optimal algorithm on the unwritten-element sequence has to be either $r_1$ or $r_2$. Since $r$ covers both $r_1$ and $r_2$, by \Cref{thm:template} with $\beta =1$, $A$ never writes more runs than an optimal algorithm with an $M$-size buffer.
\end{proofof}

\begin{lemx} \label{lem:comparable_buffer}
Consider two algorithms $A_1$ and $A_2$ that have the same remaining input $I$ when they both start writing a new maximal run, called $r_1$ and $r_2$. Let their buffers at this point be $B_1$ and $B_2$ (that may not be full) respectively, and assume $\max(B_1 \setminus B_2) \leq \min (B_2 \setminus B_1)$.
If $r_1$ and $r_2$ are increasing 
then all elements in $I$ written to $r_2$ are also written to $r_1$.  Similarly, if $r_1$ and $r_2$ are decreasing, all elements in $I$ written to $r_1$ are also written to $r_2$.
\end{lemx}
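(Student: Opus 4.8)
The plan is to prove the increasing case; the decreasing case then follows by negating every element (which turns increasing runs into decreasing runs), since this, together with swapping the names of $A_1$ and $A_2$, preserves the hypothesis $\max(B_1\setminus B_2)\le\min(B_2\setminus B_1)$. So assume $r_1,r_2$ are increasing, and index time so that $r_i(s)$ denotes the element $A_i$ writes at step $s$ of this run, with $r_i(1)=\min B_i$. Both algorithms ingest $I$ at one element per step, so at every step they have seen exactly the same prefix of $I$ and the only discrepancy between their buffers comes from $B_1$ versus $B_2$. I would work under the assumption that both buffers have size $M$ (the case relevant to all applications; the slightly non-full buffers arising in the proof of \Cref{lem:pruning} have a clean low/high structure that makes the adaptation routine). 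Then the hypothesis forces $\min B_1\le\min B_2$, i.e.\ $r_1(1)\le r_2(1)$.

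The core is the invariant: for every step $s$ at which $r_2(s)$ is defined, $r_1(s)$ is defined and $r_1(s)\le r_2(s)$. I would prove it by induction on $s$, the base case being the inequality above. For the inductive step put $v:=r_2(s+1)$, the smallest element $A_2$ has seen that exceeds $r_2(s)$. If $v$ is not an original element of $B_2$ absent from $B_1$ (e.g.\ if $v$ is fresh), then $v$ lies in $A_1$'s current buffer, and since $v>r_2(s)\ge r_1(s)$ by induction $A_1$ may legally write $v$, giving $r_1(s+1)\le v$. The remaining case, $v\in B_2\setminus B_1$, is the crux. There I would argue: every element of $B_1\setminus B_2$ that exceeds $r_1(s)$ would lie in $(r_1(s),v]$ (it is $\le\max(B_1\setminus B_2)\le v$) and would be available to $A_1$ -- so, if $A_1$'s buffer has nothing in $(r_1(s),v]$, all of $B_1\setminus B_2$ is at most $r_1(s)$ and hence already written by $A_1$. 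A counting argument then derives a contradiction: $A_1$ has written exactly $s$ elements, including all of $B_1\setminus B_2$ (which $A_2$ can never write), so $A_2$ has written at least $|B_1\setminus B_2|=|B_2\setminus B_1|$ elements that $A_1$ has not; and any such element is forced to lie in $B_2\setminus B_1$ (a fresh element, a shared original element, or an element still pending in $A_1$'s buffer would each yield a contradiction, the ``arrived-too-late'' sub-cases using the induction hypothesis $r_1(s')\le r_2(s')$). Hence the set of elements written by $A_2$ but not $A_1$ is exactly $B_2\setminus B_1$, which contains $v$ -- contradicting that $A_2$ writes $v$ only at step $s+1$. This gives $r_1(s+1)\le v$ and keeps $r_1$ alive, completing the induction.

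Finally, with the invariant in hand: let $e\in I$ be any (fresh) element written to $r_2$, say $e=r_2(j)$. Since $r_2(1)=\min B_2\notin I$, we have $j\ge 2$, hence $e>r_2(j-1)$. Let $s_e<j$ be the step at which $e$ is ingested; by the invariant, $r_1(s_e)\le r_2(s_e)\le r_2(j-1)<e$, so from step $s_e$ onward $e$ is in $A_1$'s buffer strictly above $A_1$'s last-written element, and a maximal run cannot terminate while such an element is present -- therefore $e\in r_1$. This proves that every element of $I$ written to $r_2$ is written to $r_1$. The main obstacle throughout is the counting step in the one case of the inductive step described above; everything else is routine.
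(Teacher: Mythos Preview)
Your argument is correct (for equal-size buffers, which you flag), but the paper takes a genuinely different and much shorter route. Instead of maintaining the bare invariant $r_1(s)\le r_2(s)$, the paper strengthens the induction hypothesis to
\[
\max\bigl(C_1(t)\setminus C_2(t)\bigr)\ \le\ \min\bigl(C_2(t)\setminus C_1(t)\bigr),
\]
where $C_i(t)$ is the set of elements currently in $A_i$'s buffer that exceed $r_i(t)$ (equivalently, those that will still be written to $r_i$). At $t=0$ this is exactly the hypothesis on $B_1,B_2$, and the induction step is essentially a two-line case split on whether the freshly-read element $e$ lands in $C_1(t{+}1)$, $C_2(t{+}1)$, both, or neither---no counting is needed. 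The inequality $r_1(t{+}1)=\min C_1(t)\le \min C_2(t)=r_2(t{+}1)$ then falls out as a corollary, and the conclusion about fresh elements follows just as in your final paragraph.

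What each approach buys: the paper's stronger invariant makes the inductive step trivial and keeps the proof to a few lines; your weaker invariant is the more natural target but forces the hard case $v\in B_2\setminus B_1$, where you need the pigeonhole/counting argument on $W_2\setminus W_1$ and the ``arrived-too-late'' sub-case analysis. Your route is self-contained and transparent about where equal buffer sizes are used (the equality $|B_1\setminus B_2|=|B_2\setminus B_1|$); the paper's route hides the same dependence in the step ``invariant $\Rightarrow r_1(t{+}1)\le r_2(t{+}1)$'', which can break when $C_1(t)\subsetneq C_2(t)$. So the two proofs share the same soft spot on non-full buffers, but the paper's choice of invariant is the slicker trick and is worth internalizing.
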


\begin{proof}
It suffices to prove the first case where $r_1$ and $r_2$ are both increasing as the other case can be proven similarly. After $r_1(t)$ and $r_2(t)$ were written, let $C_1(t)$ and $C_2(t)$ be the set of elements in the buffers of $A_1$ and $A_2$ that will be written in $r_1$ and $r_2$ at some point in the future, i.e., the set of elements that are at least as large as $r_1(t)$ or $r_2(t)$ respectively.

It is easy to prove by induction that the invariant $$\max(C_1(t) \setminus C_2(t)) \leq \min (C_2 (t)\setminus C_1(t))$$ always holds. We note that this invariant implies $r_1(t+1) \leq r_2(t+1)$. Therefore, if this invariant is true for all $t < \min \{ |r_1| , |r_2| \}$, any incoming element $e \in I$ satisfies $e \in r_2 \Rightarrow e \in r_1$ as required. We now prove the invariant:

The base case is true since $C_1(0) = B_1, C_2(0)=B_2$. Suppose the invariant holds for $t$, then $r_1(t+1) \leq r_2(t+1)$ and a new element $e$ is read in.   

Case 1: if $e \in C_2(t+1) \Rightarrow e \geq r_2(t+1) \geq r_1(t+1) \Rightarrow e \in C_1(t+1)$. 

Case 2: if $e \in C_1(t+1)$ and $e \notin C_2(t+1)$, then $e < r_2(t) = \min(C_2(t)) \leq \min(C_2(t+1))$.

Hence, the invariant holds for $t+1$.

\end{proof}

\begin{obsx}
\label{obs:optimalprefix} 
On an input $I$, let $r_1 \conc \ldots \conc r_k$ be the first $k$ runs of an optimal output.
If $r'_1 \conc \ldots \conc r'_k$ be $k$ runs that cover $r_1 \conc \ldots \conc r_k$. Then,
$r'_1 \conc \ldots \conc r'_k$ are also the first $k$ runs of an optimal output.
\end{obsx}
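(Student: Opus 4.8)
The plan is to exhibit an explicit optimal output of $I$ whose first $k$ runs are $r'_1\conc\cdots\conc r'_k$, by grafting an optimal continuation onto this prefix and bounding the total number of runs by $\OPT(I)$.

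First I would fix an optimal algorithm $A_O$ on $I$ producing an output $S$ with $R(S)=\OPT(I)$ whose first $k$ runs are $r_1\conc\cdots\conc r_k$, and let $I_O$ be the unwritten-element sequence of $A_O$ immediately after these $k$ runs are written. Running $A_O$ from that point onward is a legal output on $I_O$ using at most $\OPT(I)-k$ runs, so $\OPT(I_O)\le\OPT(I)-k$. Next, let $A'$ be the algorithm whose first $k$ runs are $r'_1\conc\cdots\conc r'_k$, and let $I_A$ be its unwritten-element sequence after these runs. Since $r'_1\conc\cdots\conc r'_k$ covers $r_1\conc\cdots\conc r_k$, the algorithm $A'$ has, by the time it finishes $r'_k$, written out every element that $A_O$ wrote by the time it finished $r_k$; hence \Cref{obs:remaindersequence} shows that $I_A$ is a subsequence of $I_O$, and \Cref{lem:triangle} then gives $\OPT(I_A)\le\OPT(I_O)\le\OPT(I)-k$.

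Finally I would let $S'$ be the output that runs $A'$ for its first $k$ runs and then follows an optimal strategy on the remaining sequence $I_A$. This is a valid output on $I$, and it can be partitioned into at most $k+\OPT(I_A)\le\OPT(I)$ runs; since every output on $I$ needs at least $\OPT(I)$ runs, we get $R(S')=\OPT(I)$, so $S'$ is optimal and $r'_1\conc\cdots\conc r'_k$ is the prefix of an optimal output, as required.

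The only point requiring care is the application of \Cref{obs:remaindersequence}: one must verify that ``$r'_1\conc\cdots\conc r'_k$ covers $r_1\conc\cdots\conc r_k$'' indeed puts us in the situation where $A'$ at one time step has written all elements $A_O$ wrote by an earlier time step --- in particular that the prefix $r'_1\conc\cdots\conc r'_k$ is no shorter than $r_1\conc\cdots\conc r_k$. For distinct inputs this is immediate from element-wise containment; I expect this bookkeeping (and, if duplicates are allowed, the corresponding multiplicity argument) to be the main obstacle, while the rest is a routine combination of \Cref{lem:triangle} with optimality of $\OPT(I)$.
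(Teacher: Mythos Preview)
Your argument is correct and is exactly the natural proof one would expect here; the paper in fact states this as an observation without proof, and your chain---cover implies $I_A$ is a subsequence of $I_O$ via \Cref{obs:remaindersequence}, then $\OPT(I_A)\le\OPT(I_O)\le\OPT(I)-k$ via \Cref{lem:triangle}, then graft an optimal tail---is precisely the template the paper uses elsewhere (compare the proof of \Cref{thm:template}). Your caveat about the ``no shorter'' hypothesis in \Cref{obs:remaindersequence} is the right thing to flag, and your resolution is fine: the observation is only invoked in the paper in settings with distinct elements (the proof of \Cref{lem:pruning}, which sits in \Cref{sec:resource_augmentation} where uniqueness is assumed), so set-containment forces $|r'_1\conc\cdots\conc r'_k|\ge|r_1\conc\cdots\conc r_k|$ and the time-step ordering holds.
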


\begin{proofof}{\Cref{lem:pruning}}

Without loss of generality, assume $r_1$ and $r_2$ are initial maximal increasing and decreasing runs respectively and $|r_1| \geq |r_2|$. Suppose $r_2 \conc r_3$, where $r_3$ is an increasing, is prefix of an optimal output $S_{{\rm\sc OPT}}(I)$. 

Consider the case $|r_1|=|r_2|$. 
Let their buffers at the end of these two runs be $B_1,B_2$ and let $j$ be the smallest index such that $r_1(j+1) > r_2(j+1)$. Consider any new element $e \in I$ that is read in before $r_1(j+1)$ is written. Obviously, we have:
\begin{align*}
e \in B_1  \Rightarrow e \leq r_1(j), \\
e \in B_2 \Rightarrow e \geq r_2(j). 
\end{align*} 
Consider any new element $e$ which is read in after $r_1(j+1), r_2(j+1)$ were written. It is easy to see the followings:
\begin{align*}
 e \in B_1 \setminus B_2 \Rightarrow e \leq r_1(j+1), \\
 e \in B_2 \setminus B_1 \Rightarrow e \geq r_2(j+1) .
 \end{align*}Therefore, $\max (B_1 \setminus B_2) \leq \max(r_1(j),r_2(j+1)) \leq \min(r_2(j),r_1(j+1)) \leq \min (B_2 \setminus B_1)$. The situation can be visualized in \Cref{fig:comparable} as follows. If an incoming element cannot be written in the current run, it lies below or above (depending on whether the run is increasing or decreasing) the last element written. The regions are marked with their associated sets described above.

\begin{figure}[H]
\centering
\includegraphics[scale=0.5]{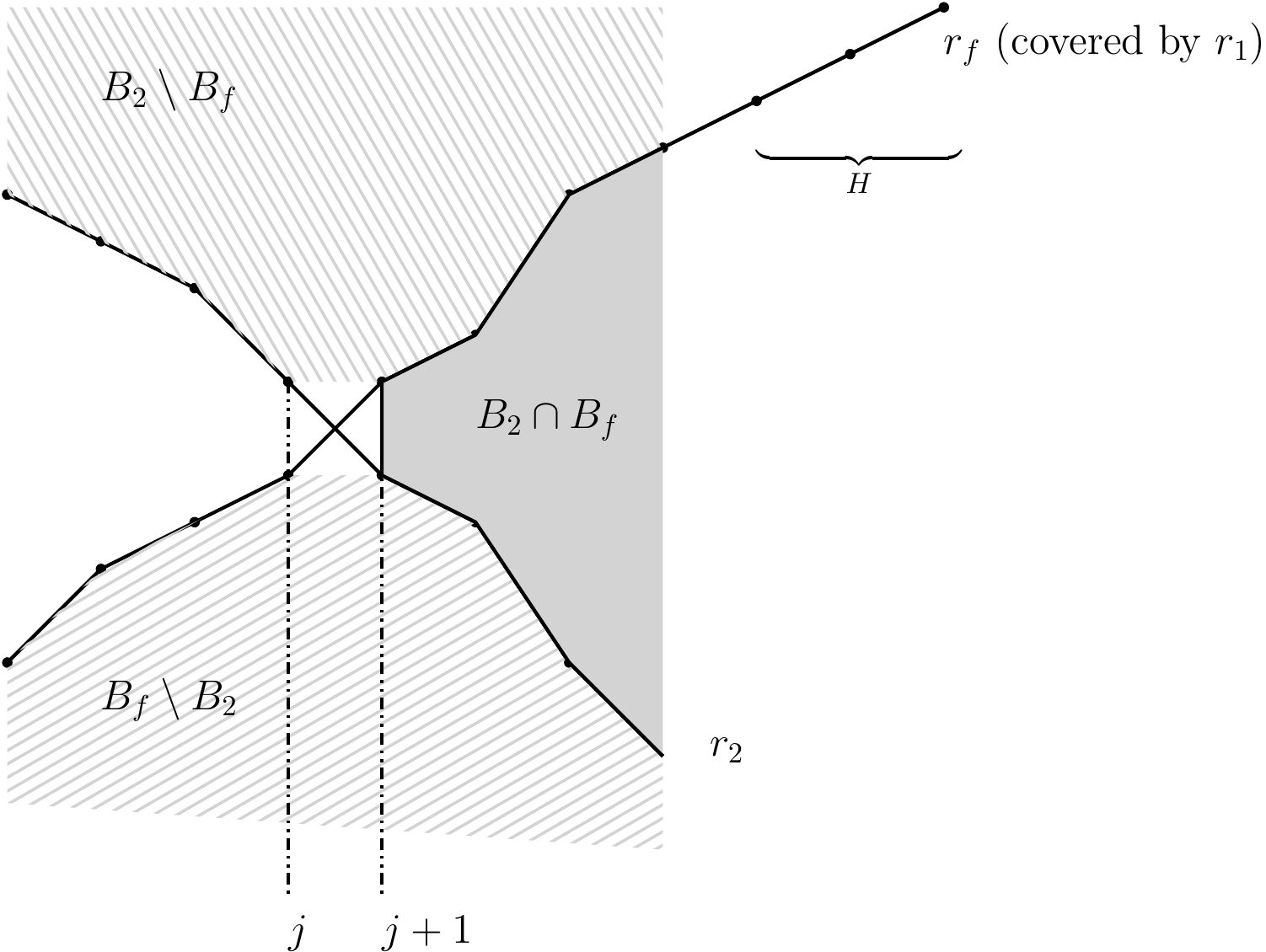}
\caption{Visualizing the buffer states.}
 \label{fig:comparable}
\end{figure}

Consider $r_1 \conc r_4$, where $r_4$ is a maximal increasing run. Every elements in $r_2$ will be written in either $r_1$ or $r_4$ by \Cref{lem:remaindersequence2}. If $e \in r_3$, then we consider the cases where $e \in B_2$ or $e \in r_3 \setminus B_2$. If $e \in B_2$, then $e$ is either in $r_1$ or in $B_1$ which means $e$ is either in $r_1$ or $r_4$. If $e \in r_3 \setminus B_2$, $e \in r_4$ by Lemma \ref{lem:comparable_buffer} using the fact that $\max (B_1 \setminus B_2) \leq \min (B_2 \setminus B_1)$. Thus, $r_1 \conc r_4$ covers $r_2 \conc r_3$. As a result, $r_1$ is also a prefix of an optimal output $S_{{\rm\sc OPT}}(I)'$ by Observation \ref{obs:optimalprefix}.

If $|r_1| > |r_2|$ and $|r_2| = k$. Then the simplest argument goes as follows. Instead of arguing based on $r_1$ directly, we consider $r_f$ that is increasing but may not  be maximal. Consider an algorithm $A_{f}$ that writes $r_f(1) = r_1(1),\ldots,r_f(k) = r_1(k)$. Then, without reading any new element in, it finishes its first run $r_f$ by writing out all elements in its buffer that are larger than $r_1(k)$ (the set of these elements is $H$ in \Cref{fig:comparable}). After this extra step, let the buffer be $B_{f}$. Use the exact same argument as above, we have that $\max (B_{f} \setminus B_2)  \leq \max(r_f(j),r_2(j+1)) \leq \min(r_2(j),r_f(j+1)) \leq \min (B_2 \setminus B_{f})$. Using the same argument as in the first case, we have that $r_f$ followed by a maximal increasing run will cover $r_2$ followed by a maximal increasing run. Hence, $r_f$ is a prefix of an optimal output. Since $r_1$ covers $r_f$, it is also a prefix of an optimal output by Observation \ref{obs:optimalprefix} . 
\end{proofof}

\begin{proofof}{\Cref{thm:visibility}}
     In any partition that is not the last one, let $I_r$ be the unwritten-element sequence and 
     let $r_1,r_2$ be the two possible maximal initial runs where $r_1$ is increasing and $r_2$ is decreasing. Without loss of generality, suppose $|r_1| \geq |r_2|$. 
     We use the simulation technique of \Cref{thm:optaug} to determine which run is longer.
         
     The algorithm writes $r_1 \conc r_3 \conc r_4$ where $r_3$ and $r_4$ are maximal runs that have the same and opposite directions as $r_1$ respectively. The algorithm stops when there is no element left to write. We break our analysis into cases based on what runs are in an optimal output.  In each case, we show that \Cref{thm:template} proves a competitive ratio of $\beta = 3/2$.
     
     If $r_2$ is a prefix of a proper optimal output $S_{{\rm\sc OPT}}(I_r)$,
     let $r_5$ be the maximal run after $r_2$ in $S_{{\rm\sc OPT}}(I_r)$.
     After writing $r_3$, the algorithm already writes out all elements in $r_2$ by \Cref{lem:remaindersequence2}. Let the unwritten-element sequence after writing $r_3$ be $I_3$ and let the unwritten-element sequence after writing $r_2$ be $I_2$. By \Cref{lem:remaindersequence2}, $I_3$ is a subsequence of $I_2$. According to \Cref{lem:pruning}, $r_5$ has to be decreasing in order to possibly have fewer runs than writing $r_1$ initially. Hence, applying  \Cref{lem:subsequence} to $r_4$ and $r_5$, we know that at the end of $r_4$, the algorithm has written all elements of $r_2$ and $r_5$. Thus, $r_1 \conc r_3 \conc r_4$ 
     covers $r_2 \circ r_5$.

If $r_1 \conc r_3$ is a prefix of $S_{{\rm\sc OPT}}(I_R)$, then we are done as $r_1 \conc r_3 \conc r_4$ trivially covers $r_1\conc r_3$.

If $r_1$ is a prefix of $S_{{\rm\sc OPT}}(I_R)$ but $r_1 \circ r_3$ is not a prefix of $S_{{\rm\sc OPT}}(I_R)$. Then, let $r_3'$ be the opposite maximal run to $r_3$, i.e., $r_1, r_3'$ are the first two runs of $S_{{\rm\sc OPT}}(I_R)$. We have $I_3$ is a subsequence of $I_1$. Hence, applying \Cref{lem:subsequence} to $r_3'$ on input $I_1$ and $r_4$ on input $I_3$, we have that at the end of $r_4$, the algorithm has written out all elements in $r_1 \conc r_3'$. Thus, $r_1 \conc r_3 \conc r_4$ 
covers $r_1 \circ r_3'$.

In the last partition, since $A$ outputs at most $3$ runs, it can only achieve a ratio worse than $3/2$ if the optimal algorithm wrote out a single run.  But then that run is longer, and $A$ would choose it.
Therefore, we have $R(S) \leq (3/2) \cdot \OPT$. 
\end{proofof}

\begin{proofof}{\Cref{thm:sevebyfour}}
In each partition, let the unwritten-element sequence be $I_r$ and the optimal proper output of $I_r$ be $S_{\OPT}(I_r)$. The algorithm randomly picks the direction of the next run and writes a maximal runs in that direction using $M$-size buffer. It uses the extra $M$ buffer slots to simulate the buffer state of the maximal run in the other direction to check if the run it chose is at least as long as the other run. If the algorithm picked the run that is at least as long as the other run, it then writes a maximal run in the same direction followed by another maximal run in the opposite direction. The algorithm stops when there is no more element to write. In the proof of \Cref{thm:visibility}, we showed that these three runs will cover the first two runs of $S_{\OPT}(I_r)$.

If the algorithm picked the shorter run, then it writes three more maximal runs with alternating directions. We know that the first two runs with alternating directions cover the first run of $S_{\OPT}(I_r)$ as argued in the proof of \Cref{thm:2competitive}; hence, the next two runs with alternating directions cover the second run of $S_{\OPT}(I_r)$. 

In the last partition, if $\OPT(I_r) = 2$, the analysis is the same. If $\OPT(I_r) = 1$, then the optimal output must be the longer maximal run. The algorithm, if picked the shorter run, then will cover the longer run when it writes the next maximal run in the opposite direction as showed in the proof of \Cref{thm:2competitive}. Therefore, we have
${\expect [x_i]}/{y_i} \leq {1}/{2} . ({4}/{2}) + {1}/{2} . ({3}/{2}) =  {7}/{4}.$


Applying \Cref{thm:template} with $\alpha = 1.75$ and $\beta = 2$, we have: $\expect [R(S)] \leq (7/4)\OPT$ and $R(S) = 2\OPT$.
\end{proofof}

\begin{proofof}{\Cref{thm:res-aug-lower}}
Suppose an algorithm has $(4M-3)$-size buffer. Consider the input $I_1 \conc e \conc I_2$ where $I_1 = (1 \nearrow M-1) \conc (2M-1 \searrow  M) \conc (3M \nearrow 4M-2) \conc (-M \searrow -2M+2)$. 

If $S$ first writes $-2M+2$, then let $e = -2M+1$. 
\begin{itemize}[noitemsep,nolistsep,leftmargin=*]
\item Case 1: if $S$ writes $e = -2M+1$ next, then let $I_2 = (0 \searrow -(M-1))$.  Thus, $S$ has to spend at least two runs while an optimal output is one run:
$(4M-2 \searrow 3M) \conc (2M-1 \searrow -2M+1)$. 
\item Case 2: if $S$ writes $-2M+3$ next, let $I_2 = (-2M \searrow -10M) \conc (2M \nearrow 3M-1)$. Then $S$ has to spend at least $3$ runs while an optimal output has $2$ runs: $(4M-2 \searrow 3M) \conc (2M-2 \searrow -10M) \conc (2M \nearrow 3M-1)$. 
\end{itemize}
Similarly, if $S$ first writes $4M-2$, then let $e = 4M-1$. 
\begin{itemize}[noitemsep,nolistsep,leftmargin=*]
\item Case 1: if $S$ writes $e=4M-1$ next, then let $I_2 = (2M \nearrow 3M-1)$. 
\item Case 2: if $S$ writes $4M-3$ next, let $I_2 = (4M \nearrow 10M) \conc (0 \searrow -M+1)$.
\end{itemize}
If $S$ first writes $e' \notin \{-2M+2,4M-2\}$, then let $e= -2M+1, I_2 = (0 \searrow -(M-1))$.  
Thus, $S$ has to spend at least two runs while an optimal output has the 
following output with one run:
$(4M-2 \searrow 3M) \conc (2M-1 \searrow -2M+1).$
\end{proofof}

\begin{proofof}{\Cref{thm:simplePTAS}}
We apply \Cref{thm:template} with $x = (\lceil 1/\epsilon \rceil + 1)$ and $y = \lceil 1/\epsilon \rceil$. In any partition except the last one, the algorithm chooses the combination of $\lceil 1/\epsilon \rceil$ maximal runs $r_1 \conc \cdots \conc r_{\lceil 1/\epsilon \rceil}$ whose output is longest (ties are broken arbitrarily) and writes out one extra run $r_{(\lceil 1/\epsilon \rceil + 1)}$. By \Cref{lem:remaindersequence2}, $r_1 \conc \cdots \conc r_{\lceil 1/\epsilon\rceil+1}$ covers the first $\lceil 1/\epsilon \rceil$ runs of 
a proper optimal output of the unwritten-element sequence $I_r$
in $(1 + \lceil 1/\epsilon \rceil)$ runs. In the last partition, the algorithm chooses a combination of runs with the smallest number of runs.

Therefore, we 
obtain an   
$\beta = 1 + {1}/{\lceil 1/\epsilon\rceil} \leq 1 + \epsilon $ approximation.

There are $2^{\lceil 1/\epsilon\rceil + 1}$ combinations to consider (each run can be up or down). 
The length of a run can be calculated in $O(N_i)$ time by simulating it directly.
where $N_i$ is the length of the longest output, namely, $|r_1\conc\cdots\conc r_{\lceil 1/\epsilon\rceil+1}|$, 
Since $N_i$ items are then written out, the total running time is $O(\sum_{i=1}^{t} N_i 2^{\lceil 1/\epsilon\rceil}) = O(N2^{1/\epsilon})$.   Searching for the shortest way to write out the remaining elements (once $I_r = \emptyset$) takes $O(N2^{1/\epsilon})$ time, which does not affect the running time.
\end{proofof}

\begin{proofof}{\Cref{thm:improvedPTAS}}
    In each partition, we restrict the search for the combination of $(1/\lceil \epsilon \rceil)$ consecutive runs that writes the longest sequence as described above. 
    By \Cref{lem:pruning}, if $d$ runs remain to be written out, we must examine one subcase with $d-1$ runs remaining, and one with $d-2$ runs remaining.  Thus, the number of combinations we need to consider is $F_d = F_{d-1} + F_{d-2}$.  
    Therefore, the running time of this step is  $O(F_{\lceil 1/\epsilon\rceil} N_i\log N_i)$. Thus, we have 
    \[O(F_{\lceil 1/\epsilon\rceil} N_i\log N_i) = O(\varphi^{\lceil 1/\epsilon\rceil} N_i\log N_i).\] This is because $F_{\lceil 1/\epsilon\rceil}  = (\varphi^{\lceil 1/\epsilon\rceil} - \psi^{\lceil 1/\epsilon\rceil})/\sqrt{5} \leq \varphi^{\lceil 1/\epsilon\rceil}.$
\end{proofof}

\begin{proofof}{\Cref{thm:wellsorted1.5}}
At each decision time step, $A$ flips a coin to pick a direction for the next run $r$.  It begins writing an up or down run according to the coin flip.  

Meanwhile, $A$ uses $M$ additional space to simulate $r'$, the run in the opposite direction.  In particular, it simulates the contents of the buffer at each time step, as well as the last element written.  Note that $A$ does not need to keep track of the most recent element read when simulating $r'$, as it is always the last element in the buffer.

By \Cref{lem:3m}, the run with the incorrect direction has length less than $3M$ and the run with the correct direction has length of $3M$ or more. Thus, $A$ can tell if it picked the correct direction. With probability $1/2$, $A$ writes the longer run.  
Therefore, it knows it made the correct direction and repeats, flipping another coin.
 
Now consider the case where $A$ picks the wrong direction.  When $r$ ends (at time $t$), $r'$ is continuing.  Then $A$ must act \emph{exactly} as if it had written $r'$.  Specifically, we cannot simply cover $r'$ and use an argument akin to \Cref{thm:template}, as then the unwritten-element sequence may not be 3-nearly-sorted.  

To simulate $r'$, $A$ has two tasks: (a) $A$ must write all elements that were written by $r'$ that were not written by $r$, and (b) $A$ must ``undo'' writing any element that was written during $r$ that is not in $r'$, in case these elements are required to make a subsequent run have length $3M$.  Divide the buffer into two halves: $B_r$ is the buffer after writing $r$, and $B_{r'}$ is the buffer being simulated when $r$ ends.

The first task is to ensure that $A$ writes all elements written by $r'$ that were not written by $r$ in the direction of $r'$.  These elements must be in $B_r$ since they were not written by $r$; and they must not be in $B_{r'}$ since they were written by $r'$.  Thus, $A$ can simply write out each element in $B_r$ that is not in $B_{r'}$ and continue writing $r'$ from that time step.

The second task is to ensure that all elements written during $r$ that were not written during $r'$ cannot affect future run lengths.  These elements must be in $B_{r'}$ but not in $B_r$.  We mark these elements as special ghost elements.  We can do this with $O(1)$ additional space by moving them to the front of the buffer and keeping track of how many of them there are.  During subsequent runs, these are considered to be a part of $A$'s buffer.  However, when $A$ would normally want to write one of these elements out, it instead simply deletes it from its buffer without writing any element.  That said, $A$ still counts these deletions towards the size of the run.
Note that our buffer never overflows, as $A$ continues to write (or delete) one element per time step.  

When this simulation is finished, the contents of $A$'s buffer are exactly what they would have been had it written out $r'$ in the first place---however some are ghost elements, and will be deleted instead of written.  Then $A$ repeats, flipping another coin.

For each run in the optimal output, either: $A$ writes that run exactly for cost 1 (with probability $1/2$), or $A$ writes another run, and makes up for its mistake by simulating the correct run exactly, for cost 2 (with probability $1/2$).  Thus $A$ has expected cost $(3/2)\OPT$.  In the worst case, $A$ guesses incorrectly each time for a total cost of $2\OPT$.

\end{proofof}

\begin{proofof}{\Cref{lem:5m}}

Suppose we are at a decision time step. Without loss of generality, assume this time step to be 0. Let the next two possible maximal runs be $r_1$ and $r_2$ that are up and down respectively. Without loss of generality, suppose $|r_1| \geq 5M$. Let $r_3$ be the maximal decreasing run that follows $r_2$. By \Cref{lem:pruning}, either writing $r_1$ or writing $r_2 \conc r_3$ is optimal on the unwritten-element sequence. Call the two outputs $S_1$ and $S_2$ respectively. 

Let $s_1^B,s_1^N,t_1^B,f_1,u_1,s_2^B,s_2^N,t_2^B,f_2,u_2$ be the same sets described in the proof of \Cref{lem:3m}. Let $r_1 = (x_1,\ldots,x_k),r_2= (y_1,\ldots,y_\ell),r_3 = (y_{\ell+1},\ldots,y_q)$. Let the buffers of $S_1$ and $S_2$ after time step $t+\ell$ be $B_{1,\ell},B_{2,\ell}$. Let $j \geq \ell$ be the smallest such that $x_{j+1} \geq y_{j+1}$. 

Similar to the proof of \Cref{lem:3m}, we let $r_{1,2} = \{x_{\ell+1},\ldots,x_{j} \}$ and $t_{1,2} = \{x_{j+1},\ldots,x_k \}$. Let $s_{1,2}^N$ be the set of elements in $r_{1,2}$ but not in $B_{1,\ell}$ . Let $s_{1,2}^B $ be the set of elements in $r_{1,2}$ and also in $B_{1,\ell} \setminus r_2$. Let $t_{1,2}^B$ be the set of elements in $t_{1,2}$ and also in $B_{1,\ell} \setminus r_2$. Let $u_{1,2}$ be the set of elements not in $r_1$ and read in before $x_{j+1}$ is written. Let $f_{1,2}$ be the set of elements in $r_1$ and read in after $x_{j+1}$ is written.

We define $s_3 = \{y_{\ell+1},\ldots,y_{j} \}$ and $t_3 =  \{ y_{j+1},\ldots,y_q \}$. Let $s_{3}^N $ be the set of elements in $s_3$ but not in $B_{2,\ell}$. Let $s_{3}^B $ be the set of elements in $s_3 \cap B_{2,\ell}$. Let $u_{3}$ be the set of elements that are not in $r_3$ and read in before $y_{j+1}$ is written. Let $f_{3}$ be the set of elements in $r_3$ and read in after $y_{j+1}$ is written.

Since the buffer of $S_2$ must keep all elements in $s_{1,2}^N , s_{1,2}^B$ at time step $j+1$, we have $|s_{1,2}^N |+ |s_{1,2}^B| \leq M$. We have, 
\begin{align*}
|r_1| & = |r_2| +  (|u_3|+|s_3^N|) + (|s_{1,2}^N| + |s_{1,2}^B|) + |f_{1,2}| \\
 & \leq (2M + |u_1| +|f_2|) + (M-|u_1| -|f_2| )+  M +|f_{1,2}| \\
 & = 4M + |f_{1,2}|.
\end{align*}
Since  $|f_{1,2}| \geq M$ because of our assumption $|r_1| \geq 5M$, $r_3$ has to end before $r_1$ using the same argument as in the proof of \Cref{lem:3m}. Since $|r_1| \geq |r_2 \conc r_3|$, $r_1$ followed by any maximal run will cover $r_2 \conc r_3$. Therefore, $r_1$ is an optimal prefix of the unwritten-element sequence. At every time step, the maximal run of length $5M$ or more is always a prefix of an optimal output on the unwritten-element sequence as required.
\end{proofof}

\end{document}